\documentclass{article} 
\usepackage{iclr2026_conference,times}
\iclrfinalcopy

\usepackage{amsmath,amsfonts,bm}

\def\eqref#1{equation~\ref{#1}}

\def\1{\bm{1}}

\DeclareMathAlphabet{\mathsfit}{\encodingdefault}{\sfdefault}{m}{sl}
\SetMathAlphabet{\mathsfit}{bold}{\encodingdefault}{\sfdefault}{bx}{n}

\usepackage{url}
\usepackage{tikz}
\usepackage{enumitem}
\usepackage{multirow}
\usepackage{makecell}
\usepackage{color, xspace}
\usepackage{pifont}
\usepackage{booktabs}
\usepackage{placeins}
\usepackage[table]{xcolor}
\usepackage{makecell}
\usepackage{array}
\usepackage[most]{tcolorbox}
\usepackage{etoc}
\usepackage{algorithm}
\usepackage{algpseudocode}
\usepackage{amsmath}
\usepackage{placeins}
\usepackage[dvipsnames]{xcolor}

\usepackage[colorinlistoftodos]{todonotes} 

\definecolor{mydarkred}{rgb}{0.6,0,0}
\definecolor{myblue}{HTML}{268BD2}
\usepackage[colorlinks, linkcolor=mydarkred, citecolor=myblue]{hyperref}

\newcolumntype{M}[1]{>{\centering\arraybackslash}m{#1}}

\newcommand{\yy}[1]{{\color{black} #1}}

\newcommand{\revise}[1]{{\color{black} #1}}

\title{Evoking User Memory: Personalizing LLM via Recollection-Familiarity Adaptive Retrieval}
\author{Yingyi Zhang\textsuperscript{\rm{1,2,}}\thanks{Equal contribution.} , 
    Junyi  Li\textsuperscript{\rm {2,}}\footnotemark[1] , 
    Wenlin Zhang\textsuperscript{\rm{2}}, 
    Penyue Jia\textsuperscript{\rm{2}}, 
    Xianneng Li\textsuperscript{\rm{1,}}\thanks{Corresponding author.} , 
    Yichao Wang\textsuperscript{\rm{3,}}\footnotemark[2] ,\\
    \textbf{Derong Xu}\textsuperscript{\rm {2,4}}, 
    \textbf{Yi Wen}\textsuperscript{\rm {2}}, 
    \textbf{Huifeng Guo}\textsuperscript{\rm 3}, 
    \textbf{Yong Liu}\textsuperscript{\rm 3}, 
    \textbf{Xiangyu Zhao}\textsuperscript{\rm {2,}}\footnotemark[2]\\
    \textsuperscript{\rm 1}Dalian University of Technology,
    \textsuperscript{\rm 2}City University of Hong Kong, \\
    \textsuperscript{\rm 3}Huawei Technologies Ltd.,
    \textsuperscript{\rm 4}University of Science and Technology of China\\
    \texttt{xianneng@dlut.edu.cn,wangyichao5@huawei.com,xianzhao@cityu.edu.hk}
}

\begin{document}

\maketitle

\begin{abstract}
Personalized large language models (LLMs) rely on memory retrieval to incorporate user-specific histories, preferences, and contexts. Existing approaches either overload the LLM by feeding all the user's past memory into the prompt, which is costly and unscalable, or simplify retrieval into a one-shot similarity search, which captures only surface matches. Cognitive science, however, shows that human memory operates through a dual process: \emph{Familiarity}, offering fast but coarse recognition, and \emph{Recollection}, enabling deliberate, chain-like reconstruction for deeply recovering episodic content. 
Current systems lack both the ability to perform recollection retrieval and mechanisms to adaptively switch between the dual retrieval paths, leading to either insufficient recall or the inclusion of noise.
To address this, we propose \textbf{RF-Mem} (\textbf{R}ecollection–\textbf{F}amiliarity \textbf{Mem}ory Retrieval), a familiarity uncertainty-guided dual-path memory retriever. 
RF-Mem measures the familiarity signal through the mean score and entropy. High familiarity leads to the direct top-$K$ \emph{Familiarity} retrieval path, while low familiarity activates the \emph{Recollection} path. In the \emph{Recollection} path, the system clusters candidate memories and applies $\alpha$-mix with the query to iteratively expand evidence in embedding space, simulating deliberate contextual reconstruction.
This design embeds human-like dual-process recognition into the retriever, avoiding full-context overhead and enabling scalable, adaptive personalization. Experiments across three benchmarks and corpus scales demonstrate that RF-Mem consistently outperforms both one-shot retrieval and full-context reasoning under fixed budget and latency constraints. Our code can be found in the~\nameref{sec:Rreproduct}.
\end{abstract}

\section{Introduction}

Large Language Models (LLMs) have demonstrated remarkable performance when augmented with retrieval mechanisms. Traditional retrieval-augmented generation (RAG) primarily targets open-domain corpora, seeking to retrieve and integrate \emph{objective facts}~\citep{lewis2020retrieval,li2025search,chen2024benchmarking, han2024retrieval}. In contrast, \emph{memory retrieval} focuses on user-specific histories, preferences, and contextualized interactions, aiming to surface evidence tailored to \emph{a particular user at a specific moment}~\citep{zhong2024memorybank,jiang2025know,tan2025personabench,wu2024longmemeval,xu2025towards}, as illustrated in the top of Figure~\ref{fig:intro}. The design of memory fundamentally shapes the boundary of personalized LLMs~\citep{zhang2025survey}: it can remain a static external index passively queried, or evolve into a dynamic process of recollection, thereby endowing the system with a more human-like flow of memory~\citep{wu2025human,hatalis2023memory}. 
Just as humans sometimes recognize a familiar face instantly (i.e., an intuitive ``feeling of knowing'' without deliberate reasoning) and at other times reconstruct past experiences through slow chains of recollection, so too should memory retrieval move beyond static lookup. A personalized LLM needs to be flexible and alternate between fast recognition and gradual reconstruction, adapting to the demands of the interaction.

\begin{figure}[htbp]
\centering
\includegraphics[width=0.95\linewidth]{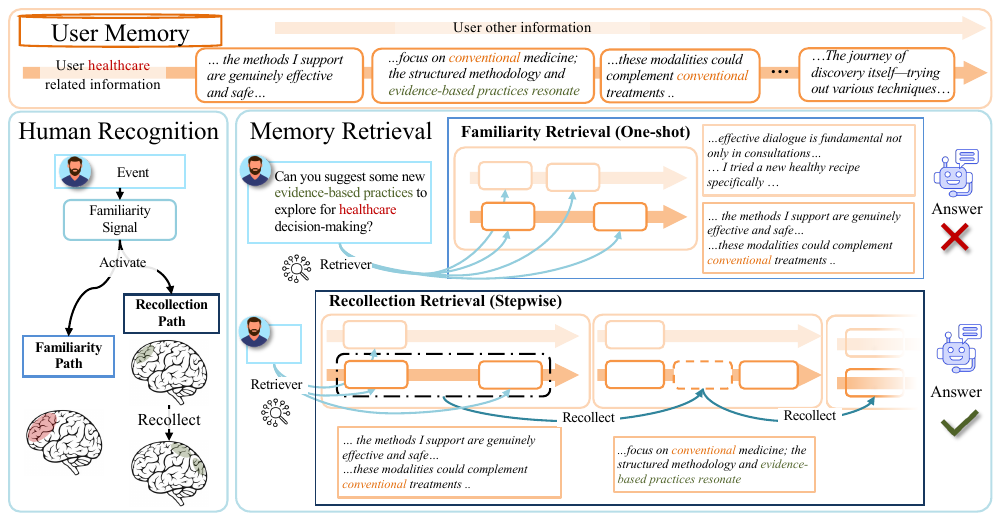}
\caption{Comparison between standard familiarity-based retrieval and recollection-based retrieval in user health narratives. And the brain figure motivated by~\citep{rugg2007event, yonelinas2024role}.}
\label{fig:intro}
\vspace{-10pt}
\end{figure}

Insights from cognitive science highlight the \emph{Recollection-Familiarity Dual-Process Theory}, as shown in Figure~\ref{fig:intro} left, which posits that human recognition and memory are driven by two complementary mechanisms~\citep{henson1999recollection, yonelinas2002effects, merkow2015human, bastin2019integrative, yonelinas2024role}. \emph{Familiarity} provides a rapid but coarse sense of ``knowing'', enabling efficient yet shallow judgments. \emph{Recollection}, in contrast, is triggered when familiarity proves insufficient, initiating a slower contextual reconstruction process that retrieves time, place, and source-specific details. Importantly, humans regulate these two processes via the familiarity signal: high confidence sustains reliance on familiarity, whereas decreasing familiarity and rising uncertainty prompt the shift to recollection~\citep{rugg2007event, yonelinas2024role}. Applied to user memory retrieval in personalized LLMs, this theory suggests that retrieval should not be conceived as a one-shot operation. Instead, it should function as a dual-process controller that adaptively alternates between fast recognition and slow recollection, guided by the system's sense of familiarity.

Current retrieval systems often reduce memory to a static set of vectors, relying on direct similarity search without mechanisms to evoke richer user recollections. Retrieval in memory-augmented LLMs can be understood along three dimensions: query reformulation~\citep{li2025hello, zhao2025memocue, chen2025llm, shen2024retrieval, salama2025meminsight}, index construction~\citep{zhong2024memorybank, pan2025secom, xu2025towards, tan2025prospect, xu2025mem, ong2024towards,chhikara2025mem0}, and retrieval strategy~\citep{xu2021beyond}. While existing work has advanced the first two, retrieval strategies remain dominated by embedding-based one-shot top-$K$ search~\citep{karpukhin2020dense, lei2023unsupervised, wang2020minilm, song2020mpnet, luo2024bge}, corresponding to the \emph{Familiarity} channel: fast yet shallow recall. 
\yy{
Two key limitations remain in current memory retrieval systems:  
\textbf{1) they overlook the \emph{Recollection} path}, failing to retrieve evidence chains for ambiguous queries, long-tail knowledge, or personalized reasoning;  
\textbf{2) they lack mechanisms for path switching between familiarity and recollection}, leading to either under-retrieval that misses deeper contextual cues or over-retrieval that introduces more retrieval latency.}
As shown in Figure~\ref{fig:intro}, the \emph{Familiarity} path may retrieve only partial fragments (e.g., ``I support are genuinely effective and safe''), missing broader context and even introducing irrelevant noise (e.g., ``new healthy recipe''). By contrast, the \emph{Recollection} path expands iteratively and can introduce more \yy{comprehensive evidence} (e.g., ``focus on conventional medicine; and evidence-based practices resonate''). 
\yy{This gap underscores the need for a recollection retrieval path and adaptive switch mechanism to balance efficiency with reliable coverage.}

To address these limitations, we propose \textbf{RF-Mem} (\textbf{R}ecollection–\textbf{F}amiliarity \textbf{Mem}ory Retrieval), an uncertainty-guided dual-path retrieval framework. RF-Mem begins with a probe retrieval that produces an initial retrieval list and estimates its familiarity by computing the mean score and entropy in the list. When familiar, the system stays on the \emph{Familiarity} path, returning the top-$K$ candidates in a \textit{one-shot} manner with minimal overhead. Otherwise, RF-Mem activates the \emph{Recollection} path: the probe results are clustered by KMeans, each cluster centroid is combined with the original query through an $\alpha$-mixing strategy, and the resulting recollect-queries are expanded iteratively. At each round, new candidates are retrieved, clustered, and mixed to form updated queries, allowing the system to \textit{stepwise} reconstruct evidence chains. The process is explicitly bounded by beam width, fanout, and maximum rounds, ensuring controllable computation. In this way, RF-Mem preserves the efficiency of single-pass retrieval when familiarity is high, while adaptively engaging structured recollection under unfamiliar conditions, embedding chain-like reasoning directly into the retriever.

Our contributions are fourfold:  
(1) We ground the design of personalized memory retrieval in the Recollection–Familiarity dual-process theory, formulating retrieval as a coordination of Familiarity and Recollection paths.  
(2) We introduce familiarity uncertainty-driven selection for adaptive switching between Familiarity and Recollection.  
(3) We develop a recollection retrieval based on clustering and query–centroid mixing, achieving chain-like evidence reconstruction only in embedding space.  
\yy{(4) RF-Mem is lightweight, relying solely on vector search and small-scale clustering, achieving high accuracy and recall at near one-shot retrieval latency.}
Extensive experiments on three personalized memory datasets show that RF-Mem consistently surpasses both one-shot top-$K$ retrieval and the full-context method with low latency. An adaptive study shows that RF-Mem complements and generalizes to index-building methods like MemoryBank~\citep{zhong2024memorybank}.
\section{Method: Recollection–Familiarity Memory Retrieval}

We propose \textbf{RF-Mem}, a dual-process memory retrieval framework that adapts the retrieval strategy according to uncertainty. As illustrated in Figure~\ref{fig:framework}, RF-Mem consists of five stages: \ding{192} user query input, \ding{193} user memory retrieval, \textit{i.e.} the RF-Mem module, is a familiarity uncertainty-guided selection introduced in Section~\ref{sec:switch} that adaptively switches between one-shot \emph{Familiarity} introduced in Section~\ref{sec:FR} and stepwise \emph{Recollection} retrieval introduced in Section~\ref{sec:FR}, \ding{194} extracting the memory text, and \ding{195} answer generation by LLM using the memory text.
\begin{figure}[h]
\centering
\includegraphics[width=0.95\linewidth]{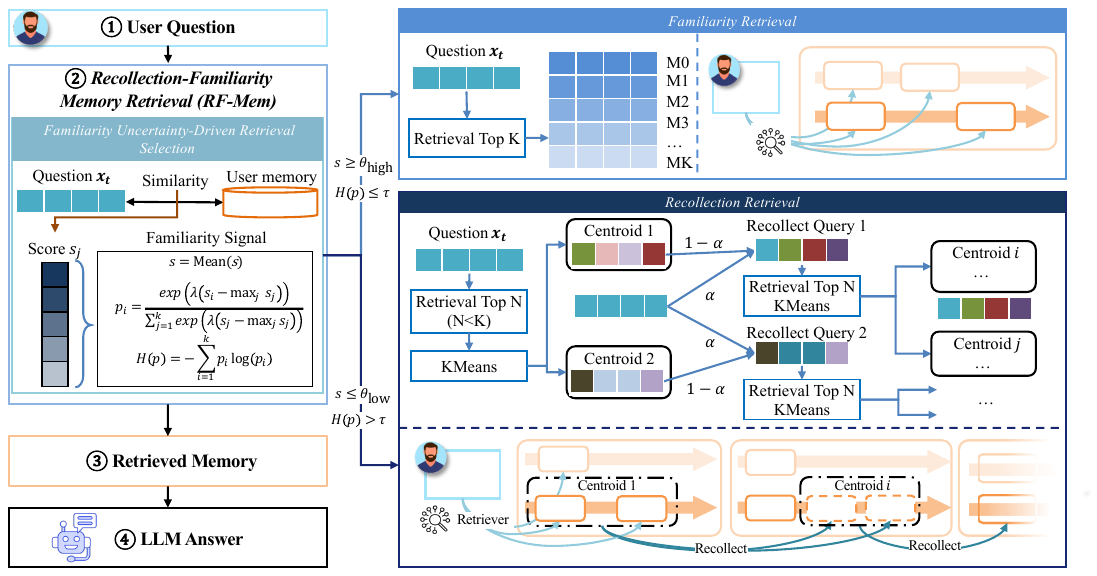}
\caption{The overall architecture of \textbf{RF-Mem}. A dual-process memory retrieval system dynamically switches between the Familiarity and the Recollection paths.}
\label{fig:framework}
\vspace{-10pt}
\end{figure}

\subsection{Familiarity Uncertainty-Driven Retrieval Selection}
\label{sec:switch}
Guided by dual-process theory, we argue that memory retrieval should not be reduced to a single static top-$K$ similarity search. Instead, retrieval should dynamically select between a fast Familiarity path and a deeper Recollection path depending on the familiarity signal.

We first perform a probe retrieval for calculating the familiarity signal between this question and the user's memory. Let $\mathcal{M}=\{m_1,\ldots,m_M\}$ denote the set of user memory fragments, where $m_i$ is encoded as an embedding vector $\mathbf{z}_i=\phi(m_i)$.  
Given a query $q$, its embedding $\mathbf{x}_t=\phi(q)$ is used to compute cosine similarity with each fragment $m_i$ as $s_i=\langle \mathbf{x}_t,\mathbf{z}_i\rangle$, and the retriever returns the candidate set  $\mathcal{C}=\text{Top-}K\big(\{(m_i,s_i)\}_{i=1}^M\big).$

The familiarity signal is governed jointly by similarity and uncertainty, thus integrating efficiency with robustness.  
Formally, given probe scores $\{s_i\}_{i=1}^k$, we normalize them as follows:
\begin{equation}
p_i = \frac{\exp(\lambda (s_i - \max_j s_j))}{\sum_{j=1}^k \exp(\lambda (s_j - \max_j s_j))}, \qquad i=1,\ldots,K , 
\end{equation}
where $\lambda$ controls sharpness. The uncertainty is then measured by the entropy:
\begin{equation}
H(p) = - \sum_{i=1}^K p_i \log p_i .
\end{equation}
The final selection combines the familiarity of the mean score with the uncertainty of the entropy. If the mean similarity score $\bar{s}=\frac{1}{K}\sum_{i=1}^K s_i$ exceeds the upper threshold $\theta_{\text{high}}$, the memory is highly relevant and retrieval proceeds along the \textbf{Familiarity} path. Conversely, when $\bar{s}$ falls below the lower threshold $\theta_{\text{low}}$, the memory is considered weakly relevant and the system switches to the \textbf{Recollection} path. For intermediate cases where $\bar{s}$ lies between $\theta_{\text{low}}$ and $\theta_{\text{high}}$, entropy $H(p)$ serves as the disambiguator: low entropy, indicating concentrated evidence, selects \textbf{Familiarity}, whereas high entropy, reflecting uncertainty, triggers \textbf{Recollection}. Formally, the policy is defined as:
\begin{equation}
\text{Strategy}(q) =
\begin{cases}
\text{Familiarity}, & \bar{s} \geq \theta_{\text{high}}, \\[6pt]
\text{Recollection}, & \bar{s} \leq \theta_{\text{low}}, \\[6pt]
\begin{cases}
\text{Familiarity}, & H(p) \leq \tau, \\
\text{Recollection}, & H(p) > \tau,
\end{cases} & \theta_{\text{low}} < \bar{s} < \theta_{\text{high}}.
\end{cases}
\end{equation}
This gating mechanism follows our cognitive motivation: 
when the familiarity signal is strong, the system relies on the fast Familiarity path for confident recognition; 
when the signal is weak, it engages the Recollection path to deliberately reconstruct evidence; 
and in the intermediate regime, the entropy serves as an additional cue to regulate switching. 
In this way, RF-Mem mirrors the dual-process nature of human memory, dynamically balancing efficiency with retrieval depth.

\subsection{Familiarity Retrieval}
\label{sec:FR}
When the probe retrieval yields a familiarity signal, that is, a sufficiently high average similarity score or low uncertainty $H(p) \leq \tau$, or the system interprets the evidence as confident. In this case, it adopts the \textbf{Familiarity} path. Consistent with dual-process theory, this path corresponds to rapid, low-effort recognition, i.e., the retriever directly selects the top-$K$ memory evidence based on raw similarity, ensuring efficiency without invoking further reasoning or expansion.






\textbf{Scoring and Retrieval.}
\label{sec:RR}
Given a query $q_t$, we encode it as $\mathbf{x}_t=\phi(q_t)$ and each memory entry $m_i$ as $\mathbf{z}_i=\phi(m_i)$, where all embeddings are unit-normalized. The similarity score is computed as $
s(\mathbf{x}_t,\mathbf{z}_i) \;=\; \langle \mathbf{x}_t,\mathbf{z}_i\rangle $.
We then obtain top-$K$ candidate memory fragments according to their similarity scores:
\begin{equation}
\mathcal{C}_t \;=\; \text{Top-}K\Big(\{(m_i,\,s(\mathbf{x}_t,\mathbf{z}_i))\}_{i=1}^{M}\Big).
\end{equation}
This retrieval mode reflects the \emph{Familiarity} process in dual-process theory: when the question is familiar, it signals that sufficient evidence has been retrieved, so recognition can be completed in a single step with minimal latency.

\subsection{Recollection Retrieval}
When the probe yields an unfamiliar signal, indicated by insufficient mean scores or a high entropy $H(p) > \tau$, the system interprets the evidence as uncertain. It then transitions into the \textbf{Recollection} path, corresponding to the deliberate and effortful retrieval mode in dual-process theory. Instead of stopping at surface matches, this path initiates multi-round evidence expansion, progressively reconstructing context and recovering more diagnostic evidence. A case can be found at~\ref{app:case}.

\textbf{Candidate Memory Retrieval.}
Let $\mathbf{x}^{(0)} = \mathbf{x}_t=\phi(q_t)$ denote the query embedding. At each round $r$, we obtain the candidate memory set as $\mathcal{C}^{(r)}=\text{Top-}N\Big(\{(m_i,\,\langle \mathbf{x}^{(r)},\mathbf{z}_i\rangle)\}_{i=1}^{M}\Big)$, where $N=(B+r) \times F<K$ is determined by beam width $B$,  fanout size $F$, and round number $r \in \{0,\dots,R\}$. 
\yy{We enlarge $N$ with $r$ to prevent the query in round $r$, which is formulated from previous rounds, from repeatedly retrieving the same memories. Duplicated memories in $\mathcal{C}^{(r)}$ are excluded if they appeared in earlier rounds.}


\textbf{Relevant Memory Clustering.}
Given the top-$N$ memory $\mathcal{C}^{(r)}$, we group the candidate memory embeddings $\{\mathbf{z}_i: m_i\in\mathcal{C}^{(r)}\}$ into $B$ clusters using KMeans. Each cluster $G_b^{(r)}$ corresponds to a branch in the retrieval tree, and its \textbf{centroid} vector $\mathbf{g}_b^{(r)}$ serves as the base for further expansion:
\begin{equation}
\mathbf{g}_b^{(r)} \;=\; \frac{1}{|G_b^{(r)}|}\sum_{m_i\in G_b^{(r)}} \mathbf{z}_i, 
\qquad b=1,\dots,B .
\end{equation}
These centroids serve as branching points from which new retrieval paths unfold, simulating a tree-like process of recollection. By clustering memories into semantically coherent sets, they reduce redundancy while highlighting anchors that capture essential cues. In line with dual-process theory, these anchors initiate progressive recollection, supporting chain-like reconstruction of evidence that expands outward yet remains grounded in the user's memory context.

\textbf{Recollect Queries Generation via $\alpha$-mix.}
Each centroid $\mathbf{g}_b^{(r)}$ is blended with the current query to form a \emph{recollect query}, with weights matching the annotations $\alpha$ (current query) and $1-\alpha$ (centroid), and uses residual to maintain original query information:
\begin{equation}
\mathbf{x}_b^{(r+1)} \;=\; \mathrm{norm}\!\big(\alpha\,\mathbf{x}^{(r)} + (1-\alpha)\,\mathbf{g}_b^{(r)} + \mathbf{x}_t \big), \qquad \alpha\in[0,1].
\end{equation}

\textbf{Retrieve-Cluster-Mix Loop.}
The recollect query $\mathbf{x}_b^{(r+1)}$ will be used to perform the next round of retrieval to obtain the candidate set:
\begin{equation}
\mathcal{C}^{(r+1)} \;=\; \text{Top-}N\Big(\{(m_j,\,\langle \mathbf{x}_b^{(r+1)},\mathbf{z}_j\rangle)\}_{j=1}^{M}\Big).
\end{equation}
Afterwards, the memory cluster and recollect query mix are conducted. This \emph{retrieve-cluster-mix} routine is repeated across rounds, progressively expanding evidence chains.  
To keep the search tractable, we maintain at most $B$ active branches per round and caps the recursion depth at $R$. 


\textbf{Stop and Generation.}
The process stops when a round limit $R$ is reached or a target number of items is gathered. The recollection evidence is a truncated union $\mathcal{C}_t \;=\; \text{Top-}K \!\left(\bigcup_{r=0}^{R}\mathcal{C}^{(r)}\right)$.

In analogy to human memory, this recollection triggers deliberate, cue-driven reconstruction, where related fragments are progressively retrieved, clustered, and mixed to surface latent context. Through structured retrieve-cluster-mix iterations under beam and depth budgets, which trade additional latency for more diagnostic evidence, enabling the evocation of question-specific memories. The pseudocode is provided in Appendix~\ref{app:pse_code}. And theoretical analysis can be found at Appendix~\ref{app:theory}.

\section{Experiments}
\subsection{Experimental Setup}
\textbf{Datasets}  
We use \textbf{PersonaMem} \citep{jiang2025know}, which includes multiple simulated user-LLM interaction histories over 7 real-world tasks, with memory lengths of 32K, 128K, and 1M tokens. 
Each history comprises up to 60 multi-turn sessions with evolving user personas and preferences. 
We also evaluate on \textbf{PersonaBench} \citep{tan2025personabench}, a synthetic benchmark composed of private user documents and queries probing personal information (e.g., preferences, background), designed to assess the relevant personal memory retrieval ability before generation. 
And we include \textbf{LongMemEval} \citep{wu2024longmemeval}, a benchmark targeting long-term personalized retrieval, where factual questions require retrieving task-relevant information under both small and medium context settings. More details can be found at Appendix~\ref{app:dataset}. And \textbf{implementation details} can be found in Appendix~\ref{app:imple}

\textbf{Metrics}  
On PersonaMem, performance is measured by \emph{Accuracy} of generated responses, {i.e.}, the proportion of responses that correctly align with the user’s current persona and conversation context. On PersonaBench and LongMemEval, since the focus is on the retrieval of personal memory pieces, we use \emph{Recall@K} to assess how well relevant personal memories are retrieved.

\textbf{Baselines}  
Unlike prior baselines that often rely on LLM-generated queries or external indexing strategies, our comparisons are restricted to retrieval-only methods to ensure fairness: all systems operate on the same memory vectors. Since our focus is on the retrieval component itself, we compare RF-Mem against four direct baselines: 
\textbf{(1) Zero Memory}: Following~\citep{pan2025secom,jiang2025know}, the model answers without using user memory.
\textbf{(2) Full Context}: Following~\citep{pan2025secom,jiang2025know}, the entire user history memory is input to the model without retrieval. 
\textbf{(3) Dense Retrieval}: Following~\citep{wu2024longmemeval,pan2025secom,zhong2024memorybank}, a standard retriever that returns the top-$K$ memories based on similarity scores. This corresponds to the \emph{Familiarity} retrieval in dual-process theory. 
\textbf{(4) Recollection} (ours): The recollection mode we proposed in this paper. This represents the system that only enters the \emph{Recollection} path.



\subsection{Overall performance in Personalized Generation}

\vspace{-15pt}
\begin{table}[ht]
\centering
\small
\caption{Performance comparison \yy{over the PersonaMem} across different memory corpus sizes. Columns are grouped by question type, rows by retrieval strategy. ``NA'' indicates the method does not need retrieval. ``OOC'' means out-of-context of the LLM input window. The best results are in \textbf{bold}, and the second-best results are \underline{underlined}. ``\textbf{{\large *}}'' indicates the statistically significant improvements (i.e., two-sided t-test with $p<0.05$) over the best baseline. 
}
\label{tab:personamem-alt}
\setlength{\tabcolsep}{1pt}
\renewcommand{\arraystretch}{0.9}
    \begin{tabular}{p{2.0cm}|M{1.0cm}|M{1.15cm}|M{1.2cm}M{1.4cm}M{0.9cm}M{1.2cm}M{1.3cm}M{1.0cm}M{1.0cm}|>{\columncolor{blue!5}}c}
    \toprule
     \textbf{Method} & \textbf{Retri Time}& \textbf{Avg. Tokens}&\textbf{Revisit Reasons} & \textbf{Track Evolution} & \textbf{Latest Prefs} & \textbf{Aligned Recs} & \textbf{New Scenarios}& \textbf{Shared Facts} & \textbf{New Ideas} & \textbf{Overall}  \\
    \midrule
    \rowcolor{gray!15}
    \multicolumn{11}{c}{\textbf{32K memory corpus data}} \\
    Zero Memory & NA &464.6 & 0.7273 &0.6259& 0.1765& 0.2182& 0.2105& 0.2326& 0.1183& 0.3854 \\
    Full Context & NA &24657.8 & \underline{0.9394}&\textbf{0.7194}&\textbf{0.7647} &\underline{0.7455} &0.5614 &0.5039 &0.1828 &0.6129 \\
    Dense Retrieval & 3.14ms&3515.9 & 0.9091 & 0.6475 & 0.6471 & 0.6364 & 0.5614 & \underline{0.5426} & \underline{0.2151} & 0.5908 \\
    Recol. (ours) & 7.09ms &3711.1& \textbf{0.9495} & 0.6547 &0.7059 & \textbf{0.7818} & \underline{0.5965} & 0.5194 & \textbf{0.2688} & \underline{0.6214}\\
    RF-Mem  (ours)& 5.09ms& 3566.6& \textbf{0.9495} & \underline{0.6619} & \underline{0.7059} & \textbf{0.7818} & \textbf{0.6140} & \textbf{0.5659} & \textbf{0.2688} & \textbf{0.6350$^*$} \\
    \midrule
    \rowcolor{gray!15}
    \multicolumn{11}{c}{\textbf{128K memory corpus data}} \\
    Zero Memory & NA &416.3 & 0.6766&0.6422&0.2136&0.2751&0.1925&0.2281&0.1737&0.3124 \\
    Full Context & NA & 115601.4& 0.5613 &0.3930 &0.2783 &0.3868 &0.2770 &0.3977 &0.1795 &0.3231 \\
    Dense Retrieval & 3.24ms & 3540.1 & 0.7881 & \underline{0.6804} & \underline{0.5346} & \underline{0.5330} & 0.3662 & 0.6082 & 0.3069 & 0.5259 \\
    Recol. (ours) & 7.86ms &3680.3& \textbf{0.8141} & 0.6716 & 0.5254 & 0.5301 & \underline{0.3765} & \underline{0.6140} & \textbf{0.3263} & \underline{0.5288} \\
    RF-Mem (ours)&4.27ms &3565.5& \underline{0.8030} & \textbf{0.6862} & \textbf{0.5427} &\textbf{0.5358} & \textbf{0.4131} & \textbf{0.6257} &\textbf{ 0.3263} & \textbf{0.5394$^*$}\\
    \midrule
    \rowcolor{gray!15}
    \multicolumn{11}{c}{\textbf{1M memory corpus data}} \\
    Zero Memory & NA & 415.1& 0.6000&0.6178&0.1797&0.3179&0.1831&0.2569&0.1816&0.2730 \\
    Full Context & NA &912148.5 &OOC &OOC &OOC &OOC &OOC &OOC &OOC &OOC \\
    Dense Retrieval& 4.42ms & 3816.1& \underline{0.7702} & \textbf{0.6933} & \textbf{0.4544} & 0.4464 & 0.3085 & \underline{0.5903} & 0.3040 & 0.4518 \\
    Recol. (ours) & 8.12ms&3847.4 & {0.7532} & 0.6800 & 0.4440 & \underline{0.4500} & \textbf{0.3593} & 0.5833 & \underline{0.3136} & \underline{0.4544} \\
    RF-Mem (ours)& 6.28ms & 3827.8& \textbf{0.7787} & \underline{0.6889} & \underline{0.4492} & \textbf{0.4536} & \underline{0.3390} & \textbf{0.6111} & \textbf{0.3150} & \textbf{0.4589$^*$} \\
    \bottomrule
    \end{tabular}
\end{table}

To verify the effectiveness of RF-Mem, we evaluate it on \textbf{PersonaMem} across memory corpora of 32K, 128K, and 1M tokens per query. This setup stresses retrieval under different memory scales and question types, allowing us to examine how retrieval methods adapt as corpora grow larger and tasks become more complex. Table~\ref{tab:personamem-alt} compares zero-memory baselines, full-context input, and three retrieval strategies (Dense, Recollection, RF-Mem). 
We also report per-category accuracy under three corpus sizes in Appendix~\ref{app:cate_res} and different $K$ settings in Appendix~\ref{app:pm_k}.

\textbf{First, RF-Mem delivers the best \emph{overall} accuracy at every corpus scale while keeping inputs compact.} \yy{In Table~\ref{tab:personamem-alt}}, RF-Mem attains the top overall score at 32K (0.6350), 128K (0.5394), and 1M (0.4589). At 32K it surpasses Full Context by +0.0221 with only 3.6k average tokens versus 24.7k for Full Context, and with a modest 5.09ms retrieval time. As the memory grows, Full Context deteriorates sharply, reaching 0.3231 at 128K and becoming out of context at 1M, whereas RF-Mem remains stable and leads Dense Retrieval (Familarity) by +0.0135 at 128K and +0.0071 at 1M. These trends validate that when a question is familiar, RF-Mem saves budget; when unfamiliar, it upgrades to structured recollection without committing to the cost of running it unconditionally.

\textbf{Second, RF-Mem leads on hybrid and transfer-style tasks with lower overhead.} At 32K it achieves top scores on \emph{Aligned Recommendations} (0.7818), \emph{New Scenarios} (0.6140), and \emph{Shared Facts} (0.5659) , while remaining close on \emph{Track Evolution}. 
At 128K it remains ahead on \emph{Track Evolution}, \emph{Latest Prefs}, \emph{Aligned Recs}, \emph{New Scenarios}, and \emph{Shared Facts}, tying on \emph{New Ideas}. 
At 1M, where Full Context is infeasible, RF-Mem is strongest on \emph{Revisit Reasons}, \emph{Aligned Recs}, \emph{Shared Facts}, and \emph{New Ideas}, with only small gaps on \emph{Track Evolution} and \emph{Latest Prefs}, indicating that adaptive depth better balances precise anchoring and selective expansion than single-mode retrieval.

\textbf{Third, RF-Mem achieves a favorable accuracy–efficiency trade-off by regulating retrieval depth via entropy.} Compared to always-on recollection, RF-Mem improves overall accuracy while reducing latency: 5.09ms vs 7.09ms at 32K, 4.27ms vs 7.86ms at 128K, and 6.28ms vs 7.12ms at 1M, with similar token budgets to Dense Retrieval. This efficiency arises from treating familiarity as the default and route to the recollection path only when the question is unfamiliar. The result is a scalable retrieval controller that avoids the out-of-context cliff of Full Context, outperforms dense retrieval baselines as memory scales, and preserves recollection’s advantages precisely.

\subsection{Overall performance in Personalized Retrieval}

\begin{table}[ht]
\vspace{-10pt}
\centering
\small
\caption{Performance comparison over the PersonaBench dataset across multiple question types. The best results are in \textbf{bold}, and the second-best results are \underline{underlined}.}
\label{tab:personabench}
\setlength{\tabcolsep}{1.3pt}
\renewcommand{\arraystretch}{1.0}
    \resizebox{140mm}{!}{
    \begin{tabular}{p{1.3cm}|M{1.05cm}M{0.9cm}M{0.9cm}M{0.9cm}M{0.9cm}>{\columncolor{blue!5}}M{1.0cm}|M{1.05cm}M{0.9cm}M{0.9cm}M{0.9cm}M{0.9cm}>{\columncolor{blue!5}}M{1.0cm}}
        \toprule
        \textbf{Metrics}  &
        \multicolumn{6}{c|}{\textbf{Recall@5}} &
        \multicolumn{6}{c}{\textbf{Recall@10}} \\
        \cmidrule(lr){2-7} \cmidrule(l){8-13}
        \textbf{Method} 
        & \textbf{Time} & \textbf{Basic Info} & \textbf{Social Info} & \textbf{Pref Easy} & \textbf{Pref Hard} & \textbf{Overall}   
        & \textbf{Time}& \textbf{Basic Info} & \textbf{Social Info} & \textbf{Pref Easy} & \textbf{Pref Hard} & \textbf{Overall}   \\
        \midrule
        \rowcolor{gray!15}
        \multicolumn{13}{c}{\textit{multi-qa-MiniLM-L6-cos-v1}} \\
        Famili. & 8.40ms & \underline{0.4515} & 0.4852 & \underline{0.4904}& \underline{0.3659} & 0.4484  & 13.68ms & \underline{0.5879} & 0.6220 & \textbf{0.6442} & \underline{0.5561} & 0.5964  \\
        Recol.  & 9.65ms & 0.4379 & \underline{0.4903} & \textbf{0.5128} & \textbf{0.3854} & \underline{0.4491} & 17.29ms  & \textbf{0.5924 }& \textbf{0.6859} & 0.5659 & \textbf{0.6267} & \underline{0.6062}  \\
        RF-Mem  & 9.16ms & \textbf{0.4788}& \textbf{0.5091} & 0.4872 & \textbf{0.3854} & \textbf{0.4701} & 15.22ms & \textbf{0.5924} & \underline{0.6799} & \underline{0.5707} & \textbf{0.6267} & \textbf{0.6071}  \\
        \midrule
        \rowcolor{gray!15}
        \multicolumn{13}{c}{\textit{all-mpnet-base-v2}} \\
        Famili.& 7.64ms & 0.4242 & \underline{0.2730} & \underline{0.4487} & \textbf{0.4049} & 0.3887  & 10.55ms & 0.5409 & \textbf{0.4434} & \textbf{0.6795} & \underline{0.5366} & 0.5333  \\
        Recol. & 10.94ms & \underline{0.4333} & \textbf{0.2918} & \textbf{0.4583} & \underline{0.4000} & \underline{0.3976} & 13.23ms & \textbf{0.6000} & 0.4365 & \underline{0.6378} & 0.5220 & \underline{0.5527} \\
        RF-Mem & 8.33ms & \textbf{0.4515} & \underline{0.2730 }&\underline{ 0.4487} & \underline{0.4000} & \textbf{0.4009}  & 10.55ms & \underline{0.5955} & \underline{0.4384} & \underline{0.6378} & \textbf{0.5463} & \textbf{0.5553} \\
        \midrule
        \rowcolor{gray!15}
        \multicolumn{13}{c}{\textit{BAAI/bge-base-en-v1.5}} \\
        Famili. & 8.92ms & \underline{0.3970} & \underline{0.3204} & \textbf{0.4583} &\textbf{0.3268} & \underline{0.3738 } & 10.19ms & 0.5121 &\textbf{0.4748}& \textbf{0.5673} & \textbf{0.4585} & 0.5002  \\
        Recol.  & 12.14ms & 0.3833 & \textbf{0.3619} & 0.4327 & 0.3171 & 0.3722 & 20.71ms & \underline{0.5212} & \textbf{0.4748} & \textbf{0.5673} & \textbf{0.4585} & \underline{0.5046}  \\
        RF-Mem  & 10.14ms  & \textbf{0.4015} & \textbf{0.3619} & \underline{0.4487} & \underline{0.3220} & \textbf{0.3836} & 18.13ms & \textbf{0.5303} & \textbf{0.4748} &\textbf{0.5673} & \textbf{0.4585} & \textbf{0.5089}  \\
        \bottomrule
        \end{tabular}
    }
\vspace{-10pt}
\end{table}

\begin{table}[ht]
\centering
\small
\caption{Performance comparison over the LongMemEval under small (S) and medium (M) memory versions. The best results are in \textbf{bold}, and the second-best results are \underline{underlined}.}
\label{tab:longmemeval}
\setlength{\tabcolsep}{2.5pt}
\renewcommand{\arraystretch}{0.9}
    \begin{tabular}{l|cccc|cccc}
    \toprule
    \multirow{2}{*}{\textbf{Method}} & \multicolumn{4}{c|}{\textbf{LongMemEval-S}} & \multicolumn{4}{c}{\textbf{LongMemEval-M}} \\
    \cmidrule(lr){2-5} \cmidrule(lr){6-9}
    & \textbf{Recall@5} & \textbf{Recall@10} & \textbf{Recall@50} & \textbf{Time } 
    & \textbf{Recall@5} & \textbf{Recall@10} & \textbf{Recall@50} & \textbf{Time} \\
    \midrule
    \rowcolor{gray!15}\multicolumn{9}{c}{\textit{multi-qa-MiniLM-L6-cos-v1}} \\
    Fami.   & 0.7136 & 0.8282 & \underline{0.9761} & 24.91ms & 0.4177 & 0.5465 & 0.7518 & 27.72ms \\
    Recol.  & \underline{0.7351} & \underline{0.8425} & \textbf{1.0000} & 50.62ms & \underline{0.4368} & \underline{0.5585} & \underline{0.7590}& 57.93ms \\
    RF-Mem  & \textbf{0.7375} & \textbf{0.8473} & \textbf{1.0000} & 39.58ms & \textbf{0.4391} & \textbf{0.5609} & \textbf{0.7613} & 41.22ms \\
    \midrule
    \rowcolor{gray!15}\multicolumn{9}{c}{\textit{all-mpnet-base-v2}} \\
    Fami.   & \underline{0.7303} & \underline{0.8353} & \underline{0.9832} & 27.25ms & 0.4176 & 0.5489 & \underline{0.7637} & 33.18ms \\
    Recol.  & \textbf{0.7398} & 0.8305 &\textbf{0.9952} & 51.79ms & \underline{0.4386} & \underline{0.5871} & 0.7422 & 62.11ms\\
    RF-Mem  & \textbf{0.7398} & \textbf{0.8377} & \textbf{0.9952} & 42.39ms & \textbf{0.4391} & \textbf{0.5894} & \textbf{0.7684 }& 50.80ms \\
    \midrule
    \rowcolor{gray!15}\multicolumn{9}{c}{\textit{BAAI/bge-base-en-v1.5}} \\
    Fami.   & 0.7924 & 0.8926 & \textbf{1.0000} & 29.65ms & 0.4964 & \underline{0.6611} & \underline{0.8305} & 30.77ms \\
    Recol.  & \underline{0.8162} & \underline{0.9165} & \textbf{1.0000} & 43.65ms & \underline{0.5131} & \textbf{0.6635} & 0.8234 & 58.05ms \\
    RF-Mem  & \textbf{0.8186} & \textbf{0.9189} & \textbf{1.0000} & 37.34ms & \textbf{0.5155} & \textbf{0.6635} & \textbf{0.8329} & 44.74ms \\
    \bottomrule
    \end{tabular}
\end{table}

To verify the retrieval performance of RF-Mem, we evaluate it against one-shot \emph{Familiarity} and stepwise \emph{Recollection} across both \textbf{PersonaBench} and \textbf{LongMemEval}. PersonaBench covers multi-domain user interactions, while LongMemEval stresses retrieval over extended memory corpora under small (S) and medium (M) settings. We report Recall@5, Recall@10, and Recall@50, together with average retrieval latency, under three retriever backbones (MiniLM, MPNet, BGE). This setup allows us to examine how retrieval strategies behave across tasks with varying difficulty and under different embedding models. We also conduct parameter sensitivity studies at~\ref{app:alpha_tau} to~\ref{app:alpha_k}.

\textbf{First, RF-Mem achieves the most balanced and robust performance across retrievers.} 
As shown in Table~\ref{tab:personabench}, RF-Mem either matches or surpasses the best baseline in overall Recall@5 and Recall@10, while avoiding the pitfalls of single-mode strategies. For example, under MiniLM it achieves an overall Recall@10 of 0.6071, slightly higher than Familiarity (0.5964) and Recollection (0.6062). On LongMemEval, in Table~\ref{tab:longmemeval}, RF-Mem further demonstrates stability. Its familiarity uncertainty-driven selection allows the retriever to exploit confident familiarity matches when possible, and to activate deeper recollection only when necessary, yielding consistently strong results.

\textbf{Second, the comparison between Familiarity and Recollection reveals complementary strengths.}  
Familiarity excels on fact-centric queries such as Basic Information and Preference Easy, where direct surface similarity suffices. \emph{Recollection}, however, proves highly effective on context-heavy tasks such as Preference Hard and Social queries. On \textbf{PersonaBench}, in Table~\ref{tab:personabench}, Recollection reaches a Recall@10 of 0.6267 on Preference Hard under MiniLM, outperforming Familiarity at 0.5561. Similarly, on \textbf{LongMemEval}, it consistently lifts Recall@5 by more than 0.02 across multiple retrievers (e.g., 0.7351 vs. 0.7136 under MiniLM) in Table~\ref{tab:longmemeval}. Its iterative expansion uncovers deeper, temporally dispersed cues, making it a powerful strategy despite higher cost. These results highlight that Recollection is not merely slower, but offers indispensable diagnostic evidence, and that neither mode alone can achieve robustness across all task types.

\textbf{Third, RF-Mem delivers superior efficiency–effectiveness trade-offs.}  
On \textbf{PersonaBench} (Table~\ref{tab:personabench}), Familiarity is fastest (8–10ms) but shallow, while Recollection is stronger but nearly twice as slow (15–20ms). RF-Mem closes this gap, sustaining latency near Familiarity (9–15ms) with higher accuracy (e.g., Recall@10 of 0.6071 vs. 0.5964/0.6062).  
On \textbf{LongMemEval} (Table~\ref{tab:longmemeval}), Familiarity is low-latency (25–31ms) but loses coverage, while Recollection reaches perfect Recall@50 at much higher cost (40–62ms). RF-Mem balances both, keeping latency lower (37–50ms) while matching or exceeding accuracy.

\subsection{Adaptive Experiment}

\subsubsection{\revise{Adaptive to Index Building Method}}
\begin{figure}[h]
    \centering
    \includegraphics[width=0.90\linewidth]{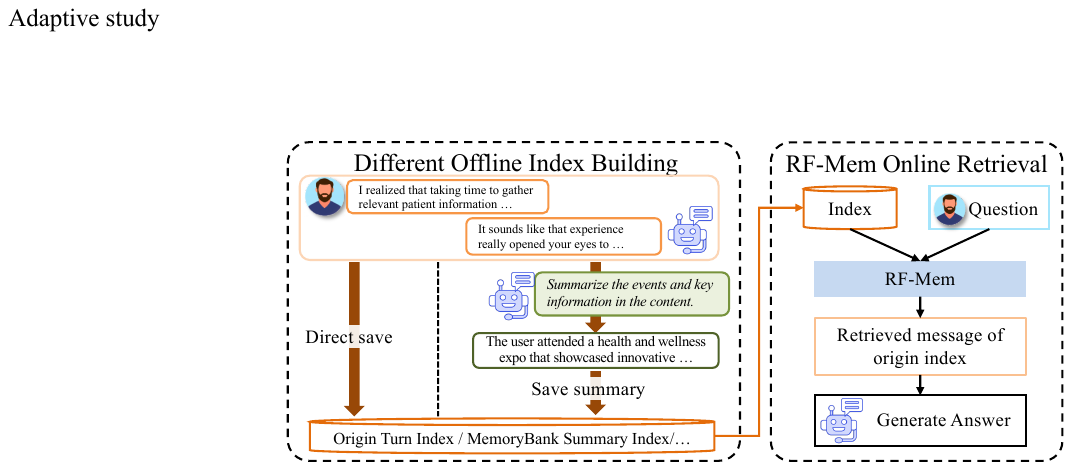}
    \caption{Illustration of adaptive study setup. Offline indexes (e.g., MemoryBank or origin memory) provide different storage, while RF-Mem serves as an online retrieval layer that adapts to them.}
    \label{fig:adaptive_study}
\vspace{-10pt}
\end{figure}

\begin{table}[h]
\centering
\small
\caption{Results by using MemoryBank summary index on PersonaMem (32K corpus).}
\label{tab:adaptivity}
\setlength{\tabcolsep}{2pt}
\renewcommand{\arraystretch}{0.9}
\begin{tabular}{p{1.8cm}|M{1.2cm}|M{1.3cm}M{1.4cm}M{0.9cm}M{1.1cm}M{1.3cm}M{1.1cm}M{1.0cm}|>{\columncolor{blue!5}}c}
    \toprule
     \textbf{Method} & \textbf{Avg. Tokens}&\textbf{Revisit Reasons} & \textbf{Track Evolution} & \textbf{Latest Prefs} & \textbf{Aligned Recs} & \textbf{New Scenarios}& \textbf{Shared Facts} & \textbf{New Ideas} & \textbf{Overall}  \\
\midrule
\rowcolor{gray!15}
\multicolumn{10}{c}{\textbf{MemoryBank summary index}} \\
Familiarity   & 1267.6 & 0.7475 & 0.6187 & 0.5882 & 0.5818 & 0.3333 & 0.4341 & 0.1505 & 0.4941 \\
Recollection  & 1441.8 & 0.8182 & 0.6259 & 0.5294 & 0.6909 & 0.4737 & 0.4031 & 0.1398 & 0.5212 \\
RF-Mem        & 1421.8 & 0.8384 & 0.6259 & 0.5294 & 0.6545 & 0.4211 & 0.4419 & 0.1828 & \textbf{0.5314} \\
\bottomrule
\end{tabular}
\vspace{-10pt}
\end{table}

To further examine the modularity of RF-Mem, we integrate it with external indexing schemes beyond raw turn-level memory. As shown in Figure~\ref{fig:adaptive_study}, offline methods such as \emph{MemoryBank}~\citep{zhong2024memorybank} first summarize user dialog into indexes (e.g., turn-level summaries), while RF-Mem operates as an online module during retrieval. This separation of offline indexing and online retrieval highlights that RF-Mem does not compete with summarization- or graph-based memory banks, but instead complements them by enabling human-like remembering. Table~\ref{tab:adaptivity} highlights that RF-Mem demonstrates robustness across both settings: it achieves the highest overall accuracy under the turn-level index, and crucially, it narrows the performance drop under the summary index compared to single-path baselines. 
This adaptivity confirms that RF-Mem is modular and can be layered on top of heterogeneous memory indices, providing an uncertainty-aware dual-process retrieval mechanism that complements, rather than replaces, external indexing methods.

\subsubsection{\revise{Adaptive to Query Expansion Method}}

\begin{figure}[h]
    \centering
    \includegraphics[width=0.9\linewidth]{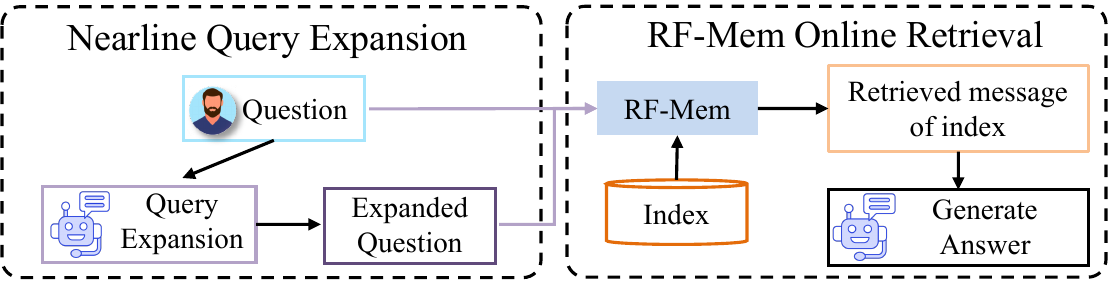}
    \caption{\revise{Illustration of the adaptive study setup. Nearline query expansion (e.g., HyDE) enriches the query representation, and RF-Mem operates as the online retrieval layer.}}
    \label{fig:adaptive_query_study}
\vspace{-10pt}
\end{figure}

\begin{table}[ht]
\vspace{-10pt}
\centering
\small
\caption{\revise{Results by using HyDE query expansion method on PersonaBench.}}
\label{tab:adaptive_query_study}
\setlength{\tabcolsep}{1pt}
\renewcommand{\arraystretch}{0.8}
    \begin{tabular}{p{1.3cm}|M{1.2cm}M{1.2cm}M{1.2cm}M{1.2cm}>{\columncolor{blue!5}}M{1.1cm}
                            |M{1.2cm}M{1.2cm}M{1.2cm}M{1.2cm}>{\columncolor{blue!5}}M{1.1cm}}
        \toprule
        \textbf{Metrics}  &
        \multicolumn{5}{c|}{\textbf{Recall@5}} &
        \multicolumn{5}{c}{\textbf{Recall@10}} \\
        \cmidrule(lr){2-6} \cmidrule(l){7-11}
        \textbf{HyDE}
        & \textbf{Basic Info} & \textbf{Social Info} & \textbf{Pref Easy} & \textbf{Pref Hard} & \textbf{Overall}
        & \textbf{Basic Info} & \textbf{Social Info} & \textbf{Pref Easy} & \textbf{Pref Hard} & \textbf{Overall}   \\
        \midrule
        \rowcolor{gray!15}
        \multicolumn{11}{c}{\textit{multi-qa-MiniLM-L6-cos-v1}} \\
        Famili. & 0.3106 & 0.3909 & 0.4615 & 0.3122 & 0.3464 & 0.5000 & 0.4991 & 0.5737 & 0.5220 & 0.5120 \\
        Recol.  & 0.3015 & 0.4135 & 0.4615 & 0.3171 & 0.3482 & 0.4909 & 0.5028 & 0.5929 & 0.4878 & 0.5046 \\
        RF-Mem  & 0.3061 & 0.4135 & 0.4615 & 0.3171 & \textbf{0.3504} & 0.5091 & 0.5028 & 0.5929 & 0.5220 & \textbf{0.5194} \\
        \bottomrule
        \end{tabular}
\vspace{-10pt}
\end{table}

\revise{To further assess the adaptability of RF-Mem, we combine nearline query expansion with online retrieval and examine their interaction on PersonaBench. As illustrated in Figure~\ref{fig:adaptive_query_study}, the nearline expansion methods we adopt, HyDE~\citep{gao2023precise}, generate pseudo-relevance feedback that enriches the original query. RF-Mem then operates as an online retrieval layer.
Table~\ref{tab:adaptive_query_study} reports the results when applying HyDE-based expansion. Across all categories, RF-Mem consistently matches or surpasses Familiarity baselines, demonstrating that its dual-process mechanism remains effective even when the upstream query representation shifts.
These findings confirm that RF-Mem is modular and can be seamlessly integrated into nearline expansion pipelines. 
}

\subsubsection{\revise{Adaptive to Iterative RAG Method}}

\begin{figure}[h]
    \centering
    \includegraphics[width=0.90\linewidth]{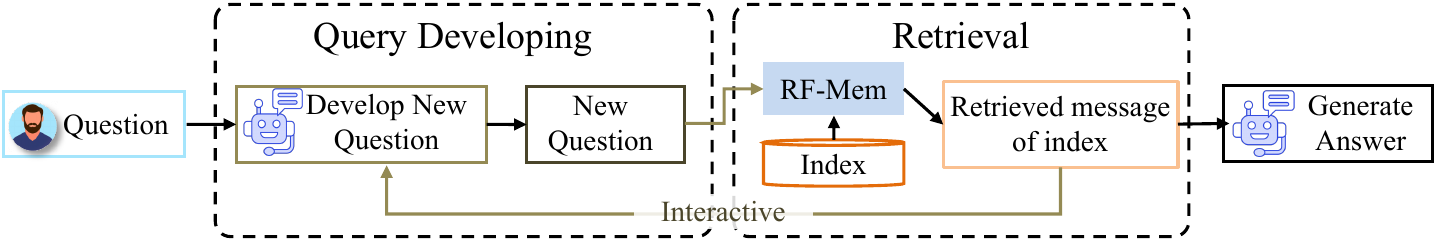}
    \caption{\revise{Illustration of adaptive study setup. Iterative RAG (e.g., Search-o1) provides a multi-turn retrieval for answer generation, while RF-Mem serves as the retrieval layer that adapts to it.}}
    \label{fig:adaptive_inter_study}
\vspace{-10pt}
\end{figure}

\revise{To further evaluate the adaptability of RF-Mem under iterative retrieval settings, we pair it with a multi-turn reasoning pipeline based on Search-o1~\citep{li2025search}. As illustrated in Figure~\ref{fig:adaptive_inter_study}, Search-o1 develops refined follow-up queries through iterative question generation, while RF-Mem functions as the retrieval layer that reacts to these evolving queries in real time. 
Table~\ref{tab:adaptive_inter_study} summarizes results using the Search-o1 interactive retrieval on the PersonaMem. RF-Mem still achieves the highest overall score. These findings demonstrate that RF-Mem retains its effectiveness in iterative RAG settings. 
}

\begin{table}[h]
\centering
\small
\caption{\revise{Results by using Search-o1 iterative retrieval on PersonaMem (32K corpus).}}
\label{tab:adaptive_inter_study}
\setlength{\tabcolsep}{2pt}
\renewcommand{\arraystretch}{0.8}
\begin{tabular}{p{1.8cm}|M{1.2cm}|M{1.3cm}M{1.4cm}M{0.9cm}M{1.1cm}M{1.3cm}M{1.1cm}M{1.0cm}|>{\columncolor{blue!5}}c}
    \toprule
     \textbf{Search-o1} & \textbf{Avg. Tokens}&\textbf{Revisit Reasons} & \textbf{Track Evolution} & \textbf{Latest Prefs} & \textbf{Aligned Recs} & \textbf{New Scenarios}& \textbf{Shared Facts} & \textbf{New Ideas} & \textbf{Overall}  \\
\midrule
\rowcolor{gray!15}
\multicolumn{10}{c}{\textbf{Search-o1}} \\
Familiarity  & 4948.7 & 0.8687 & 0.6259 & 0.5882 & 0.6545 & 0.5614 & 0.5271 & 0.2581 & 0.5823 \\
Recollection & 5103.9 & 0.8990 & 0.6619 & 0.6471 & 0.7091 & 0.5789 & 0.4961 & 0.2796 & 0.6010 \\
RF-MEM       & 5158.2 & 0.9293 & 0.6978 & 0.6471 & 0.7455 & 0.5789 & 0.5349 & 0.2043 & \textbf{0.6146} \\
\bottomrule
\end{tabular}
\vspace{-10pt}
\end{table}

\section{Related Works}

Personalized memory retrieval for LLMs has emerged to complement the fixed context window and enable user-specific, context-aware responses. Unlike standard knowledge-based RAG that targets factual data, personal memory retrieval draws on a user’s own history and preferences~\citep{pan2025secom,wu2024longmemeval,xu2025towards} ({e.g.}, retrieving long-term user-AI dialogue context~\citep{jiang2025know,maharana2024evaluating,wu2024longmemeval} and user-user dialogue~\citep{tan2025personabench}) to fill in missing details and tailor responses. Memory-augmented personalized LLM systems combine an LLM with a non-parametric memory to provide relevant background information. We review related works in three areas: (1) query reformulation, (2) index construction, and (3) retrieval frameworks. 

\textbf{Query reformulation.} These methods expand or refine queries to improve memory retrieval. LD-Agent extracts keyphrases for retrieval~\citep{li2025hello}, MemoCue proposes memory-inspired cue query~\citep{zhao2025memocue}, and LQ-TOD generates task-oriented queries~\citep{chen2025llm}. Other approaches leverage LLM-based query expansion, such as LameR~\citep{shen2024retrieval} and MemInsight~\citep{salama2025meminsight}, to enrich the query with contextual attributes.  

\textbf{Index construction.} Prior work has also emphasized structuring user memory into searchable indices. Text-based methods summarize or cluster memories into compact representations, as in MemoryBank~\citep{zhong2024memorybank}, SeCom~\citep{pan2025secom}, and MemGas~\citep{xu2025towards}, while others rely on reflective or hierarchical summaries~\citep{tan2025prospect}. Graph-based approaches instead capture relational structures, exemplified by A-Mem~\citep{xu2025mem}, THEANINE~\citep{ong2024towards}, Mem0~\citep{chhikara2025mem0}, and Zep~\citep{rasmussen2025zep}. Although differing in representation, these methods share a common assumption: retrieval remains static. Most ultimately depend on standard dense retrievers to encode queries and memory keys, applying a uniform retrieval process regardless of uncertainty or task complexity.

\textbf{Retrieval frameworks.} From early keyword-based search~\citep{robertson2009probabilistic} to dense retrievers like DPR~\citep{karpukhin2020dense} and Contriever~\citep{lei2023unsupervised}, retrieval methods aim to rank memory items by semantic similarity. More advanced encoders such as MiniLM~\citep{wang2020minilm}, MPNet~\citep{song2020mpnet}, and BGE~\citep{luo2024bge} improve efficiency and accuracy, and are widely adopted in multi-session dialog retrieval~\citep{xu2021beyond}. However, these frameworks largely adopt a single-process retrieval paradigm, treating all queries as homogeneous regardless of confidence or task complexity.  

\textbf{However, existing methods overlook the dual-process nature of human memory retrieval.}
Most prior works focus on query reformulation, index optimization, or stronger retrievers, yet they implicitly reduce retrieval to a one-shot recognition process (\emph{Familiarity}). This not only neglects the crucial role of deliberate (\emph{Recollection}), but also ignores the need for adaptive switching between the two paths. Our proposed RF-Mem addresses this gap by introducing a recollection retrieval path and a familiarity uncertainty-driven selection that adaptively alternates between fast Familiarity and deeper Recollection retrieval, thereby enabling more personalized memory retrieval.

\vspace{-5pt}
\section{Conclusion}
\vspace{-5pt}
In this work, we revisited personalized memory retrieval through the lens of dual-process theory. Existing methods are limited to one-shot retrieval based on similarity \emph{Familiarity}, whereas we introduce \emph{Recollection} as a deliberate stepwise retrieval mechanism. Building on this, we proposed \textbf{RF-Mem}, a familiarity uncertainty-driven framework that adaptively switches between them. Experiments show that RF-Mem achieves robust gains across both generation and retrieval tasks, and scales reliably to million-entry corpora, demonstrating the importance of integrating deliberate recollection into memory retrieval for personalizing LLM. 

\section*{Ethics Statement}
This work focuses on improving personalized memory retrieval for large language models. Our study relies solely on simulated and publicly available benchmark datasets (PersonaMem, PersonaBench, LongMemEval), which do not contain personally identifiable information or sensitive data. We do not collect or release any private user data. While the proposed framework is designed to enhance efficiency and robustness in memory retrieval, potential misuse could arise if deployed without safeguards in sensitive applications such as healthcare or personal decision-making. We therefore emphasize that future deployment should follow strict data governance, privacy preservation, and fairness guidelines, and we encourage the research community to consider ethical implications when extending this work to real-world user data. \revise{We acknowledge that real-world deployments of preference-based systems may inadvertently over-amplify user traits or reinforce behavioral biases. Additionally, improper handling of user histories could expose sensitive information or lead to unintended profiling effects, underscoring the need for careful governance.}

\section*{Reproducibility Statement}
\label{sec:Rreproduct}
We make every effort to ensure reproducibility. All datasets used in this study are publicly available, and we provide detailed references to their sources in the main text. Model architectures, hyperparameter choices ($B$, $F$, $\alpha$, $\tau$, and thresholds $\theta_{\text{high}}$, $\theta_{\text{low}}$) are explicitly documented in Appendix~\ref{app:imple}. We also describe hyperparameter sensitivity analyses in Appendix~\ref{app:add_exp} to illustrate robustness under different configurations. To further facilitate replication, we release code and scripts for reproducing all reported results in the supplementary materials. Our code repository is available at \url{https://github.com/Applied-Machine-Learning-Lab/ICLR2026_RF-Mem}.

\section*{Acknowledgement}
This research was supported by the National Natural Science Foundation of China (NSFC) under Grants 72071029, 72231010, 62502404, and the Graduate Research Fund of the School of Economics and Management of Dalian University of Technology (No. DUTSEMDRFKO1).  This research was partially supported by Hong Kong Research Grants Council (Research Impact Fund No.R1015-23, Collaborative Research Fund No.C1043-24GF, General Research Fund No.11218325), Institute of Digital Medicine of City University of Hong Kong (No.9229503), Huawei (Huawei Innovation Research Program).

\nocite{*}

\bibliography{chapter/ref}
\bibliographystyle{iclr2026_conference}

\newpage
\appendix
\begin{center}
    \LARGE \textbf{Technical Appendix}
    \vspace{1em}
\end{center}
\addcontentsline{toc}{part}{Appendix}  %

\begingroup
  \etocsetnexttocdepth{subsection}     %
  \etocsettocstyle{\section*{Table of Contents}}{} %
  \localtableofcontents                %
\endgroup
\clearpage

\appendix

\section{Dataset Details}
\label{app:dataset}
\paragraph{PersonaMem.} 
We use \textbf{PersonaMem}\footnote{\url{https://github.com/bowen-upenn/PersonaMem}}~\citep{jiang2025know}, a large-scale personalized memory benchmark introduced at COLM 2025, as the core dataset for evaluating personalized retrieval and generation.
PersonaMem is designed to evaluate memory-augmented LLMs by requiring them to retrieve user-specific histories and preferences from long-term dialogue traces. 
To stress-test retrieval under different scales, we follow the original paper and construct three memory corpora of increasing size: 32K, 128K, and 1M entries. 
Table~\ref{tab:personamem_desc_overall} reports dataset statistics, showing that query length remains relatively stable across scales while memory size grows by orders of magnitude, creating a challenging retrieval environment. 
Table~\ref{tab:taskdist} further summarizes the task distribution, covering diverse domains such as family relations, study consultation, legal and medical consults, and recommendation tasks (movies, food, books, etc.). 
These heterogeneous task types capture both fact-oriented and reasoning-intensive scenarios, making PersonaMem a suitable benchmark for evaluating the balance between \emph{Familiarity}-based fast recall and \emph{Recollection}-driven contextual reconstruction.

\begin{table}[htbp]
\centering
\caption{Dataset Statistics across Different Memory Corpora.}
\label{tab:personamem_desc_overall}
    \begin{tabular}{lccc}
    \toprule
    \textbf{Dataset name} & \textbf{32k memory corpus} & \textbf{128k memory corpus} & \textbf{1M memory corpus} \\
    \midrule
    \# of samples            & 589   & 2727   & 2674 \\
    Avg tokens of question   & 464.6 & 416.3  & 415.1 \\
    Avg tokens of memory     & 24193.2 & 15185.1 & 911733.4 \\
    \midrule
    \# of Revisit Reasons    & 99    & 269    & 235 \\
    \# of Track Evolution    & 139   & 341    & 225 \\
    \# of Latest Prefs       & 17    & 866    & 768 \\
    \# of Aligned Recs       & 55    & 349    & 280 \\
    \# of New Scenarios      & 57    & 213    & 295 \\
    \# of Shared Facts       & 129   & 171    & 144 \\
    \# of New Ideas          & 93    & 518    & 727 \\
    \bottomrule
    \end{tabular}
\vspace{-10pt}
\end{table}

\begin{table}[htbp]
\centering
\caption{Task Distribution across Different Memory Corpora}
\label{tab:taskdist}
    \begin{tabular}{lccc}
    \toprule
    \textbf{Dataset name} & \textbf{32k memory corpus} & \textbf{128k memory corpus} & \textbf{1M memory corpus} \\
    \midrule
    \# of Home Decoration   & 3   & 153 & 164 \\
    \# of Family Relations  & 32  & 144 & 193 \\
    \# of Therapy           & 2   & 249 & 161 \\
    \# of Travel Plan       & 71  & 155 & 192 \\
    \midrule
    \# of Medical Consult   & 19  & 195 & 163 \\
    \# of Legal Consult     & 32  & 283 & 197 \\
    \# of Study Consult     & 35  & 157 & 191 \\
    \# of Dating Consult    & 94  & 197 & 208 \\
    \# of Financial Consult & 72  & 198 & 181 \\
    \midrule
    \# of Food Rec          & 2   & 229 & 191 \\
    \# of Movie Rec         & 104 & 158 & 212 \\
    \# of Music Rec         & 56  & 44  & 128 \\
    \# of Book Rec          & 67  & 172 & 139 \\
    \# of Sports Rec        & --  & 197 & 148 \\
    \# of Online Shopping   & --  & 196 & 206 \\
    \bottomrule
    \end{tabular}
\vspace{-10pt}
\end{table}

\paragraph{PersonaBench.} 
We further evaluate on the \textbf{PersonaBench}\footnote{\url{https://github.com/SalesforceAIResearch/personabench}} dataset, which benchmarks personalized retrieval in multi-source user environments introduced in ACL 2025~\citep{tan2025personabench}. As shown in Table~\ref{tab:personabench_stats}, PersonaBench aggregates heterogeneous user histories across six users, including \emph{conversations with friends}, \emph{user-AI interactions}, and \emph{e-commerce purchase records}. Each user contributes on average 44 queries grounded in a corpus of about 88 items, yielding rich signals of both social and transactional behavior. The resulting setting allows us to test retrieval robustness across diverse memory types, from casual dialogue to structured purchase history.  
This design contrasts with PersonaMem, which primarily focuses on long user-AI dialogues, and complements our study by introducing multi-modal user traces that better capture the breadth of personalization scenarios. 

\begin{table}[htbp]
    \centering
    \caption{Statistics of the PersonaBench dataset across users.}

    \label{tab:personabench_stats}
    \begin{tabular}{M{1cm}M{1cm}M{1.3cm}M{3cm}M{2cm}M{3cm}}
    \toprule
    \textbf{User Id} & \textbf{\# of Queries} & \textbf{\# of Corpus} & \textbf{\# of Conversations with friends} & \textbf{\# of User-AI Conversation} & \textbf{\# of User e-comerce purchase histories} \\
    \midrule
    1 & 48 & 110 & 84 & 23 & 3 \\
    2 & 43 & 90  & 78 & 8  & 4 \\
    3 & 42 & 64  & 51 & 12 & 1 \\
    4 & 46 & 85  & 71 & 14 & 0 \\
    5 & 44 & 84  & 59 & 21 & 4 \\
    6 & 40 & 94  & 79 & 14 & 1 \\
    \midrule
    \textbf{Sum} & 263 & 527 & 422 & 92 & 13 \\
    \textbf{Avg} & \textbf{43.83} & \textbf{87.83} & \textbf{70.33} & \textbf{15.33} & \textbf{2.17} \\
    \bottomrule
    \end{tabular}
\vspace{-10pt}
\end{table}

\paragraph{LongMemEval.}
We further evaluate on the \textbf{LongMemEval}\footnote{\url{https://github.com/xiaowu0162/LongMemEval}} dataset, which benchmarks the ability to retrieve task-relevant information from extended user-specific corpora. introduced in ICLR 2025~\citep{wu2024longmemeval}. 
Each instance in the dataset consists of a factual question paired with a synthetic memory corpus simulating historical user data. The dataset includes two settings: \textit{LongMemEval-s}, where each question is associated with approximately 50 memories, and \textit{LongMemEval-m}, where each corpus contains over 500 memories. In total, both settings consist of 500 questions, allowing evaluation of memory retrieval precision under varying context lengths. This benchmark enables controlled analysis of personalized retrieval performance under realistic long-context scenarios.

\begin{table}[htbp]
\centering
\caption{Statistics of the LongMemEval dataset, with different sizes of associated memory corpora.}
\label{tab:longmemeval_stats_vertical}
\begin{tabular}{lcc}
\toprule
\textbf{Statistic} & \textbf{LongMemEval-s} & \textbf{LongMemEval-m} \\
\midrule
Total Questions & 500 & 500 \\
Total Session-level Memory & 25,112 & 250,948 \\
Min Session-level Memory per Question & 39 & 501 \\
Max Session-level Memory per Question & 66 & 506 \\
Avg Session-level Memory per Question & \textbf{50.22} & \textbf{501.90} \\
\bottomrule
\end{tabular}
\end{table}

\section{Implementation details}
\label{app:imple}
All experiments are conducted on a single NVIDIA A100 GPU with Ubuntu OS, where the GPU is exclusively allocated to the process when measuring runtime. 

For \textbf{PersonaMem}, we follow the original setup and build the memory corpus at the dialogue-turn level, where each chunk corresponds to a user query and a single LLM response. We use \texttt{advanced LLM} as the generator and \texttt{multi-qa-MiniLM-L6-cos-v1} as the retriever. The hyperparameters are set as $\lambda=20$, $B=3$, $F=2$, with thresholds $\theta_{\text{high}}=0.6$ and $\theta_{\text{low}}=0.3$, ensuring stable regulation across turn-level retrieval. 
And prompt for generation can be found at Appendix~\ref{app:prompt}. 
And we illustrate the mean score $\bar{s}$ and entropy $H(p)$ of the PersonaMem dataset in Figure~\ref{fig:Mean_Ent_personamem}. 
The cumulative distributions reveal a consistent pattern across corpus sizes: as the scale increases from 32K to 1M, the mean score distribution shifts slightly toward lower values, reflecting weaker overall familiarity in larger search spaces, whereas entropy remains concentrated within a narrow band (0.1-0.2), indicating stable uncertainty resolution. 
The annotated quartiles further confirm that the median values of both $\bar{s}$ ($\approx$0.50-0.55) and $H(p)$ ($\approx$0.17-0.18) remain largely invariant, which demonstrates that the familiarity signal preserves a stable operating range across different scales. 
Such stability provides empirical support for RF-Mem’s threshold design, ensuring that the switching mechanism can generalize without costly re-tuning and maintaining robustness under varying corpus sizes. \revise{For support our theorical assumption in Appendix~\ref{app:sub_gaus}, we show the empirical distribution of mean score $\bar{s}$ in the Figure~\ref{fig:Mean_dist_personamem}. All three datasets exhibit light-tailed, bounded distributions without heavy-tail behavior, and the tail shape remains stable as the corpus size grows. This empirically confirms that the similarity landscape does not display the heavy-tailed structure.}

\begin{figure}[ht]
    \centering
    \includegraphics[width=0.99\linewidth]{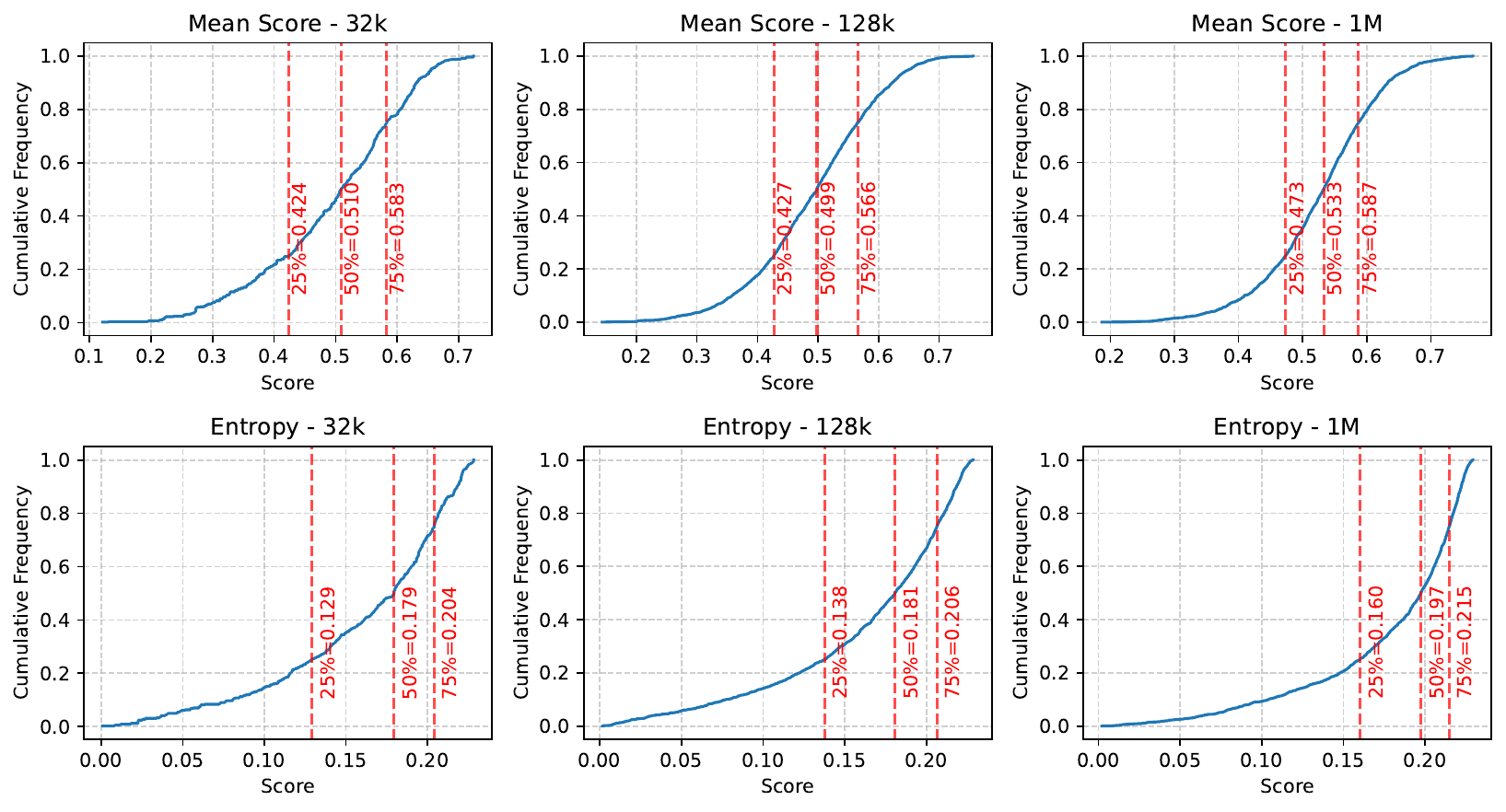}
    \caption{Mean score $\bar{s}$ and entropy $H(p)$ of the PersonaMem dataset.}
    \label{fig:Mean_Ent_personamem}
\vspace{-10pt}
\end{figure}

\begin{figure}[ht]
    \centering
    \includegraphics[width=0.99\linewidth]{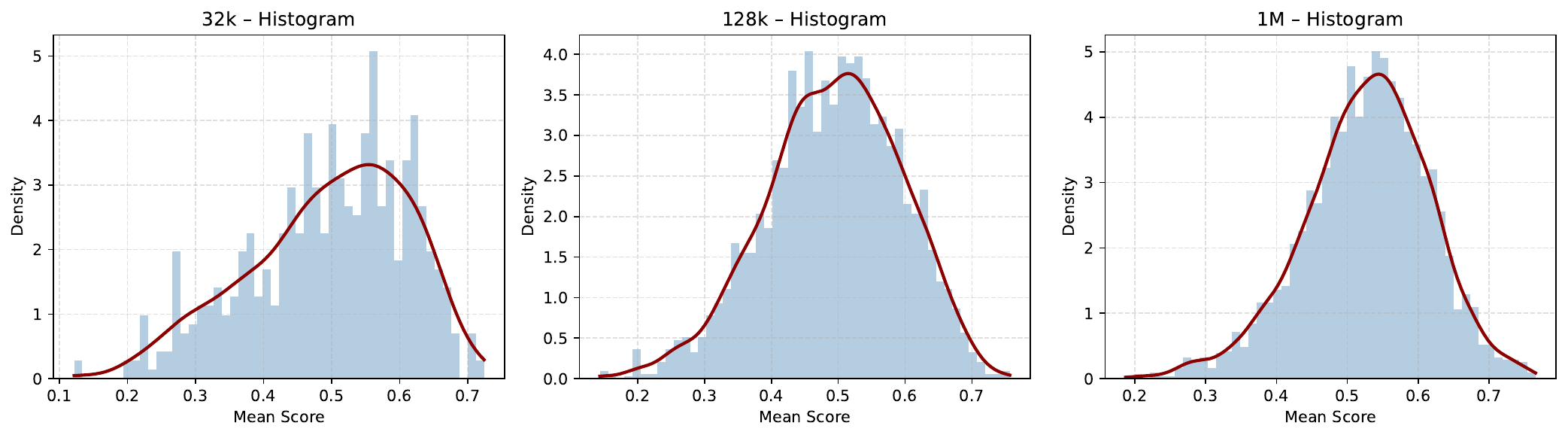}
    \caption{\revise{Empirical distributions of mean score $\bar{s}$ of the PersonaMem dataset.}}
    \label{fig:Mean_dist_personamem}
\vspace{-10pt}
\end{figure}

For \textbf{PersonaBench}, we adopt the session-level memory construction as in the benchmark paper, which enables fair evaluation of retrieval quality using Recall. We evaluate our method under multiple retrievers, including \texttt{multi-qa-MiniLM-L6-cos-v1}\footnote{\url{https://huggingface.co/sentence-transformers/multi-qa-MiniLM-L6-cos-v1}}, \texttt{all-MiniLM-L6-v2}\footnote{\url{https://huggingface.co/sentence-transformers/all-MiniLM-L6-v2}}, and \texttt{bge-base-en-v1.5}\footnote{\url{https://huggingface.co/BAAI/bge-base-en-v1.5}} to ensure generality. 
For Recall@5, we use $B=3$, $F=1$; for Recall@10, we use $B=3$, $F=2$, and $\lambda=30$ for both. The thresholds are set to $\theta_{\text{high}}=0.6$ and $\theta_{\text{low}}=0.0$, reflecting the lower similarity scores in session-level indices. And we illustrate the mean score $\bar{s}$ and entropy $H(p)$ of the PersonaMem dataset in Figure~\ref{fig:Mean_Ent_personabench}. \revise{Also, for support our theorical assumption in Appendix~\ref{app:sub_gaus}, we show the empirical distribution of mean score $\bar{s}$ in the Figure~\ref{fig:Mean_dist_personabench}.}

\begin{figure}[ht]
    \centering
    \includegraphics[width=0.99\linewidth]{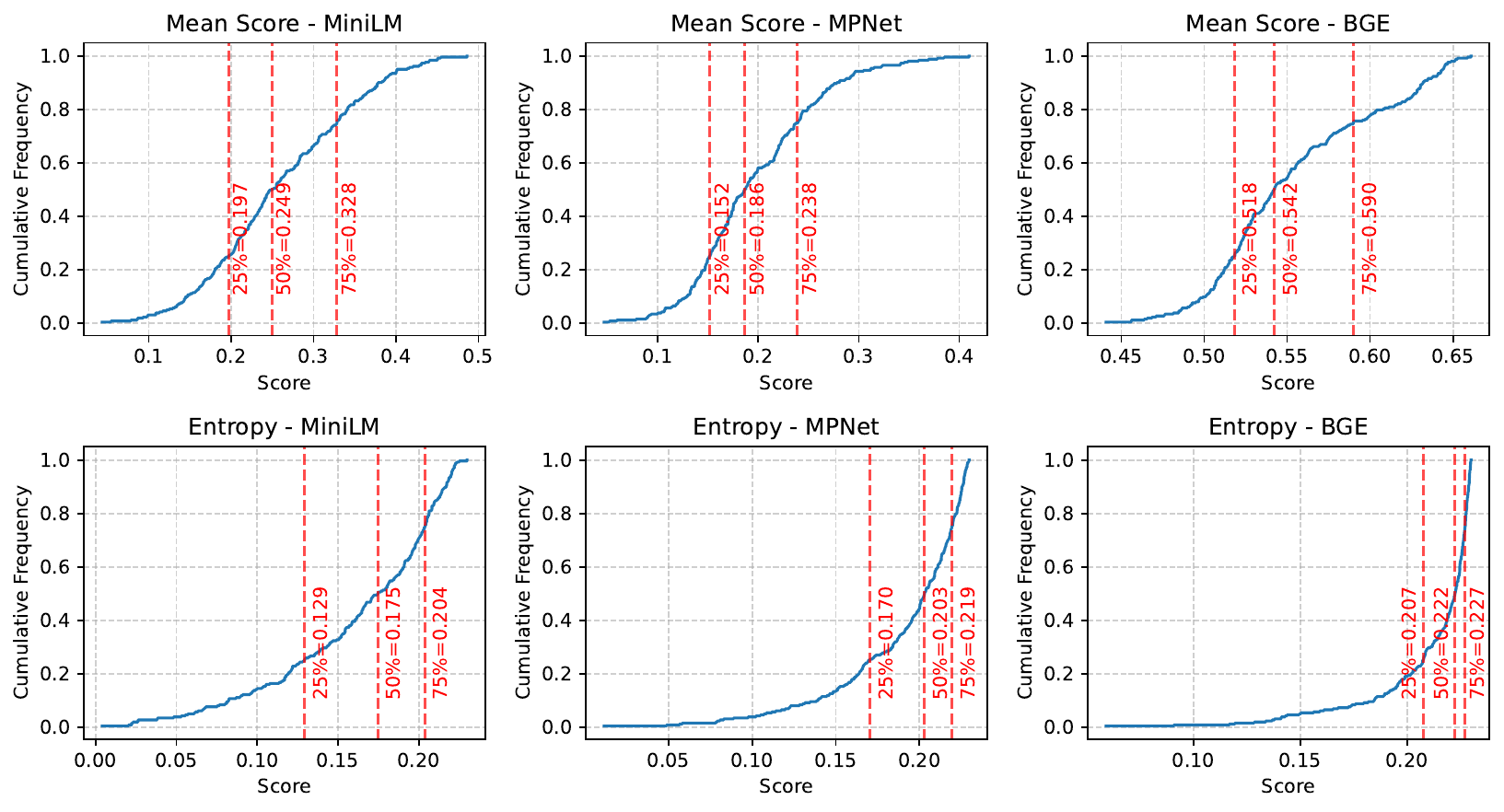}
    \caption{Mean score $\bar{s}$ and entropy $H(p)$ of the PersonaBench dataset.}
    \label{fig:Mean_Ent_personabench}
\vspace{-10pt}
\end{figure}

\begin{figure}[ht]
    \centering
    \includegraphics[width=0.99\linewidth]{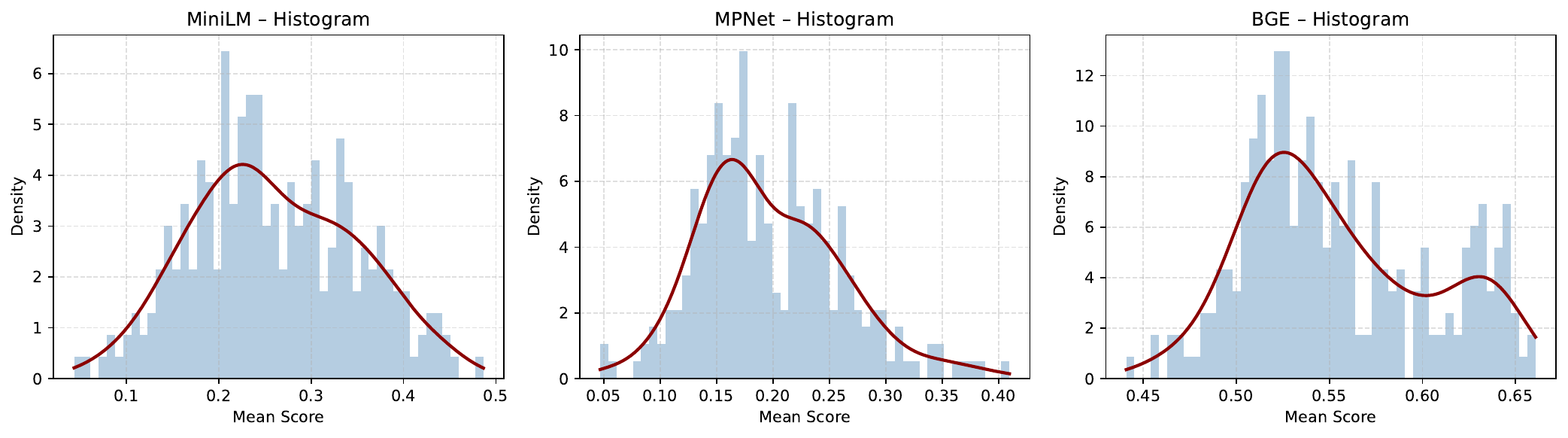}
    \caption{\revise{Empirical distributions of mean score $\bar{s}$ of the Personabench dataset.}}
    \label{fig:Mean_dist_personabench}
\vspace{-10pt}
\end{figure}

For \textbf{LongMemEval}, we also adopt the session-level memory construction as in the paper, with recall as a metric. We also evaluate RF-Mem under \texttt{multi-qa-MiniLM-L6-cos-v1}, \texttt{all-MiniLM-L6-v2}, and \texttt{bge-base-en-v1.5} to ensure generality. 
For LongMemEval-S and LongMemEval-M, we use $B=4$, $F=1$, and $\lambda=20$. The thresholds are set to $\theta_{\text{high}}=0.6$ and $\theta_{\text{low}}=0.0$. And we illustrate the mean score $\bar{s}$ and entropy $H(p)$ of the two datasets in Figure~\ref{fig:Mean_Ent_LME-s} and Figure~\ref{fig:Mean_Ent_LME-m}. \revise{The empirical distribution of mean score $\bar{s}$ shown in Figure~\ref{fig:Mean_dist_LME_S} and Figure~\ref{fig:Mean_dist_LME_M} also support our theorical assumption in Appendix~\ref{app:sub_gaus}.}

\begin{figure}[ht]
    \centering
    \includegraphics[width=0.99\linewidth]{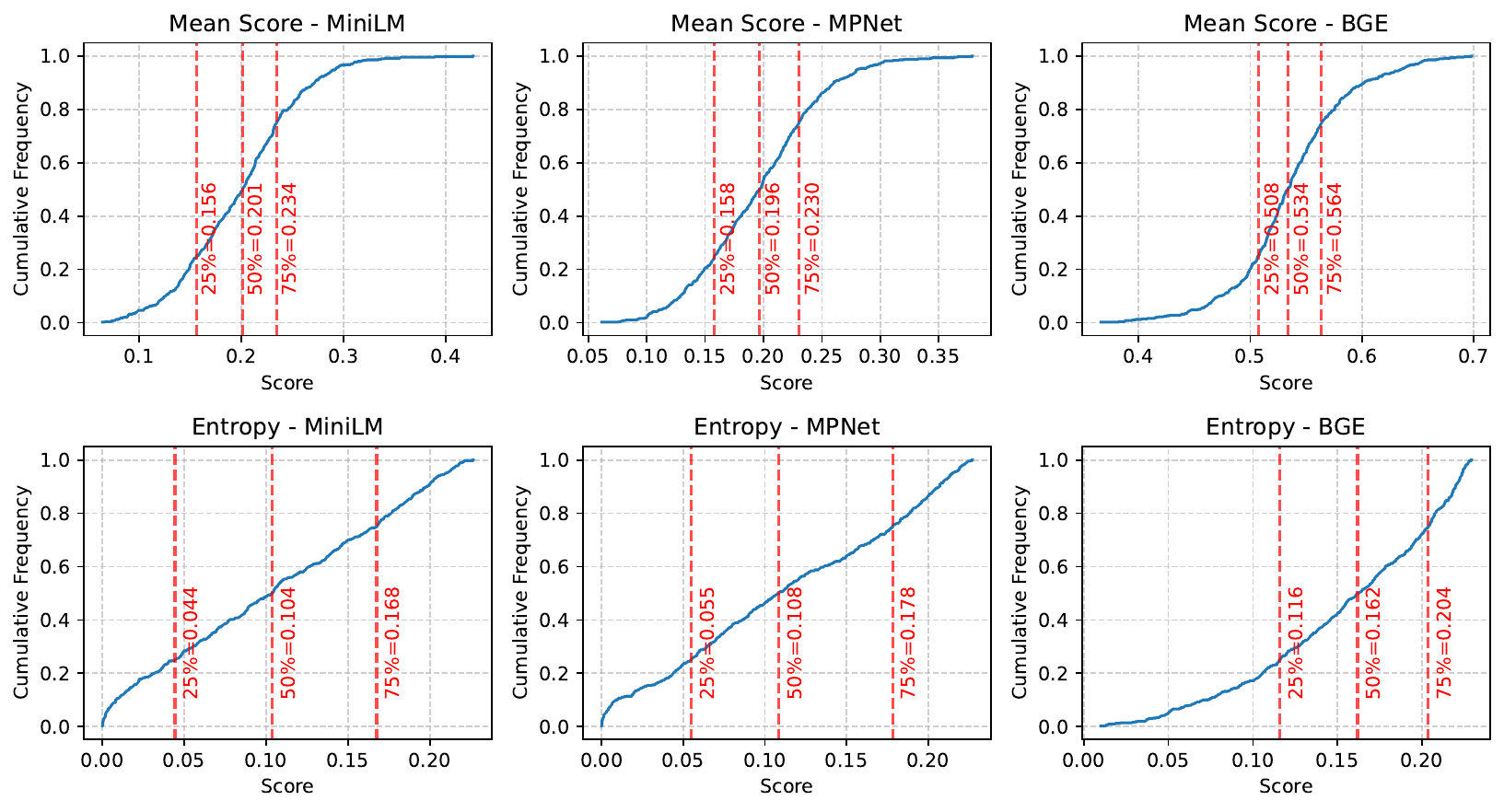}
    \caption{Mean score $\bar{s}$ and entropy $H(p)$ in the LongMemEval-S dataset.}
    \label{fig:Mean_Ent_LME-s}
\vspace{-10pt}
\end{figure}

\begin{figure}[ht]
    \centering
    \includegraphics[width=0.99\linewidth]{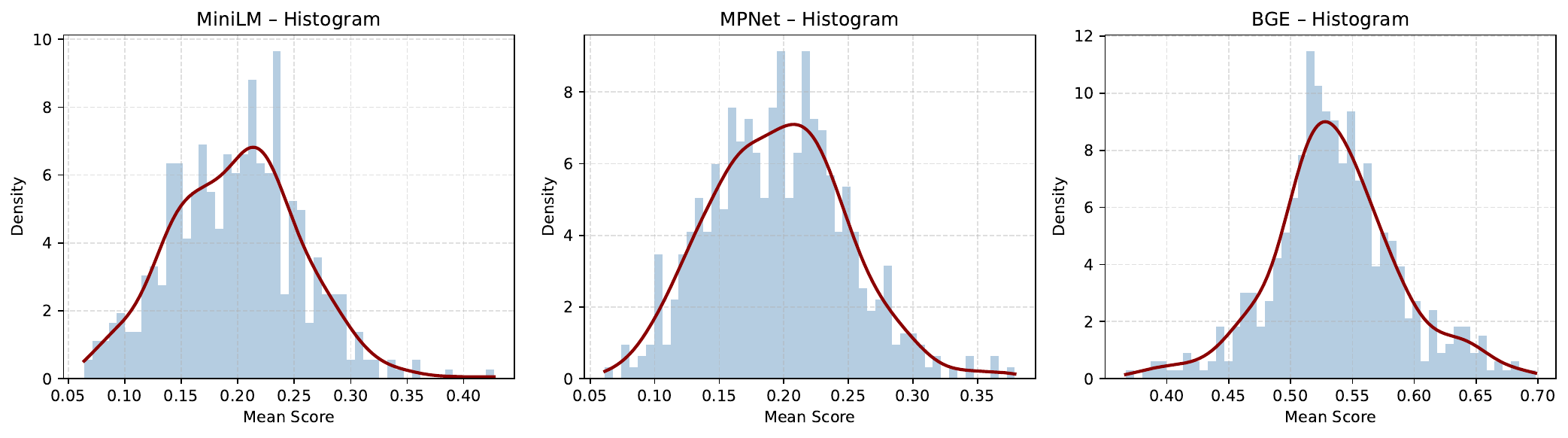}
    \caption{\revise{Empirical distributions of mean score $\bar{s}$ of the LongMemEval-S dataset.}}
    \label{fig:Mean_dist_LME_S}
\vspace{-10pt}
\end{figure}

\begin{figure}[ht]
    \centering
    \includegraphics[width=0.99\linewidth]{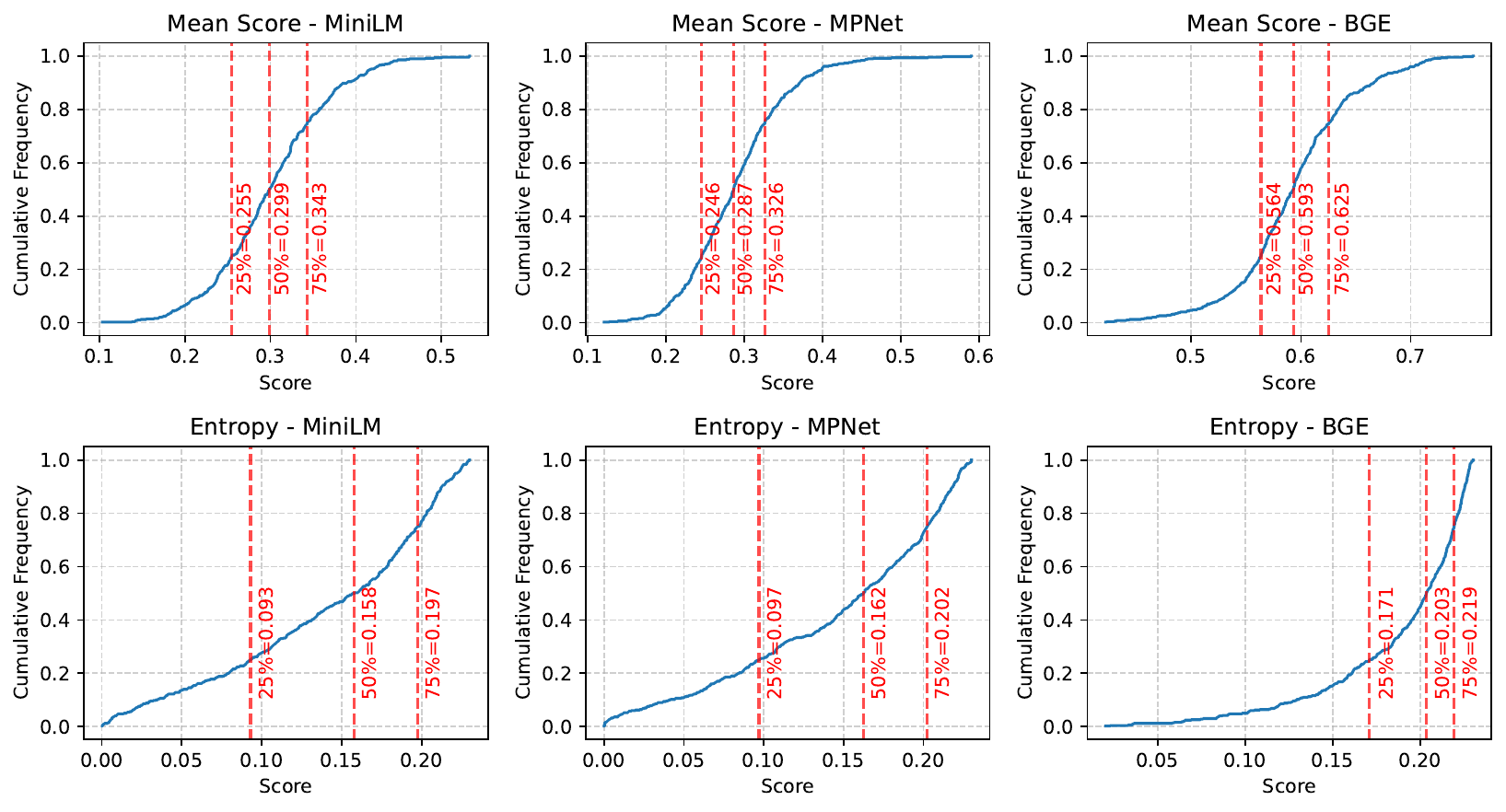}
    \caption{Mean score $\bar{s}$ and entropy $H(p)$ in the LongMemEval-M dataset.}
    \label{fig:Mean_Ent_LME-m}
\vspace{-10pt}
\end{figure}

\begin{figure}[ht]
    \centering
    \includegraphics[width=0.99\linewidth]{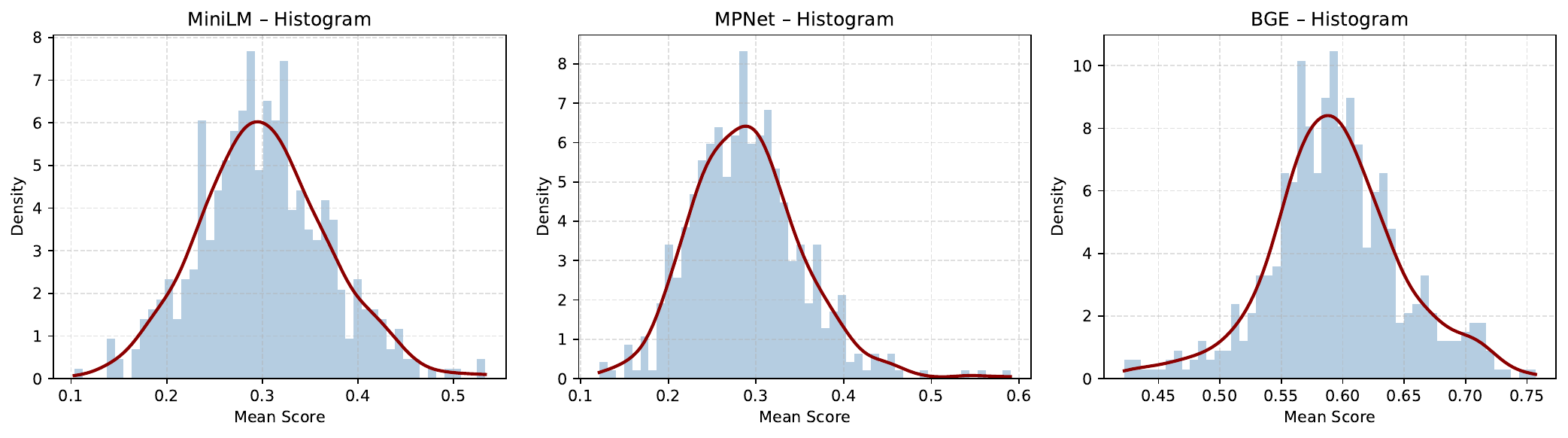}
    \caption{\revise{Empirical distributions of mean score $\bar{s}$ of the LongMemEval-M dataset.}}
    \label{fig:Mean_dist_LME_M}
\vspace{-10pt}
\end{figure}

\section{Prompt For PersonaMem}
\label{app:prompt}
For the multiple-choice evaluation of PerosnaMem dataset, we adopted a strict prompt template to ensure consistency and avoid ambiguous model outputs. The instruction explicitly restricts the model to return exactly one option among \texttt{(a)}, \texttt{(b)}, \texttt{(c)}, or \texttt{(d)}, without any reasoning or additional text. This design prevents uncontrolled generation and makes results directly comparable across models.

\begin{tcolorbox}[colback=gray!5,colframe=gray!40!black,title=Prompt Instruction]
\small
You are a multiple-choice answer generator. You MUST respond with exactly one option in the form of (a), (b), (c), or (d).  
Do not include any explanation, reasoning, or extra text.  
Do not output anything else besides the chosen option.  
If the correct answer is unknown, make the best guess and still only respond in that format.  
If your output does not exactly match one of (a), (b), (c), or (d), your answer will be considered incorrect.
\end{tcolorbox}

During evaluation, the prompt is constructed by concatenating the question, the fixed instruction above, and all candidate options. Retrieved memory context (from RF-Mem or baselines) is prepended as dialogue history to provide user-specific background. This structure guarantees deterministic outputs while isolating the effect of retrieval on answer quality.

\begin{tcolorbox}[colback=gray!1,colframe=gray!10!black,title=Example Prompt, fonttitle=\bfseries]
\textbf{Question:} Can you suggest some new evidence-based practices to explore for healthcare decision-making?

\textbf{Instruction:} You are a multiple-choice answer generator. You MUST respond with exactly one option in the form of (a), (b), (c), or (d). Do not include any explanation, reasoning, or extra text. Do not output anything else besides the chosen option. If the correct answer is unknown, make the best guess and still only respond in that format. If your output does not exactly match one of (a), (b), (c), or (d), your answer will be considered incorrect.

\textbf{Options:}
\begin{enumerate}[label=(\alph*)]
    \item Exploring patient-centered communication strategies might be a great new evidence-based practice to consider...
    \item One practice that could be impactful is the integration of artificial intelligence in predictive analytics for healthcare decision-making...
    \item You might consider exploring systematic reviews and meta-analyses of clinical trials, as these often provide robust evidence for healthcare practices...
    \item Implementing shared decision-making models in clinical practice is a promising evidence-based strategy to explore...
\end{enumerate}

\noindent\textcolor{gray!60!black}{\textbf{Correct Answer:} (c) \# This is the correct answer and not in the prompt.}
\end{tcolorbox}

\section{Additional Experiment}
\label{app:add_exp}

\subsection{Result Across Category in PersonaMem}
To further examine retrieval behaviors across different task types, we conduct a category-level analysis on \textbf{PersonaMem} (Table~\ref{tab:memory_corpus}). 
This evaluation spans factual queries (e.g., \emph{food recommendation}, \emph{medical consult}),  reasoning-intensive domains (e.g., \emph{family relations}, \emph{therapy}),  and hybrid scenarios (e.g., \emph{aligned recommendations}, \emph{new scenarios}), allowing us to disentangle how each retrieval strategy responds to varying demands of factual precision, contextual reasoning, and personalization. 

\textbf{First, Familiarity excels in direct recall but struggles as complexity grows.}  
As shown in Table~\ref{tab:memory_corpus}, dense retrieval achieves peak scores on fact-centric categories where surface similarity is sufficient. For instance, it reaches perfect accuracy on \emph{food recommendation} (1.0000 at 32k corpus) and strong results in \emph{movie recommendation} (0.5759 at 128k). These cases confirm its cognitive role as a fast, coarse recognition process. However, as tasks demand contextual integration—such as \emph{track evolution} or \emph{aligned recommendations}—Familiarity shows clear degradation, particularly when scaling to 1M memories.  

\textbf{Second, Recollection proves valuable for reasoning-intensive tasks.}  
In categories like \emph{family relations}, \emph{therapy}, and \emph{dating consultation}, Recollection consistently surpasses Familiarity by reconstructing dispersed cues across sessions (e.g., 0.5938 vs.\ 0.5625 in \emph{family relations} at 32k, and 0.5280 vs.\ 0.4534 in \emph{therapy} at 1M). Yet, on factual categories such as \emph{food recommendation} or \emph{study consultation}, where surface matches are already diagnostic, its advantage diminishes or even reverses. This confirms its role as a slower but more diagnostic mode.  

\textbf{Third, RF-Mem achieves robust gains through adaptive switching.}  
As shown in Table~\ref{tab:memory_corpus}, RF-Mem consistently outperforms single-mode baselines by combining the efficiency of Familiarity with the diagnostic depth of Recollection. For instance, it improves \emph{legal consultation} at 32k (0.94 vs.\ 0.78/0.88) and sustains advantages in \emph{family relations} at 1M (0.60 vs.\ 0.57/0.58). These examples illustrate its ability to maintain high accuracy across both factual and reasoning categories. Importantly, RF-Mem also delivers the best overall results at all corpus scales, confirming that uncertainty-aware routing enables robust retrieval without committing to a single mode.    


\label{app:cate_res}
\begin{table}[h]
\centering
\small
\caption{Performance comparison on different memory corpus sizes.}
\label{tab:memory_corpus}
\setlength{\tabcolsep}{1pt}
\renewcommand{\arraystretch}{1.0}
\begin{tabular}{p{2.5cm}M{1.1cm}M{1.1cm}M{1.2cm} | M{1.1cm}M{1.1cm}M{1.2cm}| M{1.1cm}M{1.1cm}M{1.2cm}}
\toprule
\multirow{2}{*}{\textbf{Question Category}} 
& \multicolumn{3}{c}{\textbf{32k memory corpus}} 
& \multicolumn{3}{c}{\textbf{128k memory corpus}} 
& \multicolumn{3}{c}{\textbf{1M memory corpus}} \\
\cmidrule(lr){2-4} \cmidrule(lr){5-7} \cmidrule(l){8-10}
& Famili. & Recol. & RF-Mem 
& Famili. & Recol. & RF-Mem 
& Famili. & Recol. & RF-Mem  \\
\midrule
\rowcolor{blue!5}
\textbf{Overall}          & 0.5908 & 0.6214 & \textbf{0.6350} & 0.5259 & 0.5288 & \textbf{0.5394 }& 0.4518 & 0.4544 & \textbf{0.4589} \\ 
\midrule
Home Decoration  & \textbf{0.6667} & \textbf{0.6667} & \textbf{0.6667} & 0.6275 & 0.6275 & \textbf{0.6471} & 0.4207 & \textbf{0.4451} & 0.4207 \\
Family Relations & 0.5625 & \textbf{0.5938} & \textbf{0.5938} & 0.5486 & 0.5625 & \textbf{0.5764} & 0.5734 & 0.5803 & \textbf{0.6010} \\
Therapy          & \textbf{0.5000} & \textbf{0.5000} & \textbf{0.5000} & 0.5060 & 0.5301 & \textbf{0.5341} & 0.4534 & \textbf{0.5280} & 0.4845 \\
Travel Plan      & 0.4789 & \textbf{0.5775} & 0.5634 & 0.5806 & 0.6194 & \textbf{0.6452} & \textbf{0.4948} & 0.4792 & 0.4896 \\
\midrule
Medical Consult  & \textbf{0.8421} & 0.7895 & \textbf{0.8421} & 0.4718 & \textbf{0.5128} & 0.5026 & 0.4294 & 0.4110 & \textbf{0.4356} \\
Legal Consult    & 0.7812 & 0.8750 & \textbf{0.9375} & \textbf{0.5124} & 0.4735 & 0.4841 & \textbf{0.4061} & 0.3909 & 0.3909 \\
Study Consult    & \textbf{0.6000} & \textbf{0.6000} & \textbf{0.6000} & \textbf{0.5096} & 0.4904 & 0.4773 & 0.4503 & 0.4293 & \textbf{0.4817} \\
Dating Consult   & 0.5851 & 0.5851 & \textbf{0.6277} & 0.4772 & \textbf{0.5025} & 0.4975 & \textbf{0.5192} & 0.5000 & 0.5096 \\
Financial Consult& 0.5556 & 0.5972 & \textbf{0.6389} & 0.4899 & 0.4697 & \textbf{0.4899} & \textbf{0.4365} & 0.4309 & 0.4199 \\
\midrule
Food Rec         & \textbf{1.0000} & \textbf{1.0000} & \textbf{1.0000} & 0.5240 & \textbf{0.5721} & 0.5633 & 0.3717 & \textbf{0.4241} & 0.3874 \\
Movie Rec        & \textbf{0.6154} & \textbf{0.6154} & 0.5865 & \textbf{0.5759} & 0.5696 & 0.5633 & 0.4434 & 0.4481 & \textbf{0.4575} \\
Music Rec        & 0.6071 & \textbf{0.6250} & \textbf{0.6250} & 0.4091 & 0.4773 & \textbf{0.5096} & 0.4609 & 0.4688 & \textbf{0.5000 }\\
Book Rec         & 0.5373 & 0.5970 & \textbf{0.6418} & 0.5640 & 0.5407 & \textbf{0.5814} & 0.4964 & 0.4532 & \textbf{0.5036} \\
Sports Rec       &   -    &   -    &   -    & \textbf{0.5228} & 0.4670 & 0.4975 & 0.4189 & \textbf{0.4392} & \textbf{0.4392} \\
Online Shopping  &   -    &   -    &   -    & 0.5408 & 0.5459 & \textbf{0.5561} & 0.3738 & \textbf{0.3932} & 0.3786 \\
\bottomrule
\end{tabular}
\vspace{-5pt}
\end{table}

\subsection{Sensitivity Analysis of $\alpha$ and $\tau$}
\label{app:alpha_tau}

\begin{figure}[h]
    \centering
    \includegraphics[width=0.99\linewidth]{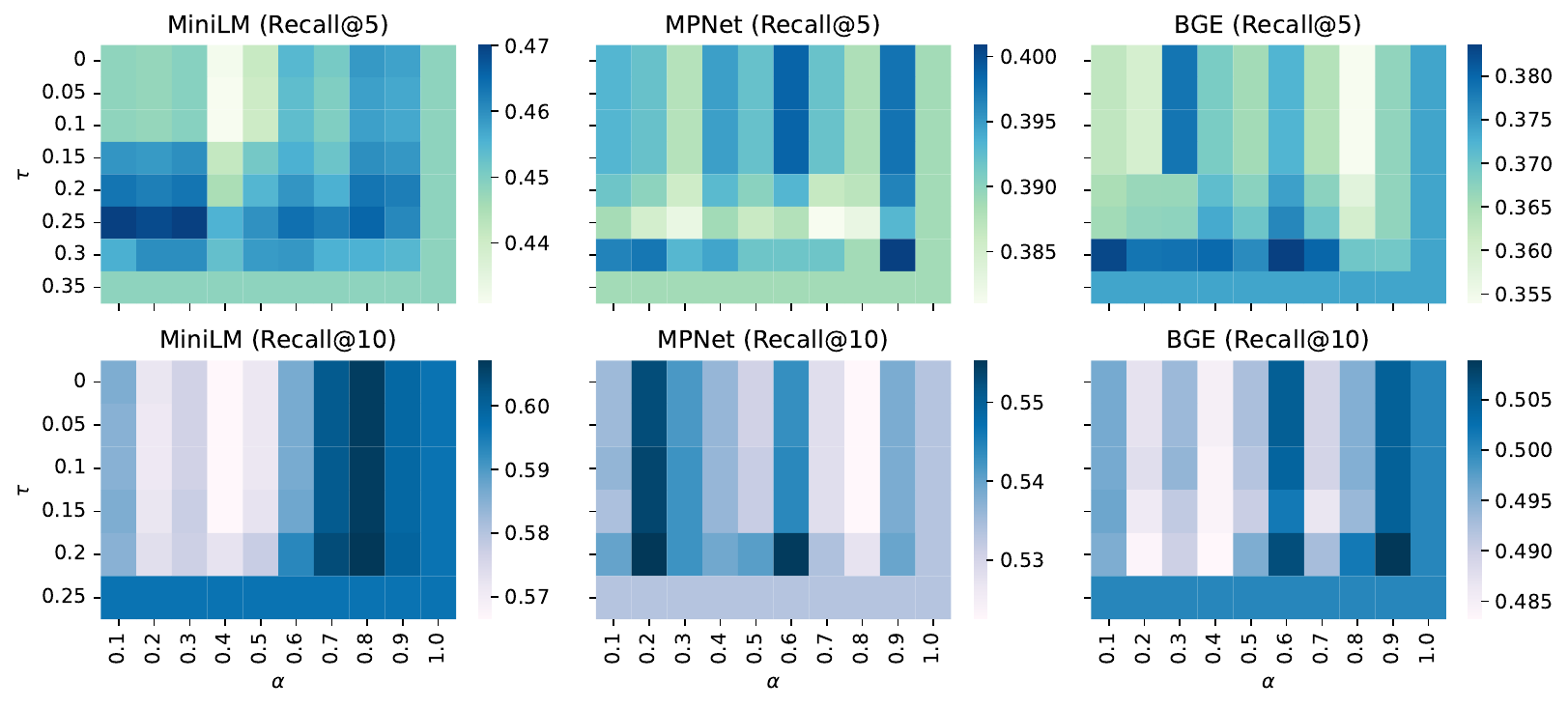}
    \caption{Hyperparameter sensitivity study on PersonaBench. Heatmaps report Recall@5 and Recall@10 across different retrievers, varying $\alpha$ (query–centroid mixing) and $\tau$ (entropy threshold).}
    \label{fig:alpha_tau}
\vspace{-10pt}
\end{figure}

To examine the robustness of RF-Mem, we conduct a systematic study varying two key hyperparameters, $\alpha$ and $\tau$, across different retrievers (MiniLM, MPNet, BGE) and evaluation metrics (Recall@5/10). Figure~\ref{fig:alpha_tau} visualizes the results as heatmaps, where warmer colors indicate higher retrieval performance. This setup allows us to directly assess how query mixing ($\alpha$) and entropy thresholding ($\tau$) interact to balance efficiency and coverage.

\textbf{First, the query–centroid mixing coefficient $\alpha$ controls how strongly recollection expands beyond the original probe.} In Recall@5, moderate $\alpha$ (around $0.3$–$0.6$) yields the best results, such as MiniLM improving overall recall to $\sim$0.47, while extreme values ($\alpha=0.0$ or $\alpha=1.0$) reduce stability. For Recall@10, performance is less sensitive, but mid-range $\alpha$ still avoids degradation observed at the boundaries. This indicates that balanced mixing is crucial: too little restricts recollection, too much overwhelms with noisy expansions.

\textbf{Second, the entropy threshold $\tau$ regulates the switch between Familiarity and Recollection.} Low $\tau$ values push the system toward frequent recollection, leading to higher gains in complex categories but also exposing variance. For instance, MiniLM Recall@10 achieves its highest values ($\sim$0.60) when $\tau$ is set around 0.2–0.25, while overly high thresholds ($\tau \geq 0.35$) reduce adaptivity, as retrieval defaults prematurely to Familiarity. The results highlight that moderate entropy gating achieves the best trade-off between efficiency and contextual depth.

\subsection{Sensitivity Analysis of $B$ and $F$}
\label{app:B_F}
Figure~\ref{fig:B_F} reports the effect of beam width $B$ and fanout $F$ on retrieval performance in LongMemEval-S and LongMemEval-M, under the retrievaler \texttt{multi-qa-MiniLM-L6-cos-v1} of recollection retrieval.  We observe three consistent patterns. 

\textbf{First, increasing $F$ generally reduces recall@5.}  
As shown in the top-left and bottom-left plots, recall@5 steadily drops when $F$ grows, since higher fanout expands the search too widely, diluting precision in top-ranked results. For example, in LongMemEval-S, recall@5 decreases from 0.72 ($F=1$) to below 0.60 when $F=4$.

\textbf{Second, recall@10 is more stable under moderate $F$, but declines when $B$ or $F$ are too large.}  
The middle plots show that recall@10 peaks around $F=1$ or $F=2$ (e.g., 0.83 in LongMemEval-S, 0.56 in LongMemEval-M), but drops once $F=3$ or $F=4$, reflecting over-expansion and redundancy. This suggests that moderate fanout provides useful diversification, while excessive expansion introduces noise.

\textbf{Third, recall@50 is robust and even improves with higher $F$.}  
As shown in the rightmost plots, recall@50 remains very high in LongMemEval-S (above 0.98 across all settings), and increases slightly in LongMemEval-M (up to 0.76 when $B=3,F=2$). This indicates that large fanout helps cover more relevant memories at longer retrieval depths, consistent with recollection’s role of broad exploration.

Overall, these results demonstrate that small beam width and low fanout ($B=2$ or $3$, $F=1$ or $2$) offer the best balance between early precision and broad coverage, aligning with the intuition of controlled, stepwise recollection.  
\begin{figure}[h]
    \centering
    \includegraphics[width=1.00\linewidth]{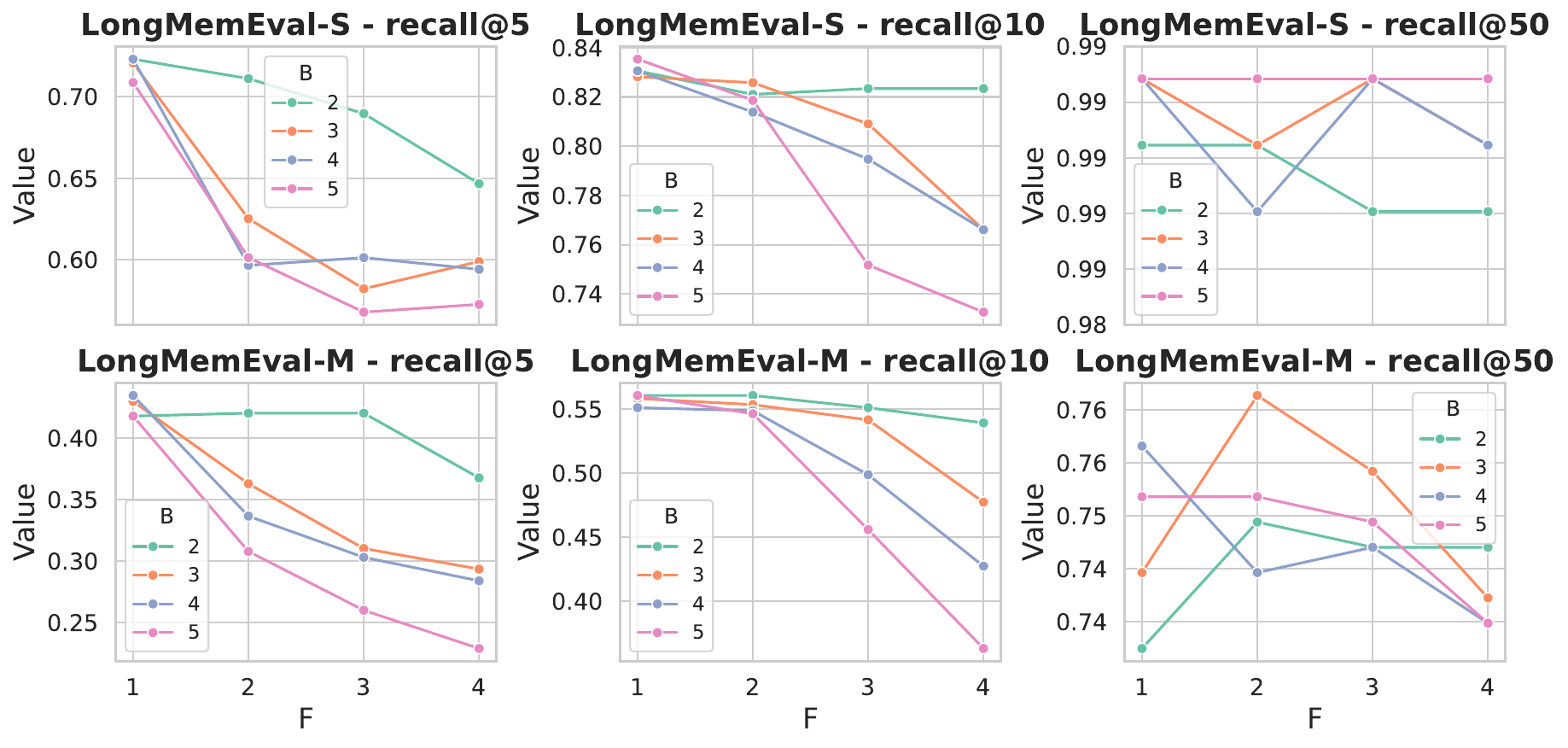}
    \caption{Hyperparameter $B$ and $F$ sensitivity of Recollection retrieval on LongMemEval-S and LongMemEval-M.} 
    \label{fig:B_F}
\vspace{-10pt}
\end{figure}

\subsection{Impact of $\alpha$ Under Varying Retrieval Size $K$}
\label{app:alpha_k}

\paragraph{Effect of $\alpha$ on Retrieval.}

To investigate the role of the mixing weight $\alpha$, we conduct a sensitivity study on the LongMemEval-M dataset under the \textbf{Recollection} retrieval, measuring Recall@5, Recall@10, and Recall@50. As shown in Figure~\ref{fig:alpha_sensitivity}, the impact of $\alpha$ exhibits a clear dependence on retrieval depth. For Recall@5, which emphasizes short-hop retrieval precision, performance decreases steadily as $\alpha$ grows, suggesting that placing excessive weight on the original query suppresses exploratory expansion and thereby reduces the system’s ability to capture nearby but diverse evidence. In contrast, Recall@50 improves markedly with larger $\alpha$, indicating that stronger query–centroid mixing facilitates broader exploration and enables the retriever to recover more distant but relevant memories over long retrieval chains. Recall@10 demonstrates an intermediate pattern, peaking around $\alpha=0.3$--$0.5$ before declining, which reflects the delicate balance between preserving the specificity of the original query and promoting contextual expansion through recollection. 

These results highlight a fundamental trade-off: smaller values of $\alpha$ favor precision in short-path retrieval, while larger values enhance coverage in long-path retrieval. The presence of an intermediate optimum for Recall@10 further confirms that no single setting universally dominates across depths, underscoring the need to calibrate $\alpha$ according to application demands. More broadly, this sensitivity analysis validates the design choice in RF-Mem to treat $\alpha$ as a tunable parameter, enabling the framework to flexibly adjust the balance between efficiency and breadth when navigating different levels of the memory space.

\begin{figure}[h]
    \centering
    \includegraphics[width=1.00\linewidth]{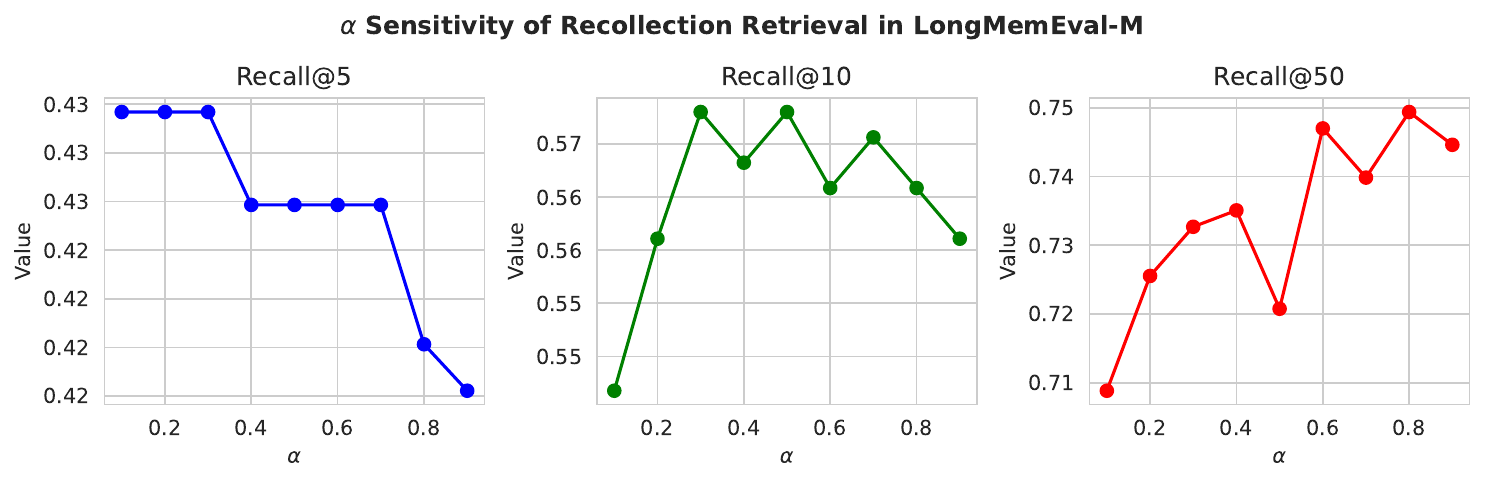}
    \caption{Effect of $\alpha$ on short- vs. long-path retrieval performance (Recall@$K$).}
    \label{fig:alpha_sensitivity}
\vspace{-10pt}
\end{figure}

\subsection{Sensitivity Analysis of Retrieval Size $K$}
\label{app:pm_k}
To further examine the sensitivity of RF-Mem to the probe retrieval parameter $K$, 
we vary $K$ from 5 to 50 and report the resulting accuracy under different corpus sizes (32K, 128K, and 1M) in PersonaMem. 
As shown in Figure~\ref{fig:probe_k}, the overall performance of RF-Mem remains consistently stable across a wide range of $K$, 
demonstrating the robustness of our familiarity-based regulation. 
In contrast, the baselines exhibit stronger fluctuations: for Familiarity, accuracy quickly rises with $K$ but plateaus once $K \geq 10$, 
while Recollection benefits from larger $K$ due to finer entropy resolution, yet its gains taper off and even slightly degrade after $K=20$. 

RF-Mem combines the advantages of both pathways, achieving higher accuracy than either baseline across all corpus scales, 
and crucially avoids the over-expansion problem of Recollection by adaptively invoking recollection only when necessary. 
Notably, the performance curves at 128K and 1M confirm that the improvement saturates beyond moderate probe sizes, 
suggesting that small values of $K$ (e.g., 10–20) are already sufficient for effective familiarity estimation and adaptive switching. 
These results highlight that RF-Mem does not depend on finely tuned probe parameters, 
making it both robust and practical for deployment in large-scale personalized memory retrieval scenarios.

\begin{figure}[h]
    \centering
    \includegraphics[width=0.90\linewidth]{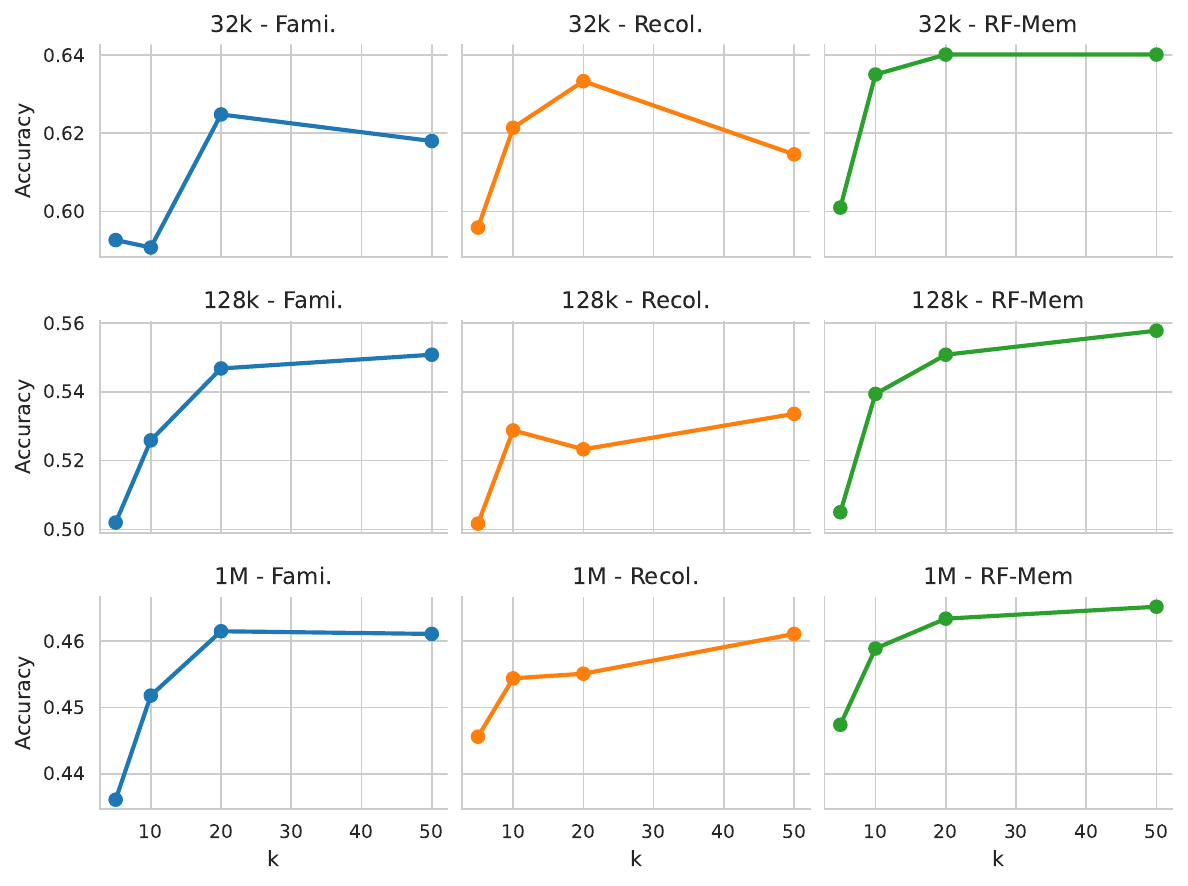}
    \caption{Effect of probe size $K$ on retrieval performance across different corpus scales.}
    \label{fig:probe_k}
\vspace{-10pt}
\end{figure}

\newpage

\subsection{\revise{Learning the Strategy Selection Mechanism}}
\revise{
Although the gating thresholds in RF-Mem are manually specified, it is important to assess whether the routing policy can also be learned from data. To this end, we construct a binary classification task in which each memory instance is represented by the two gating signals, namely the mean similarity $\bar{s}$ of the probe retrieval and the entropy $H(p)$ over the top-$K$ similarity distribution. The classifier predicts whether the familiarity path or the recollection path yields higher retrieval quality. This setting allows us to examine (i) how much supervision is needed to approximate the decision boundary and (ii) whether the two handcrafted signals contain sufficient information for reliable automatic routing.

\paragraph{Training data.}
We utilize two subsets of PersonaMem: the 32k corpus (589 samples) and the 128k corpus (2727 samples). For 32k, the first 400 samples are used for training and the remaining 189 for evaluation; for 128k, the first 2000 samples are allocated for training and the remaining 727 for evaluation. Only cases in which one path clearly outperforms the other are retained, forming a clean binary-labeled dataset. Table~\ref{tab:gate-data} summarizes the distribution of positive and negative labels.

\paragraph{Training Details.}
A three-layer MLP is trained as a binary classifier, with hidden dimensions of 32 and 16, and a final sigmoid output unit. ReLU activations are used in the hidden layers. Training follows an 80/20 train–validation split.
}
\begin{table}[h]
\centering
\caption{\revise{Training data statistics for the learned gating classifier.}}
\label{tab:gate-data}
\begin{tabular}{lccc|>{\columncolor{blue!5}}c}
\toprule
\textbf{Dataset} & \textbf{Fami.\ win} & \textbf{Recol.\ win} & \textbf{Tie}& \textbf{Training Samples} \\
\midrule
PersonaMem-32k  & 34  & 41  & 325 & 75 (34+41) \\
PersonaMem-128k & 178 & 185 & 1637& 363 (178+185) \\
\bottomrule
\end{tabular}
\end{table}

\revise{
\paragraph{Experimental findings.}
Tables~\ref{tab:gate-32k} and~\ref{tab:gate-128k} report detailed performance across evaluation categories. We can find \textbf{while the hand-tuned thresholds remain slightly superior, the learned gate shows clear improvements when trained on more data, demonstrating its potential applicability in real-world scenarios where larger-scale logs are accessible.}
On the smaller 32k set, the learned gate shows high variance across runs and does not consistently outperform simple heuristics, indicating that limited data makes the decision boundary difficult to estimate reliably.  
When moving to the larger 128k set, the learned gate becomes considerably more stable and approaches the performance of the manually tuned mechanism.  
These results suggest that the two handcrafted signals already capture a meaningful structural separation between the two retrieval modes, and that automatic tuning becomes increasingly effective as more user data is available.
}

\begin{table*}[h]
\centering
\caption{\revise{Results of the learned strategy selector on PersonaMem-32k (189 test samples).}}
\label{tab:gate-32k}
\setlength{\tabcolsep}{0.5pt}
\begin{tabular}{l|>{\columncolor{blue!5}}M{1.3cm}|M{1.3cm}M{1.25cm}M{1.3cm}M{1.25cm}M{1.25cm}M{1.3cm}|M{1.25cm}M{1.0cm}M{1.0cm}}
\toprule
\textbf{Method} & \textbf{Overall} & \textbf{Revisit Reasons} & \textbf{Shared Facts} & \textbf{Track Evolution} & \textbf{Aligned Recs} & \textbf{Latest Prefs} & \textbf{New Scenarios} & \textbf{New Ideas} & \textbf{Num. of Famili.} & \textbf{Num. of Recol.} \\
\midrule
Dense & 0.6296 & 0.9688 & 0.7297 & 0.6327 & 0.8125 & 0.2857 & 0.5385 & 0.2286 & 189 & 0 \\
Recol. & \underline{0.6667} & 0.9688 & 0.7297 & 0.7551 & 0.8750 & 0.1429 & 0.7297 & 0.2286 & 0 & 189 \\
RF-Mem & \textbf{0.6720} & 1.0000 & 0.7838 & 0.6939 & 0.8750 & 0.1429 & 0.6154 & 0.2571 & 83 & 106 \\
\makecell{RF-Mem \\ (Learned)} & 0.6482 $\pm$0.0067 & 0.9688 $\pm$0.0000 & 0.7514 $\pm$0.0114 & 0.6878 $\pm$0.0255 & 0.8750 $\pm$0.0000 & 0.1429 $\pm$0.0000 & 0.5385 $\pm$0.0000 & 0.2286 $\pm$0.0000 & 97.5 $\pm$6.70 & 91.5 $\pm$6.70 \\
\bottomrule
\end{tabular}
\end{table*}

\begin{table*}[h]
\centering
\caption{\revise{Results of the learned strategy selector on PersonaMem-128k (727 test samples).}}
\label{tab:gate-128k}
\setlength{\tabcolsep}{0.5pt}
\begin{tabular}{l|>{\columncolor{blue!5}}M{1.3cm}|M{1.3cm}M{1.25cm}M{1.3cm}M{1.25cm}M{1.25cm}M{1.3cm}|M{1.25cm}M{1.0cm}M{1.0cm}}
\toprule
\textbf{Method} & \textbf{Overall} & \textbf{Revisit Reasons} & \textbf{Shared Facts} & \textbf{Track Evolution} & \textbf{Aligned Recs} & \textbf{Latest Prefs} & \textbf{New Scenarios} & \textbf{New Ideas} & \textbf{Num. of Famili.} & \textbf{Num. of Recol.} \\
\midrule
Dense & 0.5131 & 0.7742 & 0.6410 & 0.6162 & 0.5176 & 0.5122 & 0.3529 & 0.3043 & 727 & 0 \\
Recol. & 0.5158 & 0.7634 & 0.5897 & 0.6162 & 0.5647 & 0.5366 & 0.3824 & 0.2609 & 0 & 727 \\
RF-Mem& \textbf{0.5199} & 0.7527 & 0.6410 & 0.6162 & 0.5059 & 0.5463 & 0.3824 & 0.2971 & 402 & 325 \\
\makecell{RF-Mem \\ (Learned)} & \underline{0.5175} \underline{$\pm$0.0039} & 0.7591 $\pm$0.0055 & 0.6180 $\pm$0.0255 & 0.6091 $\pm$0.0048 & 0.5553 $\pm$0.0182 & 0.5400 $\pm$0.0108 & 0.3794 $\pm$0.0152 & 0.2718 $\pm$0.0070 & 334.4 $\pm$28.33 & 392.6 $\pm$28.33 \\
\bottomrule
\end{tabular}
\end{table*}

\subsection{Case Study.}  
\label{app:case}


\subsubsection{\revise{Recollection-Path Winning Case}}
To illustrate the difference between one-shot familiarity retrieval and recollection retrieval, we conduct a case study on the \textbf{PersonaMem} dataset. For clarity, we set $B=2$ and $F=2$ in the recollection process. Figure~\ref{fig:case} presents the outputs under different retrieval modes for the query \textit{``Can you suggest some new evidence-based practices to explore for healthcare decision-making?’’}. The \emph{Familiarity} retriever directly returns the top-10 highest-scoring entries, capturing salient but fragmented memories such as mentions of ``conventional medicine'' or ``therapeutic modalities.'' While efficient, this strategy surfaces isolated fragments—sometimes with noisy mentions like ``trying a new healthy recipe''—and fails to integrate them into a coherent chain. As a result, crucial contextual cues that span across sessions remain overlooked, highlighting the inherent limitation of one-shot retrieval.

By contrast, the \emph{Recollection} process unfolds in multiple rounds. At each step, retrieved items are clustered into semantically coherent groups, and their centroids are blended with the query to form new probes. As shown in Figure~\ref{fig:case}, this branching expansion progressively uncovers complementary evidence: for instance, r=1 surfaces ``modern science blending with age-old methods,'' r=2 brings in references to ``gathering patient history'' and ``reviewing health records,'' and r=3 integrates higher-level anchors like ``structured methodology and evidence-based practices.'' This stepwise enrichment not only reduces redundancy by grouping similar items but also reconstructs temporally dispersed details into a coherent memory trace. Compared with the static list from Familiarity retrieval, RF-Mem’s recollection branch demonstrates a chain-like reconstruction process, aligning with the dual-process theory by simulating deliberate, effortful recall. Ultimately, this yields more diagnostic and contextually grounded evidence for answering the user’s query.

\begin{figure}[ht]
    \centering
    \includegraphics[width=0.99\linewidth]{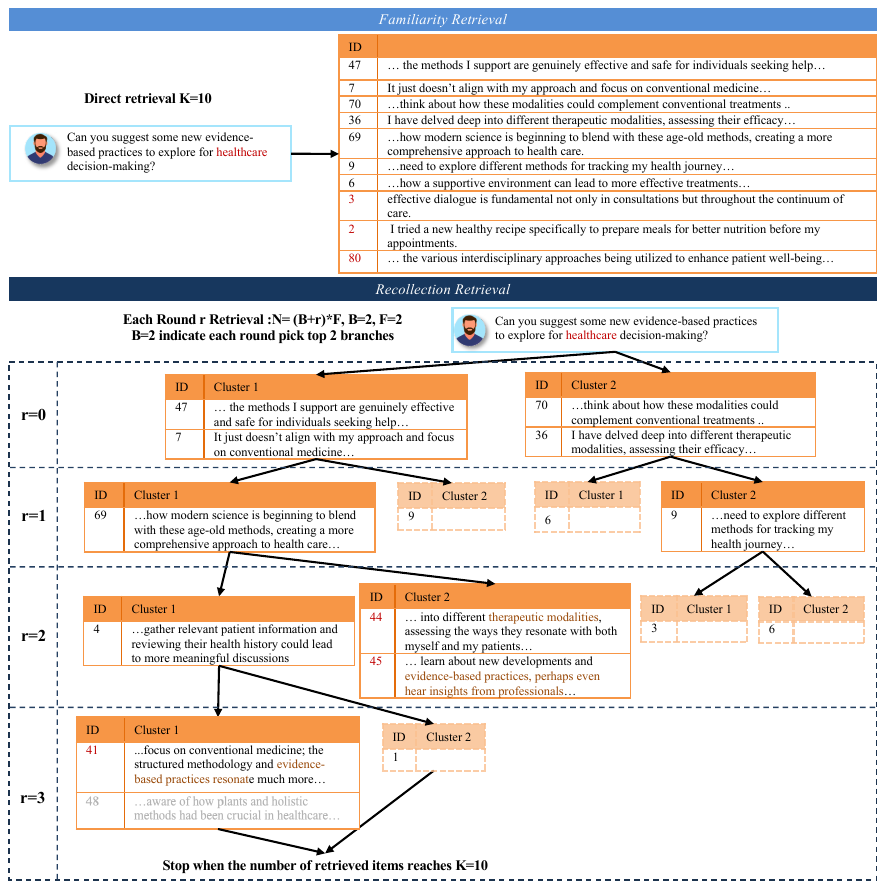}
    \caption{Case study from PersonaMem: comparison between \emph{Familiarity} and \emph{Recollection} retrieval \revise{where Recollection wins}. Familiarity surfaces salient but fragmented evidence, while Recollection progressively reconstructs temporally distributed details via clustering and query refinement.}
    \label{fig:case}
    \vspace{-10pt}
\end{figure}

\subsubsection{\revise{Familiarity-Path Winning Case}}
\revise{To analysis the failure case, we again compare one-shot \emph{Familiarity} retrieval with the multi-round \emph{Recollection} process on the PersonaMem dataset. As shown in Figure~\ref{fig:bad_case}, we show the retrieval result of dual paths for the query \textit{``I'm considering developing a tool to manage finances while traveling. How could I ensure it helps me prioritize experiences like local cuisine or guided tours?’’}.
\emph{Familiarity} retrieval outperforms \emph{Recollection} in serving the user’s intent. The one-shot Familiarity retriever returns a broad and query-aligned set of memories, many of which explicitly mention `local experiences', `finance management apps', or `last trip'. These results directly address the user’s need to “prioritize experiences while traveling,” illustrating that high-scoring surface cues can sometimes be sufficient for intent coverage.

In contrast, the multi-round \emph{Recollection} process drifts along a different semantic trajectory. Although its iterative clustering-and-refinement mechanism successfully produces deeper and more structured traces, the retrieved branches gradually converge toward `long-term financial habits' and `personal finance management'. This indicates that the recollection trajectory can overfit to dominant semantic clusters in the memory corpus, especially when several high-density branches are thematically coherent but misaligned with the user’s immediate intent. As a result, the recollection path becomes increasingly anchored in financial-management narratives, overlooking the travel-related aspects of the query.

Despite being a bad case for retrieval accuracy, this example is instructive. It exposes a core trade-off of deliberative recollection: the mechanism excels at reconstructing temporally dispersed, conceptually rich memory chains, yet it may over-emphasize internal coherence at the cost of task-oriented alignment. This case highlights the importance of balancing depth with intent fidelity, and underscores the need for more precise familiarity-uncertainty–guided strategy selection in future work.
}

\begin{figure}[ht]
    \centering
    \includegraphics[width=0.99\linewidth]{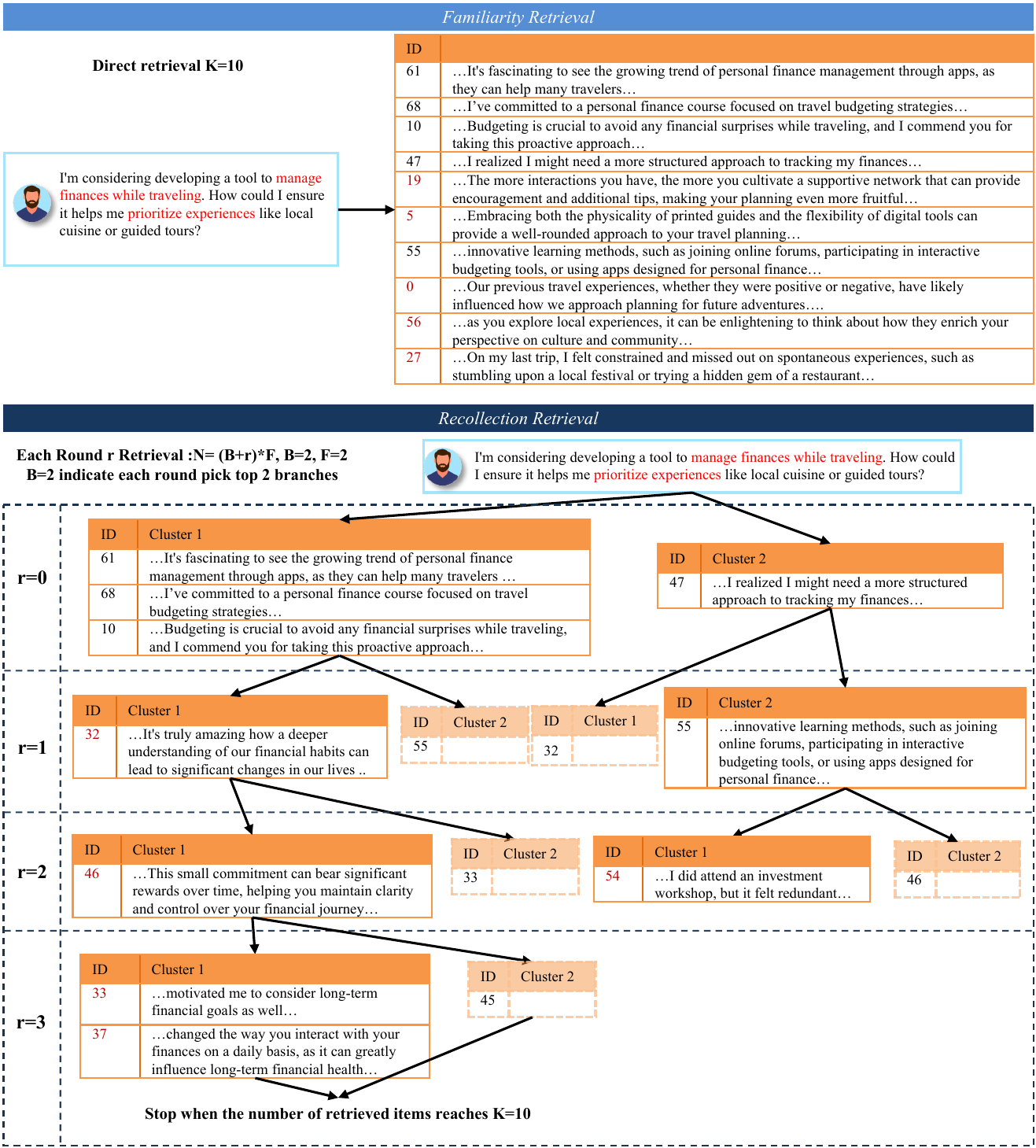}
    \caption{\revise{Case study from PersonaMem: comparison between \emph{Familiarity} and \emph{Recollection} retrieval where Familiarity wins. Familiarity produces broader and more relevant evidence aligned with prioritizing travel experiences, while Recollection progressively drills into long-term financial management and drifts away from the user’s actual intent.}}
    \label{fig:bad_case}
    \vspace{-10pt}
\end{figure}

\subsection{\revise{Alternative Study}}

\revise{
    In this section, we evaluate several alternative recollection mechanisms to test the robustness of RF-Mem beyond the proposed KMeans-based and $\alpha$-mix design. We replace the centroid module with DBSCAN and Spectral Clustering, replace the $\alpha$-mix with gated mixing, and replace the entire module with graph-based expansion, while keeping all other components fixed. The comparison results in Table~\ref{tab:alter_recollection_results} highlight the generality of RF-Mem across diverse structural assumptions.
}

\begin{table}[h]
\vspace{-10pt}
\centering
\small
\caption{\revise{Retrieval performance comparison under different recollection strategies.}}
\label{tab:alter_recollection_results}
\setlength{\tabcolsep}{1pt}
\renewcommand{\arraystretch}{0.8}
    \begin{tabular}{p{1.3cm}|M{1.2cm}M{1.2cm}M{1.2cm}M{1.2cm}>{\columncolor{blue!5}}M{1.1cm}
                            |M{1.2cm}M{1.2cm}M{1.2cm}M{1.2cm}>{\columncolor{blue!5}}M{1.1cm}}
        \toprule
        \textbf{Metrics}  &
        \multicolumn{5}{c|}{\textbf{Recall@5}} &
        \multicolumn{5}{c}{\textbf{Recall@10}} \\
        \cmidrule(lr){2-6} \cmidrule(l){7-11}
        \textbf{Methods}
        & \textbf{Basic Info} & \textbf{Social Info} & \textbf{Pref Easy} & \textbf{Pref Hard} & \textbf{Overall}
        & \textbf{Basic Info} & \textbf{Social Info} & \textbf{Pref Easy} & \textbf{Pref Hard} & \textbf{Overall}   \\
        \midrule
                \rowcolor{gray!15}
                \multicolumn{11}{c}{\textit{\textbf{Basic KMeans}}} \\
                Famili.
                 & 0.4515 & 0.4852 & 0.4904 & 0.3659 & 0.4484 
                 & 0.5879 & 0.6220 & 0.6442 & 0.5561 & 0.5964 \\
                
                Recol.
                 & 0.4379 & 0.4903 & 0.5128 & 0.3854 & 0.4491
                 & 0.5924 & 0.6859 & 0.5659 & 0.6267 & 0.6062 \\
                
                RF-Mem 
                 & 0.4788 & 0.5091 & 0.4872 & 0.3854 & \textbf{0.4701}
                 & 0.5924 & 0.6799 & 0.5707 & 0.6267 & \textbf{0.6071} \\
                \midrule
                \rowcolor{gray!15}
                \multicolumn{11}{c}{\textit{\textbf{DBSCAN cluster (KMeans alternative)}}} \\
                Recol.
                 & 0.4424 & 0.4940 & 0.4872 & 0.3512 & 0.4431
                 & 0.6045 & 0.6079 & 0.6891 & 0.5366 & 0.6028\\
                
                RF-Mem
                 & 0.4879 & 0.5129 & 0.4744 & 0.3463 & \textbf{0.4669}
                 & 0.5659 & 0.6079 & 0.6667 & 0.6015 & \textbf{0.6040} \\
                \midrule
                
                \rowcolor{gray!15}
                \multicolumn{11}{c}{\textit{\textbf{Spectral cluster (KMeans alternative)}}} \\
                Recol.
                 & 0.4591 & 0.4739 & 0.5000 & 0.3756 & 0.4522
                 & 0.6045 & 0.5978 & 0.6474 & 0.5366 & 0.5957\\
                
                RF-Mem 
                 & 0.4818 & 0.4928 & 0.4872 & 0.3707 & \textbf{0.4651}
                 & 0.6015 & 0.6041 & 0.6474 & 0.5610 & \textbf{0.6001} \\
                \midrule
                
                \rowcolor{gray!15}
                \multicolumn{11}{c}{\textit{\textbf{Gate ($\alpha$-mix alternative)}}} \\
                Recol.
                 & 0.4439 & 0.4739 & 0.5128 & 0.3902 & 0.4491
                 & 0.5924 & 0.6016 & 0.6795 & 0.5317 & 0.5936\\
                
                RF-Mem
                 & 0.4667 & 0.4928 & 0.5000 & 0.3756 & \textbf{0.4602}
                 & 0.5924 & 0.6079 & 0.6667 & 0.5610 & \textbf{0.5988} \\
                \midrule
                
                \rowcolor{gray!15}
                \multicolumn{11}{c}{\textit{\textbf{Graph + Breadth-First Search (whole alternative)}}} \\
                Recol.
                 & 0.4167 & 0.4739 & 0.4872 & 0.3805 & 0.4314 
                 & 0.5591 & 0.5887 & 0.5887 & 0.5366 & 0.5722 \\
                
                RF-Mem
                 & 0.4258 & 0.4739 & 0.4872 & 0.3756 & \textbf{0.4349}
                 & 0.5742 & 0.6220 & 0.6442 & 0.5561 & \textbf{0.5899}\\
        \bottomrule
        \end{tabular}
\vspace{-10pt}
\end{table}

\subsubsection{\revise{DBSCAN Recollection (KMeans alternative)}}
\revise{
    \textbf{Experiment setup.}
    To examine whether the assumptions of KMeans (balanced and approximately spherical clusters) influence recollection, we evaluate DBSCAN~\citep{ester1996density} as a density-based alternative. DBSCAN identifies arbitrarily shaped clusters and automatically detects noise points, providing a contrasting clustering geometry. The retrieval backbone, mixing rule, and evaluation setup remain unchanged to ensure a clean comparison between recollection mechanisms.
    
    \textbf{Findings.}
    Results in Table~\ref{tab:alter_recollection_results} show that DBSCAN-based recollection performs competitively, and RF-Mem (DBSCAN) consistently improves over its recollection baseline. This demonstrates that the effectiveness of $\alpha$-mixing is not tied to KMeans-specific assumptions and remains stable under density-driven neighborhood structures.
}

\subsubsection{\revise{Spectral Clustering Recollection (KMeans alternative)}}

\revise{
    \textbf{Experiment setup.}
    We further evaluate Spectral~\citep{ng2001spectral} Clustering, which identifies clusters via graph Laplacian eigenvectors and effectively models manifold-shaped structures. This setup tests whether recollection remains stable when memory clusters deviate from centroid-based geometry. All configurations follow the evaluation results summarized in Table~\ref{tab:alter_recollection_results}.
    
    \textbf{Findings.}
    As reported in Table~\ref{tab:alter_recollection_results}, RF-Mem (Spectral) achieves consistent gains over the Spectral baseline, improving R@5 Overall from 0.4522 to 0.4651. These results indicate that $\alpha$-mixing generalizes well across diverse cluster geometries and maintains its geometry-respecting update behavior even under manifold-structured partitions.
}

\subsubsection{\revise{Nonlinear Gated Mixing ($\alpha$-mix alternative)}}

\revise{
    \textbf{Experiment setup.}
    Motivated by nonlinear interpolation strategies, we implement a gated mixing variant in which the update coefficient is determined by a sigmoid gate applied to the similarity between the current query representation and the cluster centroid. Specifically, given the query vector $\mathbf{x}^{(r)}$ at iteration $r$  and the corresponding centroid $\mathbf{g}_b^{(r)}$, the gated interpolation is  defined as:
    \begin{equation}
        g(\mathbf{x}^{(r)}, \mathbf{g}_b^{(r)}) = 
        \sigma\!\left( \mathbf{x}^{(r)\top} \mathbf{g}_b^{(r)} \right),
    \end{equation}
    and the updated query is computed via:
    \begin{equation}
        \mathbf{x}_b^{(r+1)} = 
        \mathrm{norm}\!\left(
        g\,\mathbf{x}^{(r)} + (1-g)\,\mathbf{g}_b^{(r)} + \mathbf{x}_t
        \right),
    \end{equation}
    where $\sigma(\cdot)$ denotes the sigmoid function and $\mathbf{x}_t$ is the original query. This formulation introduces a nonlinear, content-adaptive interpolation mechanism that contrasts with the proposed linear $\alpha$-mixing. All experiments follow the unified evaluation protocol in Table~\ref{tab:alter_recollection_results} to ensure comparability.
    
    \textbf{Findings.}
    As shown in Table~\ref{tab:alter_recollection_results}, gated mixing improves over Familiarity retrieval but does not surpass the effectiveness of the proposed $\alpha$-mixing rule. RF-Mem (Gate) obtains 0.4602 R@5 Overall, compared with 0.4701 for our KMeans-based implementation. The performance gap is particularly noticeable in Social Info and Pref-Hard subsets, where recollection plays a critical role. These results indicate that although nonlinear interpolation is a reasonable alternative, $\alpha$-mixing provides a stable and geometry-aligned update, consistent with its interpretation as a manifold-aware update rule on the unit hypersphere.
}

\subsubsection{\revise{Graph-Based Recollection (whole alternative)}}

\revise{
    \textbf{Experiment setup.}
    Finally, we examine a structure-free alternative by constructing a KNN graph~\citep{zhang2025survey2} over a user memory corpus and performing Breadth-First Search (BFS) expansion. This mechanism collects local neighborhoods via graph traversal rather than via clustering. The full evaluation follows the protocol summarized in Table~\ref{tab:alter_recollection_results}.
    
    \textbf{Findings.}
    Table~\ref{tab:alter_recollection_results} shows that graph-based recollection performs weaker than clustering-based recollection. Nevertheless, RF-Mem consistently improves over its graph baseline, demonstrating the robustness. The comparison suggests that clustering more effectively aggregates semantically coherent neighborhoods, whereas BFS may over-expand into dense yet irrelevant regions.
}

\subsection{\revise{Full-exploration Study}}
\revise{
    To further examine whether the effectiveness of RF-Mem depends on the early-stop rule used in the recollection path, we conduct an additional analysis where the expansion process is allowed to continue until all reachable candidates (within a predefined maximum depth) have been explored.
}

\begin{table}[h]
\vspace{-10pt}
\centering
\small
\caption{\revise{Retrieval performance under full exploration.}}
\label{tab:explor_recollection_results}
\setlength{\tabcolsep}{1pt}
\renewcommand{\arraystretch}{0.8}
    \begin{tabular}{p{1.3cm}|M{1.2cm}M{1.2cm}M{1.2cm}M{1.2cm}>{\columncolor{blue!5}}M{1.1cm}
                            |M{1.2cm}M{1.2cm}M{1.2cm}M{1.2cm}>{\columncolor{blue!5}}M{1.1cm}}
        \toprule
        \textbf{Metrics}  &
        \multicolumn{5}{c|}{\textbf{Recall@5}} &
        \multicolumn{5}{c}{\textbf{Recall@10}} \\
        \cmidrule(lr){2-6} \cmidrule(l){7-11}
        \textbf{Methods}
        & \textbf{Basic Info} & \textbf{Social Info} & \textbf{Pref Easy} & \textbf{Pref Hard} & \textbf{Overall}
        & \textbf{Basic Info} & \textbf{Social Info} & \textbf{Pref Easy} & \textbf{Pref Hard} & \textbf{Overall}   \\
        \midrule
                \rowcolor{gray!15}
                \multicolumn{11}{c}{\textit{\textbf{Basic KMeans (Early-Stop)}}} \\
                Famili.
                 & 0.4515 & 0.4852 & 0.4904 & 0.3659 & 0.4484 
                 & 0.5879 & 0.6220 & 0.6442 & 0.5561 & 0.5964 \\
                
                Recol.
                 & 0.4379 & 0.4903 & 0.5128 & 0.3854 & 0.4491
                 & 0.5924 & 0.6859 & 0.5659 & 0.6267 & 0.6062 \\
                
                RF-Mem 
                 & 0.4788 & 0.5091 & 0.4872 & 0.3854 & 0.4701
                 & 0.5924 & 0.6799 & 0.5707 & 0.6267 & 0.6071 \\
                 
                \midrule
                \rowcolor{gray!15}
                \multicolumn{11}{c}{\textit{\textbf{Basic KMeans (Fully Exploration)}}} \\
                Recol.
                 & 0.4530 & 0.4827 & 0.5128 & 0.3805 & 0.4537
                 & 0.6015 & 0.6204 & 0.6763 & 0.5415 & 0.6036 \\
                
                RF-Mem
                 & 0.4848 & 0.5016 & 0.5000 & 0.3756 & \textbf{0.4709}
                 & 0.6015 & 0.6267 & 0.6667 & 0.5659 & \textbf{0.6083} \\
                 
                \midrule
                \rowcolor{gray!15}
                \multicolumn{11}{c}{\textit{\textbf{Graph + Breadth-First Search (Early-Stop)}}} \\
                Recol.
                 & 0.4167 & 0.4739 & 0.4872 & 0.3805 & 0.4314 
                 & 0.5591 & 0.5887 & 0.5887 & 0.5366 & 0.5722 \\
                
                RF-Mem
                 & 0.4258 & 0.4739 & 0.4872 & 0.3756 & 0.4349
                 & 0.5742 & 0.6220 & 0.6442 & 0.5561 & 0.5899\\

                \midrule
                \rowcolor{gray!15}
                \multicolumn{11}{c}{\textit{\textbf{Graph + Breadth-First Search  (Fully Exploration)}}} \\
                Recol.
                 & 0.4530 & 0.4701 & 0.4776 & 0.3902 & 0.4486
                 & 0.5818 & 0.5642 & 0.6635 & 0.5659 & 0.5841 \\
                
                RF-Mem 
                 & 0.4758 & 0.4890 & 0.4776 & 0.3854 & \textbf{0.4629}
                 & 0.5924 & 0.6220 & 0.6442 & 0.5561 & \textbf{0.5986} \\
                 
        \bottomrule
        \end{tabular}
\vspace{-10pt}
\end{table}

\revise{
    \textbf{Experiment setup.} Unlike the default setting, in which the recollection process terminates once the top-$K$ quota is satisfied, this full-exploration variant aggregates the complete expanded set and performs a final dense re-ranking step over all collected items. This design enables us to test whether the proposed recollection mechanism remains robust when given a substantially larger search budget. Table~\ref{tab:explor_recollection_results} summarizes the results for both the KMeans-based and Graph-based recollection mechanisms under the early-stop and full-exploration settings. Across all categories, full exploration increases the coverage of semantically relevant nodes, providing a stricter evaluation of the recollection path.
    
    \textbf{Findings.}
    \textbf{First}, we observe that allowing full exploration followed by a global re-ranking step consistently improves RF-Mem compared to the early-stopped variant. For instance, RF-Mem (Full) achieves an Overall Recall@5 of 0.4709 under KMeans, surpassing the early-stop score of 0.4701. This demonstrates that RF-Mem continues to benefit from deeper recollection when additional evidence is made available, reinforcing the stability and scalability of the proposed $\alpha$-mixing updates.
    \textbf{Second}, under the full-exploration setting, the graph-based variant shows clear improvements over both its early-stopped counterpart and the Familiarity (dense) baseline. RF-Mem (Graph+bfs, Full) obtains an Overall Recall@10 of 0.5986, outperforming the early-stop result of 0.5899. This indicates that graph-guided recollection, despite being weaker than KMeans in the default setting, becomes more competitive when provided with additional traversal depth. These findings support our broader claim that deliberate, structure-aware exploration of the memory space offers a valuable and robust alternative formulation for retrieval.
}

\subsection{\revise{Routing Robustness Across Embedding Models}}
\revise{
    To assess the sensitivity of RF-Mem to the underlying embedding model, we investigate how embedding quality and calibration influence routing decisions. Because both the similarity-based uncertainty measure and the recollection path depend on the geometry of the embedding space, understanding the extent to which different retrievers alter routing behavior or introduce failure risks is crucial for evaluating the robustness of the framework.
    
    \textbf{Experiment Setup}
    To answer the question of whether routing accuracy varies across different retrievers, we measure routing statistics for three embedding models on PersonaBench: \texttt{multi-qa-MiniLM-L6-cos-v1}, \texttt{all-mpnet-base-v2}, and \texttt{bge-base-en-v1.5}. For each model, we record (i) routing frequencies into the familiarity and recollection branches, and (ii) retrieval accuracy. All other components of RF-Mem remain fixed to isolate the effect of the embedding backbone.
    
    \textbf{Findings.} Table~\ref{tab:routing_retrievers} summarizes the result. We can find: \textbf{First}, Stronger retrievers route more queries into the recollection path, while weaker retrievers rely more heavily on fast familiarity. MiniLM and MPNet trigger substantially more recollection transitions, indicating that their embedding spaces provide more reliable cluster structures for iterative refinement. In contrast, BGE, which exhibits less stable similarity distributions, routes a significantly larger number of queries to familiarity.
    \textbf{Second}, the routing behavior demonstrates that the mechanism adapts to the embedding geometry in a principled way. Stronger retrievers provide clearer cluster boundaries, which encourages recollection, whereas noisier or weakly calibrated embeddings shift routing toward familiarity as a more reliable fallback. 
}

\begin{table}[h]
\centering
\small
\caption{\revise{Routing statistics and retrieval performance across different embedding models.}}
\label{tab:routing_retrievers}
\setlength{\tabcolsep}{5pt}
\renewcommand{\arraystretch}{0.9}
\begin{tabular}{l|cccc}
\toprule
\textbf{Retriever} & 
\textbf{Time} & 
\textbf{Route$\rightarrow$Famili.} & 
\textbf{Route$\rightarrow$Recol.} & 
\textbf{R@5 Overall}  \\
\midrule
multi-qa-MiniLM-L6-cos-v1 
& 9.16ms & 60 & 203 & 0.4701  \\
all-mpnet-base-v2 
& 8.33ms & 15 & 248 & 0.4009  \\
bge-base-en-v1.5 
& 10.14ms & 140 & 123 & 0.3836 \\
\bottomrule
\end{tabular}
\vspace{-10pt}
\end{table}

\section{Pseudocode of RF-Mem}
\label{app:pse_code}

\noindent\textbf{RF-Mem algorithm.} Algorithm~\ref{alg:rfmem-main} begins with a short probe to estimate retrieval certainty and then switches between a fast \emph{Familiarity} path and a deliberate \emph{Recollection} path. Given the query embedding $\mathbf{x}_t$, the retriever returns the top-$k_p$ probe candidates and produces a temperature–scaled score distribution $p$; we compute the list entropy $H(p)$ and the mean score $\bar{s}$, then apply thresholds $(\theta_{\text{high}},\theta_{\text{low}})$ together with an entropy gate $\tau$ as specified in Eqs.~(1)–(2). If the familiarity signal is strong, that is $\bar{s}\ge\theta_{\text{high}}$ or $H(p)\le\tau$, RF-Mem executes a one–shot \emph{Familiarity} retrieval that returns the top-$K$ items by similarity. 
If the signal is weak, that is $\bar{s}\le\theta_{\text{low}}$ or $H(p)>\tau$, or if the probe yields no hits, RF-Mem invokes Algorithm~\ref{alg:rfmem-recol} to perform stepwise \emph{Recollection}.
In each round $r$ of recollection, we retrieve top-$N$ candidates with $N=(B+r)F$, cluster them by KMeans into at most $B$ groups, form a centroid for each group, and update the query by $\alpha$-mixing the current query with the centroid while retaining a residual from the original query, then continue with the resulting queries as a beam of size $B$. 
Unique hits are accumulated across rounds, scores are aggregated per item, and the process stops early when at least $K$ unique items have been collected or when the round limit $R$ is reached. 
The probe therefore preserves one–shot efficiency when certainty is high, while the recollection loop builds chain–like evidence under uncertainty, where $B$ and $F$ regulate breadth and $\alpha$ controls the balance between exploration and query stability. 
And theoretical analysis can be found at Appendix~\ref{app:theory}.

\begin{algorithm}[h]
\caption{RF-Mem: Entropy-guided Switching Between Familiarity and Recollection (with inlined entropy)}
\label{alg:rfmem-main}
\begin{algorithmic}[1]
\Require Retriever $\mathcal{R}$ with indexed memories $\{m_i\}_{i=1}^M$ and embeddings $z_i=\phi(m_i)$; query $q$ with embedding $x_t=\phi(q)$; probe size $k_p$; temperature $\lambda$; entropy threshold $\tau$; mean-score thresholds $(\theta_{\text{high}}, \theta_{\text{low}})$; final budget $K$. Recollection params: beam $B$, fanout $F$, max rounds $R$, mix rate $\alpha$, per-round candidate size $N$.
\Ensure Ranked memory set $\mathcal{S}$ with $|\mathcal{S}|\le K$.
\State $(\mathbf{s}, \mathbf{id}) \gets \mathcal{R}.\textsc{Probe}(x_t, K)$ \Comment{Top-$K$ probe scores}
\State $\bar{s} \gets \mathrm{mean}(\mathbf{s})$
\State \textbf{(Calculate entropy)} Let $k\gets|\mathbf{s}|$ and for $i=1..k$ set
\State \hspace{1em}$z_i \gets \lambda\,(s_i - \max_{j} s_j)$;\quad $p_i \gets \exp(z_i)\big/\sum_{j=1}^{k}\exp(z_j)$
\State $H \gets -\sum_{i=1}^{k} p_i \log p_i$
\If{$\bar{s} \ge \theta_{\text{high}}$ \textbf{ or } $H \le \tau$}
  \State \Return \textsc{FamiliarityTopK}$(x_t, K, \mathcal{R})$
\ElsIf{$\bar{s} \le \theta_{\text{low}}$ \textbf{ or } $H > \tau$}
  \State \Return \textsc{Recollection}$(x_t, K, B, F, R, \alpha, N, \mathcal{R})$
\EndIf
\end{algorithmic}
\end{algorithm}

\begin{algorithm}[h]
\caption{Familiarity Retrieval}
\label{alg:rfmem-fam}
\begin{algorithmic}[1]
\Require Query $x_t$, budget $K$, retriever $\mathcal{R}$
\State For each memory $m_i$, compute $s_i \gets \langle x_t, z_i\rangle$ \Comment{cosine or inner product after normalization}
\State $\mathcal{S} \gets$ top-$K$ items by $s_i$ (optionally filter by a floor)
\State \Return $\mathcal{S}$
\end{algorithmic}
\end{algorithm}

\begin{algorithm}[h]
\caption{Recollection Retrieval: retrieve $\rightarrow$ cluster $\rightarrow$ $\alpha$-mix $\rightarrow$ iterate}
\label{alg:rfmem-recol}
\begin{algorithmic}[1]
\Require $x_t$, $K$, $B$, $F$, $R$, $\alpha$, $N$, retriever $\mathcal{R}$
\Ensure Ranked set $\mathcal{S}$ with $|\mathcal{S}|\le K$
\State $x^{(0)} \gets \mathrm{norm}(x_t)$;\quad $\mathsf{Beam} \gets \{x^{(0)}\}$;\quad $\mathsf{Seen}\gets\emptyset$;\quad $\mathsf{Bag}\gets\emptyset$
\For{$r = 0$ \textbf{to} $R-1$}
  \State $\mathsf{Next}\gets\emptyset$;\quad $N \gets (B+r)\times F$ \Comment{$N<K$}
  \ForAll{$x^{(r)} \in \mathsf{Beam}$}
    \State $C^{(r)} \gets \mathcal{R}.\textsc{TopN}(x^{(r)}, N)$ \Comment{$C^{(r)}=\mathrm{Top}\mbox{-}N\{(m_i,\langle x^{(r)}, z_i\rangle)\}$}
    \State Cluster $\{z_i: m_i\in C^{(r)}\}$ into $k=\min(B,|C^{(r)}|)$ groups by KMeans
    \For{$b = 1$ \textbf{to} $k$}
       \State $G^{(r)}_b \gets$ index set of cluster $b$;\quad $g^{(r)}_b \gets \mathrm{norm}\big(\frac{1}{|G^{(r)}_b|}\sum_{i\in G^{(r)}_b} z_i\big)$
       \State $x^{(r+1)}_b \gets \mathrm{norm}\big(\alpha\,x^{(r)} + (1-\alpha)\,g^{(r)}_b + x_t\big)$
       \State Append $\big(x^{(r+1)}_b, G^{(r)}_b\big)$ to $\mathsf{Next}$
    \EndFor
  \EndFor
  \If{$\mathsf{Next}=\emptyset$} \textbf{break} \EndIf
  \State Score each $(x^{(r+1)}_b, G^{(r)}_b)$ by $\sum_{i\in G^{(r)}_b}\langle x^{(r+1)}_b, z_i\rangle$; keep top-$B$ as new $\mathsf{Beam}$
  \ForAll{kept $\big(x^{(r+1)}_b, G^{(r)}_b\big)$}
    \ForAll{$i \in G^{(r)}_b$}
      \If{$i\notin \mathsf{Seen}$} insert $(i,\langle x^{(r+1)}_b, z_i\rangle)$ into $\mathsf{Bag}$; add $i$ to $\mathsf{Seen}$ \EndIf
    \EndFor
  \EndFor
  \If{$|\mathsf{Bag}|\ge K$} \textbf{break} \EndIf
\EndFor
\State For each id in $\mathsf{Bag}$ take top-$K$ as $\mathcal{S}$
\State \Return $\mathcal{S}$
\end{algorithmic}
\end{algorithm}

\FloatBarrier
\section{Theoretical Analysis}
\label{app:theory}

\newtheorem{theorem}{Theorem}
\newtheorem{proposition}{Proposition}
\newtheorem{lemma}{Lemma}
\newtheorem{corollary}{Corollary}

\paragraph{Preliminaries.}
Let the user memory be $\mathcal{M}=\{m_i\}_{i=1}^M$ with embeddings $\mathbf{z}_i=\phi(m_i)$ and a query $q$ encoded as $\mathbf{x}_t=\phi(q)$, with unit normalization.
Define similarity scores $s_i=\langle \mathbf{x}_t,\mathbf{z}_i\rangle$ and the probe list $\mathcal{C}=\texttt{Top-}K\big(\{(m_i,s_i)\}_{i=1}^M\big)$.
Following the paper, define a tempered softmax over probe scores
\begin{equation}
p_i=\frac{\exp\!\big(\lambda(s_i-\max_j s_j)\big)}{\sum_{j=1}^K \exp\!\big(\lambda(s_j-\max_\ell s_\ell)\big)},\qquad i=1,\dots,K,
\end{equation}
with entropy $H(p)=-\sum_{i=1}^K p_i\log p_i$ and mean similarity $\bar{s}=\frac{1}{K}\sum_{i=1}^K s_i$.
The RF-Mem selection is
\begin{equation}
\text{Strategy}(q) =
\begin{cases}
\text{Familiarity}, & \bar{s} \geq \theta_{\text{high}}, \\[6pt]
\text{Recollection}, & \bar{s} \leq \theta_{\text{low}}, \\[6pt]
\begin{cases}
\text{Familiarity}, & H(p) \leq \tau, \\
\text{Recollection}, & H(p) > \tau,
\end{cases} & \theta_{\text{low}} < \bar{s} < \theta_{\text{high}}.
\end{cases}
\label{eq:gate}
\end{equation}
In the Familiarity path, the retriever returns $\texttt{Top-}K$ by $s_i$.
In the Recollection path, the system iterates \emph{retrieve–cluster–mix}: at round $r$ it retrieves $\texttt{Top-}N_r$ with $N_r=(B+r)F\le K$, clusters into $B$ groups with centroids $\mathbf{g}^{(r)}_b$, and forms recollect queries
\begin{equation}
\mathbf{x}^{(r+1)}_b=\operatorname{norm}\big(\alpha \mathbf{x}^{(r)}+(1-\alpha)\mathbf{g}^{(r)}_b+\mathbf{x}_t\big),\quad b=1,\dots,B,
\end{equation}
for $r=0,\dots,R-1$, then returns $\texttt{Top-}K$ from the union $\bigcup_{r=0}^R \mathcal{C}^{(r)}$.
This matches the method description and notation in the main content.

\subsection{Risk-Minimizing Selection}
We formalize the selection in \eqref{eq:gate} as minimizing a retrieval risk that trades correctness and cost.
Let $\mathcal{E}(\text{mode}\mid q)$ denote the probability of returning an insufficient set for $q$ under a mode $\text{mode}\in\{\text{Familiarity},\text{Recollection}\}$, and let $C(\text{mode})$ denote the expected computation cost.
For a penalty $\beta>0$ define
\begin{equation}
\mathcal{L}(\text{mode}\mid q)=\mathcal{E}(\text{mode}\mid q)+\beta\, C(\text{mode}).
\label{eq:risk}
\end{equation}

\begin{lemma}[Monotonicity of proxy signals]
\label{lem:mono}
Assume the similarity distribution admits a monotone likelihood ratio in $s_i$ between relevant and nonrelevant items, and that the probe softmax temperature $\lambda$ is fixed.
Then $\mathcal{E}(\text{Familiarity}\mid q)$ is nonincreasing in $\bar{s}$ and nondecreasing in $H(p)$, while $C(\text{Familiarity})<C(\text{Recollection})$.
\end{lemma}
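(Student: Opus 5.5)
We will treat the lemma as a statement about a posterior-relevance model and make the implicit modelling choices explicit: each probe item $i\in\mathcal{C}$ is relevant with posterior probability $\pi(s_i)=\Pr(\text{rel}\mid s_i)$, and the Familiarity set is \emph{insufficient} when it contains fewer relevant items than needed. The monotone likelihood ratio (MLR) assumption, $f_{\text{rel}}(s)/f_{\text{non}}(s)$ nondecreasing in $s$, together with Bayes' rule, gives that $\pi(s)$ is nondecreasing in $s$. Since Familiarity returns exactly the probe list $\mathcal{C}$, we can write $\mathcal{E}(\text{Familiarity}\mid q)=\Psi(\pi(s_1),\dots,\pi(s_K))$ with $\Psi$ coordinatewise nonincreasing. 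This holds, for example, for $\Psi=\Pr(\text{no relevant item})=\prod_i(1-\pi(s_i))$, or for $\Pr(\#\text{rel}<k)$ under independent Bernoulli labels, by a standard coupling argument.

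\textbf{Step 1 (mean score).} Raising $\bar{s}$ at fixed shape of the score profile means shifting all $s_i$ up by a common amount. Each $\pi(s_i)$ then weakly increases, so $\Psi$ weakly decreases. Because a mean alone does not determine the profile, we state the claim along such translations, or equivalently in the sense of first-order stochastic dominance of the score vector. MLR implies FOSD, and that is enough for a coordinatewise monotone $\Psi$. This is the cleanest part of the argument.

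\textbf{Step 2 (entropy).} This is the main obstacle, because $p$ is invariant to translations of the scores: at fixed $\bar{s}$, the entropy only measures the spread of the scores. We will therefore fix $\bar{s}$ and $\max_j s_j$ and vary the gaps $\Delta_i=\max_j s_j-s_i$. First, $H(p)$ is strictly increasing as the gaps shrink uniformly, since $H$ of $\mathrm{softmax}(-\lambda\Delta)$ increases when $\Delta$ is scaled down. A compressed profile means the top score no longer separates from the rest. Under MLR with a relevant-item minority, this lowers the posterior that the top-ranked items are relevant: the likelihood ratio between the hypothesis ``a clear relevant item is present'' and the hypothesis ``all items are nonrelevant'' is monotone in the gap profile. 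We will formalise this through the order statistics of $K$ draws, showing that the posterior of a relevant item among the top ranks is a monotone function of the gaps, and then compose with $\Psi$. If a fully general statement resists proof, we will restrict to the two-component mixture case with Gaussian (or sub-Gaussian, as in Appendix~\ref{app:sub_gaus}) class-conditionals, where the computation is explicit.

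\textbf{Step 3 (cost).} This part is direct counting. Familiarity issues one Top-$K$ query. Recollection issues the probe plus at least one round consisting of a Top-$N_r$ search, a KMeans step, and the mixing operations, each of which has positive cost. Hence $C(\text{Recollection})\ge C(\text{Familiarity})+c_{\min}>C(\text{Familiarity})$.
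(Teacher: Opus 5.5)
Your Steps 1 and 3 follow the paper's route. MLR makes the posterior $\pi(s)$ nondecreasing, the Familiarity miss probability is $\prod_i(1-\pi(s_i))$, and raising the scores lowers every factor. The cost claim is the same count of at least one extra retrieve--cluster--mix round, so, as in the paper, $R\ge 1$ is needed. Restricting Step 1 to translations is cleaner than the paper's ``coordinatewise increase'': a translation leaves $p$, and hence $H(p)$, unchanged, so it genuinely isolates the dependence on $\bar s$.

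The gap is Step 2, which is a plan rather than an argument, and its concrete pieces do not fit together.
\begin{itemize}
\item Fixing both $\bar s$ and $\max_j s_j$ pins $\sum_i\Delta_i=K(\max_j s_j-\bar s)$, so the gaps cannot shrink uniformly.
\item If you keep only the maximum fixed, every $s_i$ rises as the gaps shrink. In your own model $\prod_i(1-\pi(s_i))$ can then only go down while $H(p)$ goes up. This path also moves $\bar s$, so it cannot isolate $H$.
\item If you keep only $\bar s$ fixed, i.e.\ use $s^{(c)}=\bar s\mathbf{1}+c\,(s-\bar s\mathbf{1})$ with $c\in[0,1]$, then $H$ does increase as $c$ decreases, because $\partial_\beta H(\mathrm{softmax}(\beta u))=-\beta\,\mathrm{Var}_p(u)\le 0$. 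But $s^{(c)}$ is majorized by $s$, so by Karamata the sign of the change in $\sum_i\log(1-\pi(s_i))$ is decided by the curvature of $g(s)=\log(1-\pi(s))$. If $g$ is convex on the score range, a mean-preserving spread lowers $H$ and \emph{raises} the miss probability.
\end{itemize}
MLR only makes $\pi$ nondecreasing and says nothing about this curvature. So the entropy half of the lemma does not follow from the stated hypotheses. Your ``clear relevant item versus all nonrelevant'' likelihood ratio is a list-level model, not the item-wise $\Psi(\pi(s_1),\dots,\pi(s_K))$ you set up, so it cannot fill the hole.

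What is missing is an explicit shape assumption, stated together with the entropy claim along mean-preserving compressions. For instance, assume $1-\pi$ is log-concave. This holds for the logistic posterior of equal-variance Gaussian class-conditionals, but not for general sub-Gaussian ones. Schur-concavity of $\sum_i g(s_i)$ then finishes the step.

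Do not expect the paper to supply this. Its proof handles the entropy part with a one-line Schur-concavity remark: lower $H$ gives a larger $p_{\max}$ and larger top margins, hence larger $\pi(s_{(1)})$ and $\sum_i\pi(s_i)$. That remark overlooks two things: widening the margins at fixed mean lowers the bottom scores, and the product is not a function of $\sum_i\pi(s_i)$.
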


\noindent\emph{Proof.}
Let $Y_i\in\{0,1\}$ for the relevance label of item $i$ (1 means relevant).Assume a monotone likelihood ratio for scores so the posterior relevance $\pi(s)=\Pr(Y=1\mid s)$ is nondecreasing in $s$.
For \text{Familiarity}, the miss probability given scores is $\prod_{i\in K}\bigl(1-\pi(s_i)\bigr)$.
Any coordinatewise increase of $(s_i)_{i\in K}$ decreases each factor, hence decreases the product, so $\mathcal{E}(\text{Familiarity}\mid q)$ is nonincreasing in $\bar s$.
For entropy, softmax $p$ is Schur–concave: smaller $H(p)$ implies larger $p_{\max}$ and larger top margins $s_{(1)}-s_{(j)}$, which increase $\pi(s_{(1)})$ and $\sum_{i\in I_K}\pi(s_i)$, thus the product and its expectation are nonincreasing; therefore $\mathcal{E}(\text{Familiarity}\mid q)$ is nondecreasing in $H(p)$.
For costs, $C(\text{Familiarity})=\Theta(K\,c_{\mathrm{sim}})$, while \text{Recollection} performs at least one extra retrieve–cluster–mix round, so $C(\text{Recollection})\ge C(\text{Familiarity})+R\,c_{\mathrm{clust}}(B)>C(\text{Familiarity})$ for $R\ge1$.
\hfill$\square$

\begin{theorem}[Threshold optimality within monotone policies]
\label{thm:gate-opt}
Consider the class $\Pi$ of policies that are monotone in $(\bar{s},H)$ in the sense of Lemma~\ref{lem:mono}.
Within $\Pi$, a two-threshold policy of the form \eqref{eq:gate} minimizes the pointwise risk \eqref{eq:risk}.
\end{theorem}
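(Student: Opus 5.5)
Since the risk \eqref{eq:risk} is pointwise in $q$ and the decision is binary, I would reduce the comparison to the sign of the risk difference
\begin{equation}
\Delta(q)=\mathcal{L}(\text{Familiarity}\mid q)-\mathcal{L}(\text{Recollection}\mid q)=\big[\mathcal{E}(\text{F}\mid q)-\mathcal{E}(\text{R}\mid q)\big]-\beta\big[C(\text{R})-C(\text{F})\big].
\end{equation}
The pointwise Bayes-optimal rule picks Familiarity iff $\Delta(q)\le 0$. The proof then has to show two things: this rule lies in $\Pi$, and its acceptance region in the $(\bar s,H)$ plane has exactly the shape of \eqref{eq:gate}. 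To make $\Delta$ a function of $(\bar s,H)$ only, I would adopt the same modelling convention as Lemma~\ref{lem:mono}: the error probabilities depend on $q$ only through the probe statistics. I would also treat $\mathcal{E}(\text{R}\mid q)$ as either independent of $(\bar s,H)$ or varying more slowly than $\mathcal{E}(\text{F}\mid q)$. The second option amounts to saying Recollection is less sensitive to the probe, and I would state it as an explicit assumption.

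The first step is monotonicity of the acceptance region. By Lemma~\ref{lem:mono}, $\mathcal{E}(\text{F}\mid q)$ is nonincreasing in $\bar s$ and nondecreasing in $H$. The cost gap $C(\text{R})-C(\text{F})>0$ is a constant. Under the assumption on $\mathcal{E}(\text{R})$, $\Delta$ is nonincreasing in $\bar s$ and nondecreasing in $H$. So the set $A=\{(\bar s,H):\Delta\le 0\}$ is an upper-left set: it is closed under increasing $\bar s$ and decreasing $H$. This shows the optimal rule belongs to $\Pi$, and hence it minimizes the risk over $\Pi$ trivially, because it is pointwise optimal over all policies.

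The second step identifies the boundary. For each $\bar s$, define $\tau(\bar s)=\sup\{H:(\bar s,H)\in A\}$; this is nondecreasing in $\bar s$. Set $\theta_{\text{high}}=\inf\{\bar s:\tau(\bar s)\ge H_{\max}=\log K\}$, so every entropy value is accepted there, and $\theta_{\text{low}}=\sup\{\bar s:\tau(\bar s)<H_{\min}\}$, so nothing is accepted there. Outside $(\theta_{\text{low}},\theta_{\text{high}})$ this recovers the first two branches of \eqref{eq:gate}. Inside the band, the rule is ``Familiarity iff $H\le\tau(\bar s)$'', which matches the third branch.

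The main obstacle is that \eqref{eq:gate} uses a \emph{constant} $\tau$ inside the band, whereas the optimal boundary $\tau(\bar s)$ is generally only monotone. I would handle this in one of two ways. The first is to show that $\tau(\bar s)$ is constant on the band, using the observation that Lemma~\ref{lem:mono} gives separate, decoupled monotonicity in the two coordinates. If $\mathcal{E}(\text{F})$ is additively separable in $(\bar s,H)$ up to the band where neither term dominates, the level set is axis-aligned. The second, and the one I would fall back on, is to interpret the theorem as optimality within the subclass of $\Pi$ whose intermediate-regime rule depends on $H$ alone. There the optimal constant $\tau$ solves the one-dimensional problem of choosing $\tau$ to minimize the band-averaged $\Delta$, and monotonicity in $H$ again yields a threshold. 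Either way I would flag this separability assumption explicitly.
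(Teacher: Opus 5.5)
Your reduction matches the paper's argument. The paper fixes $t=\mathcal{L}(\text{Recollection}\mid q)$ as a level, shows that the Familiarity region $\mathcal{D}=\{f\le t\}$ is a south-east monotone set, concludes that $\mathbf{1}_{\mathcal{D}}$ minimizes the risk in $\Pi$, and then defines $\theta_{\mathrm{high}},\theta_{\mathrm{low}}$ essentially as you do. You are right to state explicitly that $\mathcal{E}(\text{Recollection}\mid q)$ must not spoil the monotonicity of the risk difference in $(\bar s,H)$; the paper uses this tacitly.

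The gap is where you place it, and neither fallback closes it. Option (a) fails because additive separability does not make the boundary axis-aligned. If $\Delta=a(\bar s)+b(H)$, the boundary in the band is $b(H)=-a(\bar s)$, which moves with $\bar s$ whenever $a$ is strictly monotone. For example, $\Delta=H-\bar s$ is separable and has the monotonicity of Lemma~\ref{lem:mono}, yet its optimal boundary is the diagonal $H=\bar s$, and no constant $\tau$ reproduces it. What you actually need is that the section $\{H:\Delta(\bar s,H)\le 0\}$ is the same for every $\bar s\in(\theta_{\mathrm{low}},\theta_{\mathrm{high}})$. Equivalently, your $\tau(\cdot)$ must be a three-level step function. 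That is an extra hypothesis, not a consequence of the lemma. Option (b) replaces pointwise optimality over $\Pi$ with averaged optimality over a subclass where the threshold form is imposed by definition: a monotone rule depending on $H$ alone is automatically a threshold in $H$. So it no longer proves the stated theorem.

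The paper's proof does not supply the missing idea either. It defines $H_\star(\bar s)=\inf\{H: f(\bar s,H)\le t\}$, which should be a $\sup$ (as in your $\tau(\bar s)$) because $f$ is nondecreasing in $H$. It then sets $\tau:=\sup_{\bar s}H_\star(\bar s)$ and asserts equivalence with $\mathbf{1}_{\mathcal{D}}$. Even with the $\sup$ corrected, a non-constant boundary means this rule picks Familiarity on $\{\tau(\bar s)<H\le\tau\}$, where $\Delta>0$, so it is not pointwise optimal.

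The theorem as written holds only under the flat-boundary hypothesis above, and in that case your second step finishes the proof immediately. Without it, what your argument correctly proves is that the optimal policy in $\Pi$ is ``Familiarity iff $H\le\tau(\bar s)$'' with $\tau(\cdot)$ nondecreasing. The gate \eqref{eq:gate} is the special case where $\tau$ is constant on the band.
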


\noindent\emph{Proof.}
By Lemma~\ref{lem:mono}, $\mathcal{E}(\text{Familarity}\mid q)$ is nonincreasing in $\bar s$ and nondecreasing in $H$.
Define
\begin{equation}
t := \mathcal{L}(\text{Recollection}\mid q), \qquad
f(\bar s,H) := \mathcal{L}(\text{Familiarity}\mid q) = \mathcal{E}(\text{Familiarity}\mid q)+\beta\,C(\text{Familiarity}).
\end{equation}
Then the \text{Familiarity}-region is the sublevel set
\begin{equation}
\mathcal{D} := \{(\bar s,H): f(\bar s,H)\le t\},
\end{equation}
which is a down-set in $(\bar s\uparrow,\, H\downarrow)$ (``south–east orthant'').
Hence the risk-minimizing monotone policy is $\mathbf{1}_{\mathcal{D}}$.
Introduce axis-aligned thresholds
\begin{equation}
\theta_{\mathrm{high}} := \inf\{\bar s:\ \sup_H f(\bar s,H)\le t\}, \qquad
\theta_{\mathrm{low}}  := \sup\{\bar s:\ \inf_H f(\bar s,H)> t\},
\end{equation}
and, on $\bar s\in(\theta_{\mathrm{low}},\theta_{\mathrm{high}})$,
\begin{equation}
H_\star(\bar s) := \inf\{H:\ f(\bar s,H)\le t\}, \qquad
\tau := \sup_{\bar s\in(\theta_{\mathrm{low}},\theta_{\mathrm{high}})} H_\star(\bar s).
\end{equation}
Therefore
\begin{equation}
\text{Familiarity}\ \Longleftrightarrow\ \big(\bar s\ge \theta_{\mathrm{high}}\big)\ \ \text{or}\ \ \big(\theta_{\mathrm{low}}<\bar s<\theta_{\mathrm{high}}\ \&\ H\le \tau\big),
\end{equation}
and \text{Recollection} otherwise, which is exactly the two-threshold gate. \hfill$\square$

\subsection{An Entropy Certificate for Correctness}
Let $p_{\max}=\max_i p_i$.
For fixed $p_{\max}$, the maximal entropy occurs when the residual mass is uniform, that is
\begin{equation}
H(p)\le h_2(p_{\max})+(1-p_{\max})\log(K-1),
\label{eq:entropy-upper}
\end{equation}
where $h_2(x)=-x\log x-(1-x)\log(1-x)$ is the binary entropy.
Define the inverse certificate
\begin{equation}
\phi_K(\tau)=\max\{x\in[1/K,1]: h_2(x)+(1-x)\log(K-1)\le \tau\}.
\end{equation}

\begin{lemma}[Entropy certificate]
\label{lem:cert}
If $H(p)\le \tau$, then $p_{\max}\ge \phi_K(\tau)$.
\end{lemma}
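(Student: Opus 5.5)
The plan is to reduce the claim to a one-variable monotonicity statement about the envelope $g_K(x):=h_2(x)+(1-x)\log(K-1)$ that defines $\phi_K$. I will read $\phi_K(\tau)$ as the boundary point of the sublevel set $\{x\in[1/K,1]: g_K(x)\le\tau\}$, i.e.\ the point where $g_K$ crosses the level $\tau$. The argument then proceeds by contraposition: assume $p_{\max}<\phi_K(\tau)$ and derive $H(p)>\tau$.

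\textbf{Step 1 (the envelope).} I first justify \eqref{eq:entropy-upper} by the grouping rule
\[
H(p)=h_2(p_{\max})+(1-p_{\max})\,H(\tilde p),
\]
where $\tilde p$ is the residual distribution on the other $K-1$ coordinates, together with $H(\tilde p)\le\log(K-1)$. Equality holds when the residual mass is uniform, so $g_K(p_{\max})$ is the largest entropy compatible with a given $p_{\max}$.

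\textbf{Step 2 (monotonicity).} I compute
\[
g_K'(x)=\log\frac{1-x}{(K-1)x},
\]
which is $\le 0$ exactly when $x\ge 1/K$. Hence $g_K$ is strictly decreasing on $[1/K,1]$, from $g_K(1/K)=\log K$ down to $g_K(1)=0$. Its sublevel set at any level $\tau\in[0,\log K]$ is therefore an interval whose lower endpoint $x_\tau$ satisfies $g_K(x_\tau)=\tau$. The inverse $\phi_K$ is then well defined and nonincreasing in $\tau$. For $\tau\ge\log K$ the lemma is trivial, since $p_{\max}\ge 1/K$ always holds.

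\textbf{Step 3 (transfer to $p$).} Suppose $p_{\max}<\phi_K(\tau)$. By Step 2, $g_K(p_{\max})>\tau$. It remains to pass from this to $H(p)>\tau$, which would contradict the hypothesis $H(p)\le\tau$.

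\textbf{Main obstacle.} The difficulty is entirely in Step 3. The envelope \eqref{eq:entropy-upper} bounds $H(p)$ from \emph{above} by $g_K(p_{\max})$, but the contrapositive needs the reverse comparison $H(p)\ge g_K(p_{\max})$. I will first try to supply that comparison from the structure of the tempered softmax in the probe. The score vector is a $\lambda$-scaled exponential of sorted scores $s_{(1)}\ge\dots\ge s_{(K)}$, and I would attempt to show that, in the regime of interest, the residual mass of the probe list is close to uniform. This would make the entropy sit at the envelope up to a controllable slack.

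Making the residual exactly uniform is not possible in general. The fallback is to carry the slack explicitly: write $H(p)=g_K(p_{\max})-(1-p_{\max})D(\tilde p\,\|\,\mathrm{unif})$ and bound the KL term using the score gaps and $\lambda$. I expect this bound to be the delicate part of the proof, because the lemma's conclusion $p_{\max}\ge\phi_K(\tau)$ holds only to the extent that this deficit term is controlled.
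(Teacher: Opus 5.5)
Your Steps 1 and 2 are correct. Step 3, however, is not merely delicate: it cannot be completed, because the lemma is false. Read literally, $\phi_K(\tau)=\max\{\cdots\}=1$ for every $\tau\ge 0$, since $g_K(1)=0$. Your lower-endpoint reading is the only sensible repair, and it still fails.

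Take $K=3$, probe scores $(s,s,s-1)$ and $\lambda=20$, so $p=(1,1,e^{-20})/(2+e^{-20})$. Then $p_{\max}<\tfrac12$ and $H(p)<0.7$, so the hypothesis $H(p)\le\tau$ holds with $\tau=1$ nat. Yet $g_3(\tfrac12)=\tfrac32\log 2\approx 1.04>1$, so by your own Step 2, $\phi_3(1)>\tfrac12>p_{\max}$.

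More generally, take two tied top scores with the others strictly below. As $\lambda\to\infty$, $H(p)\to\log 2$ while $g_K(p_{\max})\to\log 2+\tfrac12\log(K-1)$. So the deficit $(1-p_{\max})D(\tilde p\,\|\,\mathrm{unif})$ in your fallback identity tends to $\tfrac12\log(K-1)$. It is unbounded in $K$, and increasing $\lambda$ drives it toward this value rather than to zero.

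The residual of a tempered softmax is near-uniform only when the non-top scores are nearly tied, and nothing in the setting provides that. With $\lambda=20$ to $30$, a score gap of $0.1$ already skews the residual by a factor $e^{2}$ to $e^{3}$. Bounding $D$ through score gaps, e.g.\ $D\le\lambda(s_{(2)}-s_{(K)})$, therefore yields only a different, weaker statement, $p_{\max}\ge\phi_K\bigl(\tau+(1-p_{\max})D\bigr)$, not the lemma.

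The envelope $H(p)\le g_K(p_{\max})$ is Fano's inequality, and it certifies only the opposite implication: $H(p)>\tau\Rightarrow g_K(p_{\max})>\tau\Rightarrow p_{\max}<\phi_K(\tau)$. In words, high entropy certifies low confidence, which speaks to the Recollection branch, not to the safety of Familiarity. The paper's one-line proof (``rearrange the envelope and apply the definition of $\phi_K$'') performs exactly the inversion you identified as the main obstacle. Your diagnosis is right, and the paper does not supply the missing comparison either.

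A certificate in the intended direction needs a \emph{lower} bound on $H$ in terms of $p_{\max}$. The inequality $H(p)=\sum_i p_i\log(1/p_i)\ge\log(1/p_{\max})$ already gives $H(p)\le\tau\Rightarrow p_{\max}\ge e^{-\tau}$. The sharp version replaces $g_K$ by $L(x)=\min\{H(p):p_{\max}=x\}=-mx\log x-(1-mx)\log(1-mx)$ with $m=\lfloor 1/x\rfloor$. This minimum is attained at $(x,\dots,x,1-mx,0,\dots,0)$, which majorizes every other distribution with maximum $x$, and $L$ is strictly decreasing on $[1/K,1]$. With $L$ in place of $g_K$, your Step 2 goes through piecewise in $m$. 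Step 3 then becomes immediate, since $L(p_{\max})\le H(p)\le\tau$.
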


\noindent\emph{Proof.}
Rearrange \eqref{eq:entropy-upper} and apply the definition of $\phi_K(\tau)$.
\hfill$\square$

\begin{proposition}[Bound on familiarity error under low entropy]
\label{prop:fam-error}
Suppose that returning the $\texttt{Top-}K$ set suffices whenever the true relevant item has $p_i\ge \rho$ for some $\rho\in(0,1)$.
Under $H(p)\le \tau$ with $\phi_K(\tau)\ge \rho$, the familiarity mode achieves negligible miss probability with respect to this sufficient condition.
\end{proposition}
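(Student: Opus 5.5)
The argument chains Lemma~\ref{lem:cert} with the sufficiency hypothesis. The only subtlety is identifying the item that carries the large softmax mass with the relevant item, and that requires reading the hypothesis in the form the statement intends.

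\textbf{Step 1: a lower bound on the top mass.} Assume we are in the Familiarity branch with $H(p)\le\tau$. Lemma~\ref{lem:cert}, which rests on the maximum-entropy bound \eqref{eq:entropy-upper}, gives $p_{\max}\ge\phi_K(\tau)$. The premise $\phi_K(\tau)\ge\rho$ then yields $p_{\max}\ge\rho$.

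\textbf{Step 2: tie the top item to the sufficient condition.} Let $i^\star=\arg\max_i p_i$. Because $\lambda>0$, the softmax is monotone in the scores, so $i^\star$ is also the top-scoring probe item $s_{(1)}$. It therefore lies in the returned $\texttt{Top-}K$ set; indeed the probe list and the Familiarity output coincide, since both are $\texttt{Top-}K$ by $s_i$.

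\textbf{Step 3: conclude on the miss event.} If the relevant item is $i^\star$, or more generally any item with $p_i\ge\rho$, then by the sufficiency hypothesis the returned set suffices. The miss event is thus contained in $\{\text{no relevant item has } p_i\ge\rho\}$.

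On $\{H(p)\le\tau\}$ there is at least one index with mass $\ge\rho$. Moreover, if $\rho>1/2$, that index is unique, because $\sum_i p_i=1$. So the residual miss probability reduces to the event that this single dominant item is nonrelevant, which is $1-\pi(s_{(1)})$ in the notation of Lemma~\ref{lem:mono}. By the MLR assumption, $\pi$ is nondecreasing, and $s_{(1)}$ exceeds every other probe score by a margin of at least $\lambda^{-1}\log\!\big(\rho(K-1)/(1-\rho)\big)$. This margin follows from $p_{\max}\ge\rho$ and the softmax form. Hence the miss probability is bounded by $1-\pi(s_{(1)})$, which vanishes as $\rho\to1$ (equivalently $\tau\to0$) or as $\lambda$ grows. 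That is the sense in which the proposition says ``negligible''. Formally, I would state
\[
\mathcal{E}(\text{Familiarity}\mid q,\ H(p)\le\tau)\le 1-\pi(s_{(1)}),
\]
and note that this bound is $0$ under the literal reading of the sufficiency hypothesis.

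\textbf{Main obstacle.} Step 3 is where the difficulty lies. The hypothesis only says what happens if the \emph{relevant} item has large mass; it does not say that the large-mass item \emph{is} relevant. To close this gap, I would first try to read the hypothesis as applying to the dominant item. That reading makes the miss probability exactly zero and the result immediate. If that reading is too strong, I would fall back on the MLR posterior bound from Lemma~\ref{lem:mono} as above, which gives a quantitative rather than exact version of ``negligible''.
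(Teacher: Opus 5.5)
Your main line is the paper's own argument. The paper gives no separate proof, only the remark after the proposition. That remark says low entropy forces a large $p_{\max}$ via Lemma~\ref{lem:cert}; the arg-max item is the top-scoring probe item and so lies in $\texttt{Top-}K$; and the sufficiency hypothesis, tacitly applied to that ``best evidence,'' makes Familiarity safe. You are right that this last step silently identifies the dominant item with the relevant one.

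\textbf{The genuine gap is Step 1, which you inherit from the paper.} The maximum-entropy bound $H(p)\le g(p_{\max})$, with $g(x)=h_2(x)+(1-x)\log(K-1)$ decreasing on $[1/K,1]$, says that a large $p_{\max}$ forces small entropy. It cannot be rearranged into the converse you need.

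Taken literally, $g(1)=0\le\tau$ makes $\phi_K(\tau)=1$ for every $\tau\ge0$, so Lemma~\ref{lem:cert} would assert $p_{\max}\ge1$. Even the intended inverse $\min\{x: g(x)\le\tau\}$ fails for $K\ge3$. The tempered softmax can realize $p=(0.45,0.45,0.10)$, which has $H(p)\approx0.949$, yet $g(x)\le0.949$ forces $x>0.6>p_{\max}$.

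The repair is the inequality in the opposite direction, $H(p)=\sum_i p_i\log(1/p_i)\ge\log(1/p_{\max})$, which gives $p_{\max}\ge e^{-\tau}$. Change the premise to $e^{-\tau}\ge\rho$, or replace $\phi_K$ by the inverse of the \emph{minimum} entropy compatible with a given $p_{\max}$. Then your Steps 1--2, plus the dominant-item reading of the hypothesis, give a miss probability of exactly zero.

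\textbf{Your fallback for the ``relevant item'' reading does not deliver ``negligible.''} From $p_{\max}\ge\rho$ the softmax gives $\sum_{j\ne(1)}e^{-\lambda(s_{(1)}-s_j)}\le(1-\rho)/\rho$. This only guarantees that each gap is at least $\lambda^{-1}\log(\rho/(1-\rho))$. Your extra $\log(K-1)$ holds for the average gap (by Jensen), not for every gap.

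More importantly, $1-\pi(s_{(1)})$ depends only on the raw score $s_{(1)}$, which neither $\rho\to1$ nor $\lambda\to\infty$ controls. As $\lambda$ grows, essentially every probe list with a unique top score passes $H(p)\le\tau$, including lists whose top score is low and whose $\pi(s_{(1)})$ is small. So $1-\pi(s_{(1)})$ is a valid bound, but the low-entropy event does nothing to shrink it. The proposition is obtainable only under the dominant-item reading, exactly as the paper uses it.
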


\noindent Proposition~\ref{prop:fam-error} provides a certificate: a small entropy ensures a large $p_{\max}$, which guarantees that the best evidence is included by $\texttt{Top-}K$ under a mild sufficiency condition, hence \text{Familiarity} is safe in the certified region.

\subsection{Sub-Gaussian Mean Similarity and Gating Reliability}
\label{app:sub_gaus}
Assume probe scores $\{s_i\}_{i=1}^K$ are independent sub-Gaussian with proxy mean $\mu$ and variance proxy $\sigma^2$.
Let $\widehat{\mu}=\bar{s}$ and fix thresholds $\theta_{\text{low}}<\theta_{\text{high}}$.

\begin{proposition}[Gating error bound via concentration]
\label{prop:gate-concentration}
For any $\delta>0$,
\begin{equation}
\Pr\big(\widehat{\mu}-\mu\le -\delta\big)\le \exp\!\Big(-\frac{K\delta^2}{2\sigma^2}\Big), \qquad
\Pr\big(\widehat{\mu}-\mu\ge \delta\big)\le \exp\!\Big(-\frac{K\delta^2}{2\sigma^2}\Big).
\end{equation}
If the true regime satisfies $\mu\ge \theta_{\text{high}}+\delta$ (familiar) or $\mu\le \theta_{\text{low}}-\delta$ (unfamiliar), the probability of mis-selection due to mean estimation is bounded by $\exp(-K\delta^2/(2\sigma^2))$.
\end{proposition}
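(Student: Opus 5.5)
The plan is to prove the two tail bounds with a standard Chernoff argument. We then transfer them to the gating rule \eqref{eq:gate} by a simple event inclusion.

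\textbf{Step 1 (moment generating function of the mean).} Interpret the assumption as: each centered score $s_i-\mu$ is sub-Gaussian with variance proxy $\sigma^2$, i.e.\ $\mathbb{E}\exp(\gamma(s_i-\mu))\le\exp(\gamma^2\sigma^2/2)$ for all $\gamma\in\mathbb{R}$. By independence, for $\widehat{\mu}-\mu=\frac1K\sum_i(s_i-\mu)$ we get
\begin{equation}
\mathbb{E}\exp\big(\gamma(\widehat{\mu}-\mu)\big)=\prod_{i=1}^K\mathbb{E}\exp\big(\tfrac{\gamma}{K}(s_i-\mu)\big)\le\exp\Big(\frac{\gamma^2\sigma^2}{2K}\Big).
\end{equation}
So $\widehat{\mu}-\mu$ is sub-Gaussian with proxy $\sigma^2/K$.

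\textbf{Step 2 (Chernoff).} For $\gamma>0$, Markov's inequality gives $\Pr(\widehat{\mu}-\mu\ge\delta)\le\exp(-\gamma\delta+\gamma^2\sigma^2/(2K))$. Optimizing at $\gamma=K\delta/\sigma^2$ yields $\exp(-K\delta^2/(2\sigma^2))$. The lower tail follows by applying the same argument to $-(s_i-\mu)$, which has the same sub-Gaussian bound.

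\textbf{Step 3 (mis-selection).} Suppose $\mu\ge\theta_{\text{high}}+\delta$. The gate can only route to Recollection when $\widehat{\mu}<\theta_{\text{high}}$. That event is contained in $\{\widehat{\mu}-\mu<-\delta\}$, whose probability is at most $\exp(-K\delta^2/(2\sigma^2))$ by Step 2. Symmetrically, suppose $\mu\le\theta_{\text{low}}-\delta$. Familiarity can be chosen only if $\widehat{\mu}>\theta_{\text{low}}$. That event is contained in $\{\widehat{\mu}-\mu>\delta\}$ and is bounded the same way.

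\textbf{Main obstacle.} The main obstacle is the modeling caveat, not the algebra. The top-$K$ probe scores are order statistics, so they are neither independent nor centered at a common mean. I would therefore state explicitly that the bound holds under the stated independence assumption, treating $\mu$ as the proxy mean of the probe scores, and that it only concerns error from estimating the mean. Under the policy \eqref{eq:gate}, the entropy branch is reached only when $\widehat{\mu}$ lies strictly between the two thresholds. So mis-selection in these regimes requires $\widehat{\mu}$ to cross the relevant threshold, and the event inclusion in Step 3 remains valid.
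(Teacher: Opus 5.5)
Your proof is correct and is the standard argument the paper relies on implicitly: the paper states Proposition~\ref{prop:gate-concentration} with no written proof, and your Chernoff bound for $\widehat{\mu}$ (sub-Gaussian with proxy $\sigma^2/K$), combined with the threshold-crossing event inclusion, is exactly what supplies it. Restricting to the gate in \eqref{eq:gate} is the right call, because under the OR-form rule of Algorithm~\ref{alg:rfmem-main} ($H\le\tau$ routes to Familiarity even when $\bar{s}\le\theta_{\text{low}}$) the unfamiliar-regime half would not follow from concentration of $\widehat{\mu}$ alone.
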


\noindent Proposition~\ref{prop:gate-concentration} shows that increasing $K$ tightens the reliability of the mean-based selection of the familiarity and the recollection paths.

\subsection{Complexity–Coverage Trade-off}
Let the cost of a similarity evaluation be $c_{\text{sim}}$ and of a $B$-way $k$-means update be $c_{\text{clust}}(B)$ on a batch of size $N_r$.

\begin{proposition}[Complexity bounds]
\label{prop:complexity}
\text{Familiarity} has time $T_{\text{Fam}}=O(K\,c_{\text{sim}})$.
\text{Recollection} with parameters $(B,F,R)$ has time
\begin{equation}
T_{\text{Rec}}=O\Big(\sum_{r=0}^{R} \big(N_r\,c_{\text{sim}}+c_{\text{clust}}(B)\big)\Big)
=O\big(R\,(BF)\,c_{\text{sim}}+R\,c_{\text{clust}}(B)\big),
\end{equation}
since $N_r\le (B+r)F\le O(BF+RF)$.
Therefore $T_{\text{Rec}}=O\big((BF+RF)\,c_{\text{sim}}+R\,c_{\text{clust}}(B)\big)$, which is polynomial in $(B,F,R)$ and strictly lower than full-context processing $O(M)$ when $BF+RF\ll M$.
\end{proposition}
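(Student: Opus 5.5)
The plan is to bound the cost of each path by adding up primitive operations under the cost model fixed just before the statement: one similarity evaluation costs $c_{\text{sim}}$, and one $B$-way $k$-means pass on a batch of size $N_r$ costs $c_{\text{clust}}(B)$. As in Lemma~\ref{lem:mono}, I take the probe and top-$K$ selection to work over an index that supplies candidates. Under that convention, retrieval cost is charged per candidate scored, not per corpus item.

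The Familiarity path computes $K$ inner products $\langle \mathbf{x}_t,\mathbf{z}_i\rangle$ and then takes a top-$K$. Selecting from $K$ items costs $O(K)$, or $O(K\log K)$ with sorting; either is absorbed into $O(K\,c_{\text{sim}})$ because $c_{\text{sim}}=\Omega(1)$. This gives $T_{\text{Fam}}$.

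For the Recollection path I decompose the work by round $r=0,\dots,R$, following Algorithm~\ref{alg:rfmem-recol}. Each round does four things:
\begin{itemize}
\item it scores and retrieves $N_r$ candidates, costing $N_r c_{\text{sim}}$;
\item it runs one $k$-means on those $N_r$ points, costing $c_{\text{clust}}(B)$;
\item it forms $B$ centroids and $\alpha$-mixes them with the query, which is $O(N_r+B)$ vector operations and so is dominated by the first two terms;
\item it does beam scoring and deduplication against $\mathsf{Seen}$, also $O(N_r)$ and dominated.
\end{itemize}
Summing over rounds gives $T_{\text{Rec}}=O\big(\sum_{r=0}^{R}(N_r c_{\text{sim}}+c_{\text{clust}}(B))\big)$. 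I then substitute $N_r=(B+r)F\le (B+R)F$.

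There is one subtlety in the closed form. A direct sum gives $\sum_{r\le R} N_r = (R+1)BF + F\,R(R+1)/2$, which is $O(RBF+R^2F)$. The statement's displayed $O(R\,BF\,c_{\text{sim}})$ is therefore correct only when $R=O(B)$. The final "therefore" line, $O((BF+RF)c_{\text{sim}})$, agrees with this count only after the per-round bound is multiplied by $R$, or if the method's side constraint $N_r\le K$ is used. I will make this explicit by bounding $\sum_r N_r\le (R+1)\max_r N_r\le (R+1)K$. This gives a bound linear in $R$ with constant $K$. Since $B,F,R$ are fixed small constants in the implementation, the stated orders follow up to constants. If instead the beam of $B$ active queries is charged separately per round, the cost picks up a factor of $B$, and the result is still polynomial in $(B,F,R)$.

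Polynomiality follows immediately, since every term is a product of low-degree monomials in $B,F,R$ times $c_{\text{sim}}$ or $c_{\text{clust}}(B)$. Lloyd iterations on $N_r$ points with $B$ centers cost $O(N_r B d\cdot\text{iters})$, which is also polynomial. For the comparison with full context, I would note that Full Context must process all $M$ memories, which costs $\Omega(M)$. So $T_{\text{Rec}}=o(M)$ whenever $(BF+RF)\ll M$ and $c_{\text{clust}}(B)$ is independent of $M$.

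The main obstacle is making the accounting honest about how $R$ enters and how many beam queries run per round. I would settle it by stating the cost model explicitly, with the beam size constant and $N_r\le K$, so that the sum collapses to the claimed form.
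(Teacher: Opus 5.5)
Your per-round decomposition is the paper's own argument; its only justification is the inline clause ``since $N_r\le(B+r)F$''. Your diagnosis of that chain is also correct. We have $\sum_{r=0}^{R}N_r=(R+1)BF+F\,R(R+1)/2=\Theta\big(R(B+R)F\big)$ for $R\ge 1$. So the middle expression needs $R=O(B)$. The ``therefore'' line drops the factor $R$ from the similarity term while keeping it on the clustering term, which does not follow and is false once $R$ grows.

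The gap is in your repair. The side constraint $N_r\le K$ gives $\sum_r N_r\le (R+1)K$. But the same constraint forces $K\ge (B+R-1)F$, so $(R+1)K=\Omega\big(R(B+R)F\big)$. That is at least of the order of the honest count and cannot make the sum collapse to $O(BF+RF)$. Declaring $B,F,R$ to be fixed constants does make every expression $O\big(c_{\text{sim}}+c_{\text{clust}}(B)\big)$. But then ``polynomial in $(B,F,R)$'' and the comparison with $M$ carry no content. Neither route proves the final display.

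What your accounting actually establishes is $T_{\text{Rec}}=O\big(R(B+R)F\,c_{\text{sim}}+R\,c_{\text{clust}}(B)\big)$. Under Algorithm~\ref{alg:rfmem-recol} the extra factor $B$ is not optional. Every round after the first runs a Top-$N$ retrieval, a KMeans and the beam scoring for each of up to $B$ active queries. This gives $O\big(BR(B+R)F\,c_{\text{sim}}+BR\,c_{\text{clust}}(B)\big)$.

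This bound is still polynomial in $(B,F,R)$. However, it lies below full-context cost only when $BR(B+R)F\,c_{\text{sim}}+BR\,c_{\text{clust}}(B)\ll M$. Your condition $BF+RF\ll M$ reinstates exactly the error you flagged. A correct write-up proves this corrected bound. It should state explicitly that the proposition's last display holds only for bounded $R$, and also for bounded $B$ once the beam is charged, rather than try to force it.

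A minor point: $O(K\log K)$ is absorbed into $O(K\,c_{\text{sim}})$ only if $c_{\text{sim}}=\Omega(\log K)$, not merely $\Omega(1)$. Drop the sorting option and use linear-time selection.
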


\noindent This formalizes the method’s bounded overhead relative to full-context, consistent with empirical latency advantages reported in the paper.




\paragraph{Discussion.}
Lemmas~\ref{lem:mono} and \ref{lem:cert} justify using $(\bar{s},H)$ as reliable control signals.
Theorems~\ref{thm:gate-opt} show that the selection is optimal within a broad monotone class.
Propositions~\ref{prop:gate-concentration} and \ref{prop:complexity} bound mis-selection and computation.
Together these results explain the empirical accuracy–latency improvements of RF-Mem across corpora and tasks.

\section{LLM Usage Disclosure}

In accordance with ICLR 2026 policy, we disclose our use of large language models (LLMs) in preparing this manuscript. 
We employed an advanced LLM solely to aid in polishing the writing, specifically for improving clarity, grammar, and sentence structure across sections. 
All technical content, algorithmic contributions, experimental results, and scientific conclusions remain entirely the authors' own work, without any LLM involvement.

\section{Limitation and future work}
\textbf{Limitations.} While RF-Mem demonstrates consistent advantages across corpora and tasks, several aspects remain simplified. 
First, our current evaluation is confined to dialogue-style personalized memory and does not yet explore other modalities (e.g., cross-modal histories). 
Second, we adopt list entropy as a lightweight uncertainty proxy; although effective, it may not fully capture finer-grained task difficulty or user intent. 
\revise{Third, our uncertainty signal is based on similarity scores and entropy, which may not fully capture semantic ambiguity.}
Finally, our retrieval operates over a static embedding index, without modeling temporal updates of memory or potential conflicts across long-term sessions.
\revise{Moreover, because RF-Mem improves the surfacing of long-term user history, downstream deployments should include safeguards to prevent unintended resurfacing of sensitive information, a direction we identify as important future work for quantifying and mitigating such risks.}

\textbf{Future Work.} 
Several directions arise naturally from the above limitations. 
First, extending RF-Mem to multi-modal or cross-domain settings (e.g., multimodal dialogue) would allow testing its generality beyond personalized text-based memory. 
Second, uncertainty estimation could be enhanced by integrating richer signals—such as calibration measures or user feedback—to complement list entropy and better capture task difficulty. 
\revise{In addition, incorporating semantic or contextual uncertainty cues beyond similarity distributions may further improve the reliability of the strategy selection mechanism.}
Third, incorporating temporal dynamics into the memory index (e.g., recency weighting, conflict resolution, or session-aware updates) may further improve long-term personalization. 
Finally, exploring tighter integration between retrieval and generation—for example, jointly optimizing the entropy threshold with the LLM’s decoding behavior—could yield a more unified and adaptive personalized reasoning framework.

\end{document}